\newcommand{\vmark}{\Checkmark\xspace}
\newcommand{\xmark}{\XSolidBrush\xspace}
\definecolor{light blue}{HTML}{ADD8E6}
\definecolor{darker blue}{HTML}{9BC2CF}
\definecolor{ciano}{HTML}{00FFFF}
\definecolor{arancio}{HTML}{ffa31a}
\definecolor{dark green}{HTML}{009900}
\definecolor{bblue}{RGB}{0,50,200}
\definecolor{colorB1}{HTML}{1E88E5}
\definecolor{colorB2}{HTML}{D81B60}
\definecolor{cIBM1}{HTML}{648fff}
\definecolor{cIBM2}{HTML}{785ef0}
\definecolor{cIBM3}{HTML}{dc267f}
\newcommand{\para}[1]{\noindent\textbf{#1}}
\newtheorem{problem}{Problem}
\newcommand{\edit}[1]{\textcolor{black}{{#1}}\xspace}
\newcommand{\bigO}{\ensuremath{\mathcal{O}}\xspace}
\newcommand{\triSet}{\ensuremath{\Delta}\xspace}
\newcommand{\nodeDeg}[1]{\ensuremath{d_{#1}}\xspace}
\newcommand{\triSetNode}[1]{\ensuremath{\triSet_{#1}}\xspace}
\newcommand{\triSetEdge}[1]{\ensuremath{\triSet_{#1}}\xspace}
\newcommand{\wedg}{\ensuremath{\mathsf{w}}\xspace}
\newcommand{\wedgSet}{\ensuremath{\mathsf{W}}\xspace}
\newcommand{\wedgNodeSetCenter}[1]{\ensuremath{\wedgSet^{\mathsf{c}}_{#1}}\xspace}
\newcommand{\wedgNodeSetHead}[1]{\ensuremath{\wedgSet^{\mathsf{h}}_{#1}}\xspace}
\newcommand{\wedgNodeSetGeneral}[1]{\ensuremath{\wedgSet^{\ast}_{#1}}\xspace}
\newcommand{\arb}{\ensuremath{\xi}\xspace}
\newcommand{\singleTri}{\ensuremath{\delta}\xspace}
\newcommand{\clustCoeff}{\ensuremath{\alpha}\xspace}
\newcommand{\closureCoeff}{\ensuremath{\phi}\xspace}
\newcommand{\clustMetric}{\ensuremath{\psi}\xspace}
\newcommand{\clustGenPartition}{\ensuremath{\Psi}\xspace}
\newcommand{\estFunc}{\ensuremath{f}\xspace}
\newcommand{\varEst}{\ensuremath{\widehat{V}}\xspace}
\newcommand{\neighborNode}[1]{\ensuremath{\mathcal{N}_{#1}}\xspace}
\newcommand{\neighborEdge}[1]{\ensuremath{\mathcal{N}_{#1}}\xspace}
\newcommand{\G}{\ensuremath{G}\xspace}
\newcommand{\V}{\ensuremath{V}\xspace}
\newcommand{\E}{\ensuremath{E}\xspace}
\newcommand{\DomX}{\ensuremath{\mathcal{X}}\xspace}
\newcommand{\DomH}{\ensuremath{\mathcal{H}}\xspace}
\newcommand{\RangeSet}{\ensuremath{\mathcal{Q}}\xspace}
\newcommand{\Fset}{\ensuremath{\mathcal{F}}\xspace}
\newcommand{\sampleSet}{\ensuremath{\mathcal{S}}\xspace}
\newcommand{\PD}{\ensuremath{\mathsf{PD}}\xspace}
\newcommand{\VC}{\ensuremath{\mathsf{VC}}\xspace}
\newcommand{\pdim}{\ensuremath{\zeta}\xspace}
\newcommand{\bucketColor}{\ensuremath{\chi}\xspace}
\newcommand{\bucketColorSup}{\ensuremath{\widehat{\chi}}\xspace}
\newcommand{\nodesetPartition}{\ensuremath{\mathcal{V}}\xspace}
\newif\ifextended
\DeclareMathOperator{\expectation}{\mathbb{E}}
\DeclareMathOperator{\Prob}{\mathbb{P}}
\DeclareMathOperator{\Var}{Var}
\DeclarePairedDelimiter\ceil{\lceil}{\rceil}
\DeclarePairedDelimiter\floor{\lfloor}{\rfloor}
\newcommand{\mainAlg}{T{\large\textsc{riad}}\xspace}
\newcommand{\algFixedSS}{T{\large\textsc{riad-f}}\xspace}
\newcommand{\baseFull}{\texttt{WedgeSampler}\xspace}
\newcommand{\CODEURL}{\url{https://github.com/iliesarpe/Triad}\xspace}
\newcommand\vldbdoi{10.14778/3742728.3742748}
\newcommand\vldbpages{2561 - 2574}
\newcommand\vldbvolume{18}
\newcommand\vldbissue{8}
\newcommand\vldbyear{2025}
\newcommand\vldbauthors{\authors}
\newcommand\vldbtitle{\shorttitle} 
\newcommand\vldbavailabilityurl{https://github.com/iliesarpe/Triad}
\newcommand\vldbpagestyle{empty} 
\begin{document}
\title{Efficient and Adaptive Estimation of Local Triadic Coefficients}

\author{Ilie Sarpe}
\affiliation{%
  \institution{KTH Royal Institute of Technology}
  \city{Stockholm}
  \state{Sweden}
}
\email{ilsarpe@kth.se}

\author{Aristides Gionis}
\affiliation{%
  \institution{KTH Royal Institute of Technology}
  \city{Stockholm}
  \country{Sweden}
}
\email{argioni@kth.se}

\begin{abstract}
Characterizing graph properties is fundamental to the analysis and to our understanding of real-world networked systems. 
The \emph{local clustering coefficient}, and the more recently introduced, \emph{local closure coefficient},
capture powerful properties that 
are essential in a large number of applications, 
ranging from graph embeddings to graph partitioning. 
Such coefficients 
capture
the local density of the neighborhood of each node,
considering incident triadic structures and paths of length two.
For this reason, we refer to these coefficients collectively as \emph{local triadic coefficients}.

In this work, we consider the novel problem of computing efficiently the 
\emph{average} of local triadic coefficients, 
over a given \emph{partition} of the nodes of the input graph into a set of disjoint \emph{buckets}.
The \emph{average local triadic coefficients} of the nodes in each bucket
provide a better insight into the interplay 
of graph structure
and the properties of the nodes associated to each bucket.
Unfortunately, exact computation, which requires listing all triangles in a graph, 
is infeasible for large networks.
Hence, we focus on obtaining \emph{highly-accurate probabilistic estimates}.

We develop~\mainAlg, an adaptive algorithm based on sampling, 
which can be used to estimate the average local triadic coefficients 
for a partition of the nodes into buckets. 
\mainAlg is based on a new class of unbiased estimators, and
non-trivial bounds on its sample complexity,
enabling the efficient computation of highly accurate estimates.
Finally, we show how~\mainAlg can be efficiently used in practice on large networks, and 
we present a case study 
showing that average local triadic coefficients can capture high-order patterns over collaboration networks.

\end{abstract}

\maketitle

\pagestyle{\vldbpagestyle}
\begingroup\small\noindent\raggedright\textbf{PVLDB Reference Format:}\\
\vldbauthors. \vldbtitle. PVLDB, \vldbvolume(\vldbissue): \vldbpages, \vldbyear.\\
\href{https://doi.org/\vldbdoi}{doi:\vldbdoi}
\endgroup
\begingroup
\renewcommand\thefootnote{}\footnote{\noindent
This work is licensed under the Creative Commons BY-NC-ND 4.0 International License. Visit \url{https://creativecommons.org/licenses/by-nc-nd/4.0/} to view a copy of this license. For any use beyond those covered by this license, obtain permission by emailing \href{mailto:info@vldb.org}{info@vldb.org}. Copyright is held by the owner/author(s). Publication rights licensed to the VLDB Endowment. \\
\raggedright Proceedings of the VLDB Endowment, Vol. \vldbvolume, No. \vldbissue\ %
ISSN 2150-8097. \\
\href{https://doi.org/\vldbdoi}{doi:\vldbdoi} \\
}\addtocounter{footnote}{-1}\endgroup

\ifdefempty{\vldbavailabilityurl}{}{
\vspace{.3cm}
\begingroup\small\noindent\raggedright\textbf{PVLDB Artifact Availability:}\\
The source code, data, and/or other artifacts have been made available at \url{\vldbavailabilityurl}.
\endgroup
}

\section{Introduction}
Graphs are a ubiquitous data abstraction
used to study complex systems in different domains, 
such as social networks~\citep{Fang2019SurveySN}, 
protein interactions~\citep{Bhowmick2016ClusteringProteins}, 
information networks~\citep{sun2013mining}, and more~\cite{Newman2018Networks}. 
A graph provides a simple representation:  
entities are represented by nodes and their relations are represented by edges, 
enabling the analysis of structural properties and giving unique insights 
into the function of various systems~\citep{Xu2001Topology}. 
For example, flow analysis in transport graphs can be used for better urban design~\cite{Jiang2008Traffic},
subgraph patterns can improve recommenders~\citep{Leskovec2006PatternsPurchase};  
and dense subgraph identification captures highly collaborative communities~\cite{Li2015Community}.

The \emph{local clustering coefficient}~\citep{Watts1998} 
is among the most important structural properties for graph analysis, 
and is used in many applications related to 
databases~\citep{Ciglan2012DB}, 
social networks~\citep{Hardiman2013EstimatingCCRW}, 
graph embeddings~\citep{Bianchi2019SpectralGNN}, and 
link prediction~\citep{Wu2016LinkPred}.
The local clustering coefficient measures the fraction of connected pairs of neighbors 
of a given node $u$, 
e.g., see Figure~\ref{fig:inner:localCCvsCL} (left), 
providing a simple and interpretable value on how well the node is ``embedded'' within its local neighborhood. 
On an academic collaboration network, for example, the local clustering coefficient of an author $u$
corresponds to the fraction of the co\-authors of $u$ collaborating with each other, 
capturing a salient co\-authorship pattern of author~$u$.

\begin{figure}
	\centering
	\subfloat[]{
		\includegraphics[width=0.45\columnwidth]{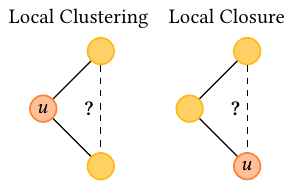}
		\label{fig:inner:localCCvsCL}
	}%
	\subfloat[]{
		\includegraphics[width=0.5\columnwidth]{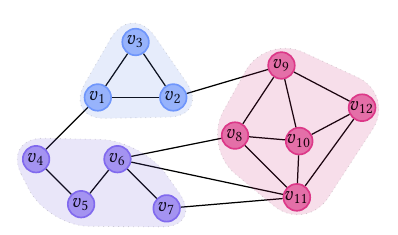} 
		\label{fig:inner:graphPart}
	}
	\caption{\ref{fig:inner:localCCvsCL}: the local clustering coefficient of a node $u$ considers the fraction of connected pairs of neighbors of $u$, while the local closure coefficient of $u$ considers the fraction of neighbors of $u$'s neighbors to which $u$ is connected. \ref{fig:inner:graphPart}: a graph with its set of nodes partitioned into three different~sets. }
	\label{fig:fullgraphAndClustvsCoslure}
\end{figure}

Recently, \citet{Yin2019Closure} introduced the \emph{local closure coefficient}, 
a new coefficient capturing the fraction of paths of length two, originating from a node $u$ that is closed by $u$, 
e.g., see Figure~\ref{fig:inner:localCCvsCL} (right). 
This novel definition directly accounts for the connections generated by $u$ in the graph, 
differently from the local clustering coefficient that only depends on connections in $u$'s neighborhood.  
The local closure coefficient is a simple concept that is gaining interest in the research community, 
as it provides additional and complementary 
insights to existing coefficients
and has applications in anomaly detection~\cite{Zhang2023Anom} and 
link prediction~\cite{Yin2019Closure}.
In this paper, we refer collectively to the local clustering coefficient and the local closure coefficient
as \emph{local triadic coefficients}.
Both local triadic coefficients are fundamental quantities 
for graph analysis,
as they capture structural properties of graphs on a local level~\cite{Yin2019Closure,Yin2018HCC,Yin2017LocalHG}.

In several applications, 
we are interested in the \emph{average} local triadic coefficient
of a subset of nodes~\cite{Li2017ClustLarge,Kaiser2008MeanCC,Ugander2011FB}. 
The most typical example is to consider the average over \emph{all} nodes in the graph, 
e.g., the \emph{average local clustering coefficient}:
a standard graph statistic available in popular graph libraries \cite{Leskovec2016SNAP,Rossi2015NetworkRep}.
In general, 
computing the {average} local triadic coefficients
of a \emph{subset} of nodes based on graph properties can yield unique insights. 
For example, such properties may be associated with node metadata (e.g., in typed networks~\cite{Shi2017Heter}), node similarities (e.g., from structural properties such core number and degrees, or role similarity~\cite{Jin2011Similarity}), or graph communities~\cite{Karypis1998metis}. 
As a concrete example, in an academic collaboration network, 
we can compute the {average} local triadic coefficient
of all authors who publish consistently in certain venues. For example, database conferences or machine learning---and average local coefficients could reveal interesting patterns for each community of interest, such as different trends in collaboration patterns.

The average local clustering coefficient
is also exploited as a measure for
community detection or clustering algorithms~\cite{Pan2019CommCC}, with good quality clusters achieving high average local clustering coefficients, e.g., compared to random partitions. 
Hence, analyzing the average local triadic coefficient for different buckets
can provide us with powerful insights for various applications, ranging from analyzing structural properties of specific groups of users with similar metadata, to community detection. Furthermore, average local triadic coefficients can also be employed to empower machine learning models, e.g., GNNs, for tasks such as graph classification or node embeddings~\cite{Bianchi2019SpectralGNN}.

Motivated by the previous settings requiring to compute the average
local triadic coefficients over (given) sets of nodes, 
in this paper, we study the following problem: 
given a partition of the nodes of a graph into $k$ sets, 
efficiently compute the average local triadic coefficient (clustering or closure)  
\emph{for each} set of the partition.

Unfortunately, to address this problem, we cannot rely on exact algorithms, 
since exact computation of the local triadic coefficients for all graph nodes
is an extremely challenging task
and requires exhaustive enumeration of {all} triangles in a graph. 
Despite extensive study of exact algorithms
for \emph{triangle counting}~\cite{Li2024FastLocalCnt,Bader2023FastCounting, Bader2024CoverTri}, 
enumeration requires time $\Theta(m^{3/2})$,
i.e., $\Theta(n^3)$ on dense graphs, 
which is extremely inefficient and 
resource-demanding for massive graphs.\footnote{We use $n$ and $m$ for the number of nodes and edges respectively.} 

To overcome this challenge, we develop an efficient \emph{adaptive approximation} algorithm: 
\mainAlg (average local {\large\textsc{tri}}adic {\large\textsc{ad}}aptive estimation), 
which can break the complexity barrier at the expense of a small approximation error.
Similar to other approximation algorithms for graph analysis,
\mainAlg relies on \emph{random sampling}~\cite{Bressan2019Motivo,Zhao2021Sketch,Wang2018Moss5}. 
Where, triangles incident to randomly sampled edges are used to update an estimate of the average triadic coefficient 
\emph{for each} set of the partition of the nodes of a graph, through a novel \emph{class} of unbiased estimators.
\mainAlg\ can approximate both the average local clustering and closure coefficients 
of arbitrary partitions of the graph nodes. 
Surprisingly, to the best of our knowledge, 
\mainAlg is also the first algorithm specifically designed to estimate 
average local closure~coefficients. 

Our design of~\mainAlg is guided by two key properties,  essential for many sampling schemes: 
($i$)~provide accurate estimates that are close to the unknown values being estimated; 
($ii$)~provide high-quality adaptive probabilistic guarantees on the distance between the estimates and the values being estimated. 
In fact, differently from existing approaches 
\mainAlg quantifies the deviation between the probabilistic estimates reported in output and the underlying unknown average coefficients through a data-dependent approach~\cite{Seshadhri2014,Zhang2017LocalCC,Lattanzi2016WeightedCC,Kutzkov2013StreamCC}. 
This approach results in an extremely efficient algorithm, 
whose performance is judiciously adapted to the input graph, 
since the estimation is based on empirical quantities computed over the collected samples.
Our contributions are as~follows.
\begin{itemize}
	\item We study the problem of efficiently obtaining high-quality estimates of the average local closure and average local clustering coefficients for each set in a partition of the nodes of a graph.
	\item We develop~\mainAlg, an efficient and adaptive algorithm that provides high-quality estimates with controlled error probability. \mainAlg is based on a novel class of estimators 
	that we optimize to achieve provably small variance and 
	a novel bound on the sample size obtained through the notion of pseudo-dimension. \mainAlg~also leverages state-of-the-art variance-aware concentration results to quantify the deviation of its estimates to the unknown estimated values, by the means of adaptive data-dependent bounds.
	\item We extensively assess the performance of \mainAlg on large graphs. Our results show that  \mainAlg provides high-quality probabilistic estimates and strong theoretical guarantees not matched by existing state-of-the-art algorithms. We also show how the estimates of~\mainAlg can be used to study publication patterns among research fields over time on a DBLP graph.
\end{itemize}

\section{Preliminaries}
\label{sec:prelims}

Let $\G=(\V,\E)$ be a simple and undirected graph with node-set $\V = \{v_1,\dots,v_n \}$ 
and edge-set $\E=\{\{u,v\}: u\in \V, v\in \V \text{ and } u\neq v\}$, where $|\V|=n$ and $|\E|=m$.

For a node $v \in V$ we denote its \emph{neighborhood} with 
$\neighborNode{v}=\{u \in \V : \text{it exists } \{u,v\}\in \E\}$ 
and its \emph{degree} with $\nodeDeg{v}= |\neighborNode{v}|$. 
Similarly, given an edge $e=\{u,v\}\in \E$ we define 
$\neighborEdge{e}=\{w \in \V : w \in (\neighborNode{u}\cap \neighborNode{v})\}$,
i.e., the neighborhood of an edge $e\in \E$ is the set of nodes from $\V$ 
that are neighbors to \emph{both} incident nodes of $e$.

A \emph{wedge} is a set of two distinct edges sharing a common node, i.e., $\wedg=\{e_1,e_2\}\subseteq E \text{ such that } |e_1\cap e_2|=1$;
a wedge also corresponds to a path of length two. 
We let $\wedgSet=\{\wedg : \wedg \text{ is a wedge in } \G\}$ 
be the set of all wedges in the graph $\G$. 
Given a node $v\in \V$ we say that a wedge $\wedg =\{e_1,e_2\}$ is \emph{centered} at $v$ if $\{v\}= e_1\cap e_2$.
The set of all such wedges is denoted with $\wedgNodeSetCenter{v}$.
Note that, for each node $v\in\V$, it holds $|\wedgNodeSetCenter{v}|={\nodeDeg{v}\choose 2}$. 
A wedge $\wedg =\{e_1,e_2\}$ is \emph{headed} at a node $v\in V$ 
if $v\in\wedg$ and $v\notin e_1\cap e_2$.\footnote{We write $v\in \wedg=\{e_1,e_2\}$ to denote that 
$v\in e_1$ or $v\in e_2$.} 
The set of wedges headed at $v\in\V$ is denoted with $\wedgNodeSetHead{v}$.
Note that $|\wedgNodeSetHead{v}| = \sum_{u\in \neighborNode{v}} (d_u -1)$.
\begin{example}
	In Figure \ref{fig:inner:graphPart},  
	$\{\{v_6,v_8\},\allowbreak\{v_8,v_{11}\}\}$ is a wedge centered at $v_8$ and $|\wedgNodeSetCenter{v_8}| = 6$, 
	since $\nodeDeg{v_8} = 4$. 
	Additionally, $\{\{v_2,v_9\},\{v_9,v_8\}\}$ is a wedge headed at $v_8$ and~$|\wedgNodeSetHead{v_8}|=13$.
\end{example}

Next, given a graph $\G=(\V,\E)$ we define a \emph{triangle} as a set of three edges 
that pairwise share an edge, i.e., $\singleTri = \{\{u,v\}, \{v,w\},$ $ \{w,u\}\}\subseteq \E$ 
and $u,v,w\in V$ are three distinct nodes. 
The set of all triangles in $\G$ is denoted by 
$\triSet=\{\singleTri : \singleTri \text{ is a triangle in } \G\}$, 
while $\triSetNode{v} = \{\singleTri \in \triSet : v \in \delta\}\subseteq\triSet$,\footnote{We write $v\in \singleTri$ to denote that there exists an edge $e\in \singleTri$ such that $v\in e$.} corresponds to the set of triangles containing $v\in V$. 
Similarly $\triSetEdge{e}= \{\singleTri \in \triSet: e\in \singleTri\}$ corresponds to the set of all triangles containing an edge $e\in\E$.

We are now ready to introduce the fundamental  triadic coefficients studied in this paper
and introduced in earlier work~\cite{Watts1998,Yin2019Closure}.

\begin{definition}\label{def:coefficientsDef}
	Given a graph $\G=(\V,\E)$ and a node $v\in V $ we define the 
	\emph{local clustering coefficient} of $v$,  denoted by $\clustCoeff_v$,
	 and the \emph{local closure coefficient}  of $v$, denoted by $\closureCoeff_v$, respectively as
	\[
	\clustCoeff_v =  \frac{|\triSetNode{v}|}{|\wedgNodeSetCenter{v}|} = \frac{|\triSetNode{v}|}{{d_v \choose 2 }} 
	\quad 
	\text{ and } 
	\quad
	\closureCoeff_v = \frac{2|\triSetNode{v}|}{|\wedgNodeSetHead{v}|} =  \frac{2|\triSetNode{v}|}{{\sum_{u\in \neighborNode{v}} (\nodeDeg{v} -1 ) }} \enspace.
	\]
\end{definition}

\begin{figure}[t]
	\centering
	\subfloat[]{
		\includegraphics[width=0.26\columnwidth]
		{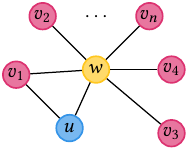}
		\label{fig:inner:exampleDifferent}
	}%
	\hspace{12mm}
	\subfloat[]{
		\includegraphics[width=0.4\columnwidth]{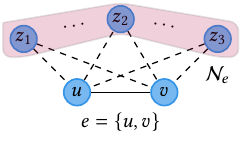}
		\label{fig:inner:exampleWeights}
	}%
	\caption{
	2A: Discussion in Section \ref{sec:prelims}.
	For node $u$ it holds $\clustCoeff_u =1$ and $\closureCoeff_u=\bigO(1/n)$. 
	For node $w$ it holds $\clustCoeff_w = \bigO(1/n^2)$ and $\closureCoeff_w = 1$.
	Thus, the local clustering and local closure coefficients can differ significantly. 
	2B:	Discussion in Section~\ref{sec:methods}.
	Consider a sampled edge $e\in\E$. 
	For each node $w\in \V$ in the graph, 
	our estimate for $|\triSetNode{w}|$ is 
	$|\triSetNode{w}|=q|\triSetEdge{e}|/p$, if $w\in \{u,v\}$, and 
	$|\triSetNode{w}|=(1-2q)|\triSetEdge{e}|/p$, if $w = z_i\in \neighborEdge{e}$.}
	\label{fig:trisAndMetrics}
\end{figure}

As an example, consider $v_9$ in Figure~\ref{fig:inner:graphPart}. 
Then $\clustCoeff_{v_9} = 2/(2\cdot  3) = 1/3$ 
and $\closureCoeff_{v_9} = (2\cdot  2)/10 = 2/5$.

Observe that the values of $\clustCoeff_v$ and $\closureCoeff_v$ for a node $v\in \V$ 
can differ significantly, 
as illustrated in the example of Figure~\ref{fig:inner:exampleDifferent}.

Next, given a subset of nodes $\V'\subseteq \V$ we define the \emph{average} local clustering coefficient (respectively, average local closure coefficient) 
as the average of the local clustering (respectively, local closure) coefficient of the nodes in the subset $\V'$, 
that is $\clustCoeff (\V')  = \tfrac{1}{|\V'|}\sum_{v\in \V'} \clustCoeff_v$ 
(respectively, $\closureCoeff (\V')  = \tfrac{1}{|\V'|}\sum_{v\in \V'} \closureCoeff_v$).

Given a set $A\neq \emptyset$, then $A_1,\dots, A_k$ is a partition of $A$ 
if $A_i\cap A_j =\emptyset $ for $i\neq j$ with $i,j\in [k]$, 
$A_i\neq \emptyset$ for all $i\ge 1$, and $\bigcup_{i\ge 1} A_i = A$. 
For ease of notation we denote a partition of $V$ into $k$ sets as $\nodesetPartition$.
In Figure~\ref{fig:inner:graphPart} we illustrate a network and a partition of the set of nodes.
For ease of notation we use $\clustMetric_v$ to refer to a local triadic coefficient of a node $v\in V$ (that is, either clustering coefficient or closure coefficient) and 
$\clustGenPartition(V') = \tfrac{1}{|V'|}\sum_{v\in V'} \clustMetric_v$. 

Following standard ideas in the literature for controlling the quality of an approximate estimate~\cite{Riondato2018Abra,Riondato2018MisoSoup,Kolda2014}, 
we consider two key properties: 
($i$)~the estimate should be close to the actual value; and 
($ii$)~there should exist rigorous guarantees quantifying the distance of the estimate to the unknown coefficient. 
These desirable properties are captured by the following problem formulation.

\begin{problem}\label{prob:partionEstimationAdditive}
	Given a graph $\G=(\V,\E)$, 
	a partition $\nodesetPartition$ of the node-set $\V$, 
	a local triadic coefficient $\clustMetric \in \{\clustCoeff, \closureCoeff\}$, and 
	parameters $(\varepsilon_j)^{k}_{j=1}\in (0,1)$ and $\eta \in (0,1)$, 
	obtain:
	\begin{enumerate}
		\item[(A)] Estimates $\estFunc(V_j)$, for $j=1,\dots, k$, such that
		\[
		\Prob\left[\sup_{j=1,\dots,k} \left| \estFunc(\V_j) - \frac{1}{|V_j|} \sum_{v\in V_j} \clustMetric_v \right| \ge \varepsilon_j \right] \le \eta \enspace.
		\]
		\item[(B)] \emph{Tight} confidence intervals $C_j$ as possible,  
		where $C_j = [f(V_j)-\widehat{\varepsilon}_j, f(V_j)+\widehat{\varepsilon}_j]$, 
		for $j=1,\dots, k$,	with 
		$|C_j| = 2\widehat{\varepsilon}_j \le 2\varepsilon_j$, 
		such that over all $k$ partitions it holds
		\[
		\Prob\left[\frac{1}{|V_j|} \sum_{v\in V_j} \clustMetric_v  \in C_j \right] \ge 1-\eta \enspace.
		\]
	\end{enumerate}
\end{problem}

To simplify our notation and when it is clear from the context we write 
$\estFunc_j$ instead of $\estFunc(V_j)$ and $\clustGenPartition_j$ instead of $\clustGenPartition(V_j)$, 
for $j\in [k]$. 

In the remaining of this section we discuss the formulation of  Problem~\ref{prob:partionEstimationAdditive}. In particular, Problem~\ref{prob:partionEstimationAdditive} takes as input a graph $G$ 
and a node partition $\nodesetPartition$.
The goal is to obtain \emph{accurate} estimates $\estFunc_j$, 
within at most $\varepsilon_j$ \emph{additive error} (i.e., $|\estFunc_j -\clustGenPartition_j|\le \varepsilon_j$) 
for the local triadic coefficients $\clustGenPartition_j$ 
with controlled error probability ($\eta$) \emph{over all sets} $j\in[k]$ of the partition~$\nodesetPartition$. 
This requirement is enforced with condition (A) in Problem~\ref{prob:partionEstimationAdditive}.

Furthermore, condition (B) ensures that the confidence intervals~$C_j$ 
(i.e., the ranges in which the values $\clustGenPartition_j$ are likely to fall)
centered around $\estFunc_j$ are small. 
This requirement provides a rigorous guarantee on the proximity of $\estFunc_j$'s 
to the corresponding~$\clustGenPartition_j$'s.  

Note that Problem~\ref{prob:partionEstimationAdditive} 
allows the user to provide non-uniform error bounds $\varepsilon_j$, $j=1,\dots,k$. 
This flexibility is highly desirable since
($i$)~there may be partitions for which estimates are required with different levels of precision 
(as controlled by $\varepsilon_j$), 
e.g., in certain applications 
the user may require higher precision on the value of $\clustGenPartition_j$ for some $j$'s, and 
($ii$)~different values of $\varepsilon_j$ may be required to distinguish between 
the values of $\clustGenPartition_j$ for different sets in $\nodesetPartition$~\cite{Borassi2019Kadabra,Pellegrina2023Silvan}. 

To clarify point ($ii$) above,  
consider $V=V_1\cup V_2$ with $\clustGenPartition_1 = 10^{-2}$ and $\clustGenPartition_2 = 10^{-5}$.
If $\varepsilon= \varepsilon_1=\varepsilon_2  = 5\cdot  10^{-2}$ then both $f_1,f_2\in[0,\varepsilon]$ 
satisfy the guarantees of Problem~\ref{prob:partionEstimationAdditive} 
but this does not allow to distinguish between the very large value difference 
of $\clustGenPartition_1$ and $\clustGenPartition_2$ 
(i.e., three orders of magnitude).
On the other hand, by allowing different accuracy levels, 
e.g., $\varepsilon_1 = 10^{-3}$ and $\varepsilon_2 = 10^{-4}$, 
we can address the issue.
The acute reader may notice in this example we assume that we know the values of
$\clustGenPartition_1$ and $\clustGenPartition_2$ in advance,  
and we set the values $\varepsilon_j$'s accordingly, 
while in practice this is rarely the case. 
However, as we show in Section~\ref{sec:methods}, 
our algorithm~\mainAlg can adaptive\-ly address the case of unknown $\clustGenPartition_j$'s.

An alternative approach would be to require \emph{relative error guarantees}, 
i.e., $|\estFunc_j -\clustGenPartition_j|\le \varepsilon_j\clustGenPartition_j$, for all $j\in[k]$.
Unfortunately this problem cannot be solved efficiently.
First, it requires a lower bound on each value $\clustGenPartition_j$, and 
second it requires an impractical number of samples 
even for moderate values of $\clustGenPartition_j$ and $\varepsilon_j$.
In particular, 
state-of-the-art methods~\cite{Seshadhri2014,Lattanzi2016WeightedCC,Zhang2017LocalCC,Lima2022ClusteringLatin}
have shown that $\Omega(1/(\varepsilon\clustGenPartition_j)^2)$ samples may be needed. 
Thus, if, say, $\clustGenPartition = 10^{-3}$ and $\varepsilon=10^{-2}$, 
then $\Omega(10^{10})$ samples would be needed, 
resulting in an extremely high and impractical running time.

\section{Methods}
\label{sec:methods}

Our algorithm~\mainAlg consists of several components. 
At its core, it is based on a new class of unbiased estimators, 
which can be of independent interest for local triangle count estimation. 
We discuss these estimators in Section~\ref{subsec:localEst}. 
We then introduce~\mainAlg, 
in Section~\ref{subsec:MainAlg}. 
We
present~\mainAlg's analysis in Section~\ref{subsec:algAnalysis}, and
some practical optimizations in Section~\ref{subsec:algPractOpt}. 
We analyze the time and memory complexity of \mainAlg in~\Cref{subsec:timeandmem}.
Finally, in~\Cref{subsec:adaptivity} we discuss the adaptive behavior of~\mainAlg.
\ifextended
All missing proofs are reported in Appendix~\ref{app:missingProofs}.
\else
Missing proofs are reported in our extended version~\cite{triadExt}.
\fi

\subsection{New estimates for local counts}\label{subsec:localEst}

We introduce a new \emph{class} of estimators to approximate, for each node $v\in \V$, 
the number of triangle counts $|\triSetNode{v}|$ (i.e., locally to~$v$), 
based on a simple sampling procedure that uniformly selects random edges. 
Such estimators stand at the core of the proposed method~\mainAlg, 
enabling a small variance of the estimates in output.

To compute the estimators for $|\triSetNode{v}|$, for $v\in\V$, 
first we sample uniformly an edge $e\in \E$,
and then collect the set of triangles $\triSetEdge{e}$. 
The total number of triangles $|\triSetEdge{e}|$ is then used to estimate $|\triSetNode{v}|$, 
\emph{for each} node $v\in\V$. 
The key, is in \emph{how} the value $|\triSetEdge{e}|$ 
is distributed over all nodes $v \in e \cup \neighborEdge{e}$, 
to obtain an unbiased estimators of $|\triSetNode{v}|$. 

In particular, our estimators distribute the weight $|\triSetEdge{e}|$ 
\emph{asymmetrically} across the nodes $v\in e$ (where $e\in \E$ is the sampled edge) 
and the nodes $v\in \neighborEdge{e}$. 
This asymmetric 
assignment
can be made before evaluating the estimates of $|\triSetNode{v}|$, 
enabling us to obtain estimates 
with extremely small variance, 
as we show in Section~\ref{exps:params:qval}, and discuss theoretically in Section~\ref{subsec:varianceOpt}. 
\begin{example}
	Consider Figure~\ref{fig:inner:exampleWeights}, fix $q\in [0, \sfrac{1}{2}]$, and 
	suppose that $e=\{u,v\}$ is sampled.
	Then the estimators of $|\triSetNode{z_i}|$ assign value $(1-2q)/p$
	for all nodes 
	$z_i \in \neighborEdge{e}$, and 
	value $q|\triSetEdge{e}|/p$ for nodes $u,v\in e$, 
	where $p$ is the sampling probability of edge~$e$.\footnote{In our analysis we consider $p=1/m$, but any importance sampling probability distribution $p_e$ over $e\in \E$ can be used, provided that $p_e>0$ if $|\Delta_e|>0$.}
\end{example}

We will now show that the proposed estimates are unbiased 
with respect to~$|\triSetNode{v}|$, for all $v\in \V$, 
and \emph{any} value $q\in [0,\sfrac{1}{2}]$,
\begin{lemma}\label{lemma:unbiasednessSingleNode}
	For any value of the parameter $q\in[0,\sfrac{1}{2}]$,
	\begin{equation}\label{eq:EstimatorQ}
	X_q(v) = \sum_{e \in E : v\in e} \frac{q |\Delta_e|X_e}{p} + (1-2q)\sum_{e \in \E  : v\in \neighborEdge{e}}\frac{X_e}{p}
	\end{equation}
	is an unbiased estimator of $|\triSetNode{v}|$ for $v\in V$. That is, $\expectation[X_q(v)] = |\triSetNode{v}|$, where the expectation is taken over a randomly sampled edge $e\in \E$, and $X_e$ is a 0--1 
	random variable indicating if $e\in \E $ is selected.
\end{lemma}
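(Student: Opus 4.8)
The plan is to compute $\expectation[X_q(v)]$ directly by linearity of expectation and a double-counting argument. Since the single sampled edge $e$ is drawn with probability $p$ (here $p = 1/m$), we have $\expectation[X_e] = p$ for each $e \in \E$, and $X_e$ appears inside the sums with coefficients that are deterministic given $e$ (namely $q|\Delta_e|/p$ when $v \in e$, and $(1-2q)/p$ when $v \in \neighborEdge{e}$). Hence
\[
\expectation[X_q(v)] = \sum_{e \in \E : v \in e} \frac{q|\Delta_e|}{p}\cdot p \;+\; (1-2q)\sum_{e \in \E : v \in \neighborEdge{e}} \frac{1}{p}\cdot p \;=\; q\sum_{e \in \E : v \in e} |\Delta_e| \;+\; (1-2q)\,|\{e \in \E : v \in \neighborEdge{e}\}|.
\]

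The crux is then to show that each of the two remaining sums equals $2|\triSetNode{v}|$, so that the total is $\bigl(2q + 2(1-2q)\bigr)|\triSetNode{v}|$; but wait, that gives $(2 - 2q)|\triSetNode{v}|$, not $|\triSetNode{v}|$, so I would instead need to check the precise normalization — more carefully, I expect one sum to contribute $2q$ per triangle and the other to contribute $(1-2q)$ per triangle, or some split that sums to $1$. Concretely: for a fixed node $v$, I would argue that $\sum_{e \in \E : v \in e} |\Delta_e|$ double-counts, over edges $e$ incident to $v$, the triangles containing $e$; each triangle $\delta \in \triSetNode{v}$ has exactly two of its three edges incident to $v$, so it is counted exactly twice, giving $\sum_{e \ni v}|\Delta_e| = 2|\triSetNode{v}| + (\text{contribution of triangles not containing } v)$ — and here the subtlety is that an edge $e$ incident to $v$ may lie in triangles \emph{not} containing $v$, so this identity as stated is false. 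The right statement must instead involve restricting to the relevant triangles, or the estimator's $|\Delta_e|$ should be read together with how $\neighborEdge{e}$ overlaps; I would revisit the definitions to pin down that $X_q(v)$ only receives mass from sampled edges $e$ such that $v \in e \cup \neighborEdge{e}$, and that for such $e$ the triangle being "credited" is always one through $v$.

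The cleanest route, which I would ultimately take, is to swap the order of summation and organize the computation \emph{by triangle}. Fix $v$ and a triangle $\delta = \{a,b,c\}$. I would determine, for each sampled-edge choice $e \in \delta$, whether $\delta$ contributes to $X_q(v)$ and with what coefficient: if $v \in \{a,b,c\}$, say $v = a$, then sampling edge $\{a,b\}$ or $\{a,c\}$ puts $v$ on the edge (coefficient $q/p$ times $X_e$, and $\delta$ is one of the $|\Delta_e|$ triangles), while sampling $\{b,c\}$ puts $v = a \in \neighborEdge{\{b,c\}}$ (coefficient $(1-2q)/p$). Taking expectations, triangle $\delta$'s total contribution to $\expectation[X_q(v)]$ is $p\cdot(q/p) + p\cdot(q/p) + p\cdot((1-2q)/p) = q + q + (1-2q) = 1$. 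If $v \notin \delta$, then for no choice of $e \in \delta$ is $v \in e$, and $v \in \neighborEdge{e}$ only if $v$ is adjacent to both endpoints of $e$, which would make $v$ part of a triangle with those endpoints — I would check this forces a structural contradiction or simply that such $\delta$ contributes $0$ because $v \notin e \cup \neighborEdge{e}$ for the edges of $\delta$ when $v \notin \delta$ (this needs the observation that $\neighborEdge{\{b,c\}}$ for an edge $\{b,c\} \in \delta$ with $\delta = \{a,b,c\}$ contains $a$, and contains $v \ne a$ only if $\{v,b,c\}$ is itself a triangle, i.e., a \emph{different} triangle through $v$, already accounted for). Summing the per-triangle contribution of $1$ over all $\delta \in \triSetNode{v}$ and $0$ over $\delta \notin \triSetNode{v}$ yields $\expectation[X_q(v)] = |\triSetNode{v}|$, as claimed, for every $q \in [0,\sfrac12]$.

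The main obstacle, then, is purely bookkeeping: correctly accounting for the fact that a sampled edge $e$ can belong to triangles through $v$ in two different "roles" ($v$ on $e$ versus $v$ in $\neighborEdge{e}$), and verifying that triangles \emph{not} through $v$ never leak any mass into $X_q(v)$. Once the per-triangle contributions are shown to be exactly $q + q + (1-2q) = 1$, independent of $q$, the lemma follows immediately by linearity of expectation; I would present the argument in the triangle-indexed form to make the cancellation of $q$ transparent, rather than wrestling with the edge-indexed sums where the identities are more delicate.
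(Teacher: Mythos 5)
Your final, triangle-indexed argument is correct and proves the lemma: after linearity of expectation, you attribute to each triangle $\singleTri\in\triSetNode{v}$ the expected mass $q+q+(1-2q)=1$ (the two edges of $\singleTri$ incident to $v$ each contribute $q$ through the first sum, and the opposite edge contributes $1-2q$ through the second, since $v\in\neighborEdge{e}$ exactly when $e$ together with $v$ spans a triangle), while triangles not containing $v$ contribute nothing; summing gives $\expectation[X_q(v)]=|\triSetNode{v}|$. The paper performs the same double counting but organized by edges rather than by triangles: it uses $\expectation[X_e]=p$ and the two identities $\sum_{e\in\E:\,v\in e}|\triSetEdge{e}|=2|\triSetNode{v}|$ and $|\{e\in\E: v\in\neighborEdge{e}\}|=|\triSetNode{v}|$, so the expectation becomes $2q|\triSetNode{v}|+(1-2q)|\triSetNode{v}|=|\triSetNode{v}|$; your per-triangle version makes the cancellation of $q$ more transparent, whereas the edge-indexed version is shorter once the two identities are stated. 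One correction to an aside in your write-up: your worry that ``an edge $e$ incident to $v$ may lie in triangles not containing $v$'' is unfounded --- a triangle is a set of three edges, so if $e\in\singleTri$ and $v\in e$ then $v\in\singleTri$; hence the identity $\sum_{e\ni v}|\triSetEdge{e}|=2|\triSetNode{v}|$ is true (each triangle through $v$ has exactly two edges incident to $v$), and the direct edge-indexed route you abandoned does go through, exactly as in the paper. Your other hesitation (getting $(2-2q)|\triSetNode{v}|$) came from assuming both sums equal $2|\triSetNode{v}|$; the second sum counts each triangle through $v$ once, not twice, which restores the correct total $2q+(1-2q)=1$.
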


Note that the estimator in Lemma~\ref{lemma:unbiasednessSingleNode} 
allows to flexibly select the parameter $q$, to 
minimize the variance of the estimates $X_q(v)$, for $v\in V$, 
leading to very accurate estimates with small variance when~$q$ is selected properly 
(see Section~\ref{subsec:varianceOpt}). 
We next use the result of Lemma~\ref{lemma:unbiasednessSingleNode} 
to obtain estimates $\estFunc_i(e): \E\mapsto\mathbb{R}^{+}_0$ of $\clustGenPartition_i$ 
for each set $V_i$, with $i\in [k]$, of the partition $\nodesetPartition$ by sampling a random edge $e\in \E$. 
First, to unify our notation, given a node $v\in \V$ let 
$|\wedgNodeSetGeneral{v}| = |\wedgNodeSetCenter{v}|$ if $\ast = \mathsf{c}$,  and 
$|\wedgNodeSetGeneral{v}| = |\wedgNodeSetHead{v}|/2$ if $\ast = \mathsf{h}$. 
We can then write $\clustMetric_v = |\triSetNode{v}|/|\wedgNodeSetGeneral{v}|$, 
i.e., $\clustMetric_v = \clustCoeff_v$ corresponds to the local clustering coefficient if 
$\ast = \mathsf{c}$ and $\clustMetric_v = \closureCoeff_v$ otherwise. 
Using this notation, we have the following.

\begin{lemma}
\label{lemma:unbiasedEdgePartition}
Let $e\in \E$ be an edge sampled uniformly from $\E$.
Then,  the random variable $\estFunc_j(e): \E\mapsto\mathbb{R}^{+}_0$ defined by
\[
\estFunc_j (e)= \sum_{e\in \E} \frac{X_e}{p} \frac{1}{|V_j|}\sum_{v\in V_j} \frac{a_q(v,e)}{|\wedgNodeSetGeneral{v}|}  \enspace, 
\]
where
\begin{equation}
\label{eq:SingleNodeWeightEdge}
	a_q(v,e) = q|\triSetEdge{e}|\mathbf{1}[v\in e] + (1-2q)\mathbf{1}[v\in \neighborEdge{e}]\,, 
\end{equation}
is an unbiased estimate of $\clustGenPartition_j$ 
for each set $V_j\in \nodesetPartition, j\in[k] $, i.e., $\expectation[f_j(e)] = \clustGenPartition_j$. 
Here, $X_e$ is a 0--1 random variable indicating if $e\in \E$ is sampled, and 
$|\wedgNodeSetGeneral{v}| = |\wedgNodeSetCenter{v}|$ for $\clustMetric=\clustCoeff$, and 
$|\wedgNodeSetGeneral{v}| = |\wedgNodeSetHead{v}|/2$ if $\clustMetric=\closureCoeff$.
\end{lemma}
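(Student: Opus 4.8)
The plan is to reduce the statement to Lemma~\ref{lemma:unbiasednessSingleNode} by an interchange of finite summations. To avoid the (purely notational) double use of the symbol $e$ in the displayed definition of $\estFunc_j$, I would first read $\estFunc_j$ as the random variable $\estFunc_j = \sum_{e\in\E}(X_e/p)\,g_j(e)$, where $g_j(e) = \tfrac{1}{|V_j|}\sum_{v\in V_j} a_q(v,e)/|\wedgNodeSetGeneral{v}|$ is a deterministic quantity and $X_e$ is the $0$--$1$ indicator that $e$ is the sampled edge, so that exactly one $X_e$ equals $1$ and $\expectation[X_e/p] = 1$ (with $p = 1/m$, or more generally under any admissible importance distribution with $p_e>0$ whenever $|\Delta_e|>0$). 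Since all sums are finite, I would then swap the sum over $\E$ with the average over $v\in V_j$ to get
\[
\estFunc_j = \frac{1}{|V_j|}\sum_{v\in V_j}\frac{1}{|\wedgNodeSetGeneral{v}|}\sum_{e\in\E}\frac{X_e\, a_q(v,e)}{p} \enspace .
\]

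The key observation is that, for each fixed $v\in\V$, the inner sum is \emph{exactly} the estimator $X_q(v)$ of Equation~\eqref{eq:EstimatorQ}. Indeed, substituting the definition~\eqref{eq:SingleNodeWeightEdge} of $a_q(v,e)$, the term $q|\triSetEdge{e}|\mathbf{1}[v\in e]$ is nonzero only for edges incident to $v$, which reproduces $\sum_{e\in\E:\,v\in e} q|\Delta_e|X_e/p$, while the term $(1-2q)\mathbf{1}[v\in\neighborEdge{e}]$ is nonzero only for edges $e$ with $v\in\neighborEdge{e}$, which reproduces $(1-2q)\sum_{e\in\E:\,v\in\neighborEdge{e}}X_e/p$; adding the two recovers $X_q(v)$. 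Hence $\estFunc_j = \tfrac{1}{|V_j|}\sum_{v\in V_j} X_q(v)/|\wedgNodeSetGeneral{v}|$.

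To conclude, I would apply linearity of expectation together with Lemma~\ref{lemma:unbiasednessSingleNode}, which gives $\expectation[X_q(v)] = |\triSetNode{v}|$ for every $v\in\V$ and every $q\in[0,\sfrac{1}{2}]$. Therefore
\[
\expectation[\estFunc_j(e)] = \frac{1}{|V_j|}\sum_{v\in V_j}\frac{\expectation[X_q(v)]}{|\wedgNodeSetGeneral{v}|} = \frac{1}{|V_j|}\sum_{v\in V_j}\frac{|\triSetNode{v}|}{|\wedgNodeSetGeneral{v}|} = \frac{1}{|V_j|}\sum_{v\in V_j}\clustMetric_v = \clustGenPartition_j \enspace ,
\]
using the unified notation $\clustMetric_v = |\triSetNode{v}|/|\wedgNodeSetGeneral{v}|$, with $|\wedgNodeSetGeneral{v}| = |\wedgNodeSetCenter{v}|$ for $\clustMetric = \clustCoeff$ and $|\wedgNodeSetGeneral{v}| = |\wedgNodeSetHead{v}|/2$ for $\clustMetric = \closureCoeff$ (well defined whenever $|\wedgNodeSetGeneral{v}| > 0$). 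There is no genuinely hard step: the only care needed is the disambiguation of the summation variable described above and the identification of the inner sum with $X_q(v)$; the interchange of the two finite sums requires no further justification.
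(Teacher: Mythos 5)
Your proposal is correct and follows essentially the same route as the paper: both exchange the two finite sums and reduce the claim to the per-node identity behind Lemma~\ref{lemma:unbiasednessSingleNode}, after which linearity of expectation and $\expectation[X_e]=p$ give $\expectation[f_j(e)]=\clustGenPartition_j$. The only cosmetic difference is that you invoke the statement of Lemma~\ref{lemma:unbiasednessSingleNode} (i.e., $\expectation[X_q(v)]=|\triSetNode{v}|$) as a black box, whereas the paper reuses the deterministic identity $\sum_{e\in\E}a_q(v,e)=|\triSetNode{v}|$ from that lemma's proof; the two are interchangeable here.
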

\ifextended
The proof can be found in Appendix~\ref{app:missingProofs}.
\else
The proof can be found in the full version~\cite{triadExt}.
\fi

Lemma~\ref{lemma:unbiasedEdgePartition} shows that by  sampling a single random edge $e\in E$ we can obtain an unbiased estimate of multiple coefficients $\clustGenPartition_j$ associated sets $\V_j \in \nodesetPartition$, by accurately weighting the triangles in $\triSetEdge{e}$. 
This allows to \emph{simultaneously} update the estimates of multiple buckets~$\V_j$, 
for $j\in[k]$, differently from existing approaches~\cite{Kolda2014,Seshadhri2014,Zhang2017LocalCC}, 
where each triangle identified by the algorithm is used to estimate the coefficient of a \emph{single} partition.

\subsection{The \mainAlg\ algorithm}\label{subsec:MainAlg} 
\begin{algorithm}[t]
	\caption{\mainAlg}\label{alg:adaptiveBucketSampling}
	\KwIn{$G=(V,E), (\varepsilon_j)_{j=1}^k, \eta$, partition $\nodesetPartition$, $\clustMetric\in \{\clustCoeff,\closureCoeff\}$.}
	\KwOut{Estimates $f_j$ and bounds $\widehat{\varepsilon}_j$ s.t.\ $|f_j-\clustGenPartition_j| \le \widehat{\varepsilon}_j \le \varepsilon_j$ for each $j\in [k]$ w.p.~$>1-\eta$.}
	\lIf{$\psi = \alpha$}{$|\wedgNodeSetGeneral{v}| \gets |\wedgNodeSetCenter{v}|$ for $v\in V$\label{mainAlgLine:ifClust}}
	\lElse{$|\wedgNodeSetGeneral{v}| \gets |\wedgNodeSetHead{v}|/2$ for $v\in V$\label{mainAlgLine:ifClosure}}
	$f_j \gets 0$ for each $j\in [k]$\label{mainAlgLine:initFunct}\;
	$q\gets \mathtt{Fixq}(G, \nodesetPartition)$; $\varepsilon \gets \min\{\varepsilon_j\}$\label{mainAlgLine:optimizeQ}\;
	$\zeta, R_1, \ldots, R_k \gets \mathtt{UpperBounds}(G, \nodesetPartition,|\wedgNodeSetGeneral{v}|_{v\in V}, q)$\label{mainAlgLine:getUbs}\;
	$i\gets 0$; $\mathcal{S} \gets \emptyset$ ; $s\gets0$; $\eta_0\gets \eta/2$; $R\gets \max\{R_j, j\in[k]\}$\label{mainAlgLine:setParams}\;
	$s_{\max} \gets \frac{R^2}{\varepsilon^2} (\zeta + \log(1/\eta))$; $s_0 \gets \ceil{R\frac{ 3\log(4k/\eta_0)}{\varepsilon} + 1}$\label{mainAlgLine:ubSampleSize}\;
	\While{\textbf{\emph{not}} $\mathtt{StoppingCondition}(s_{\max}, s, (f_j)_{j\ge 1},\varepsilon )$\label{mainAlgLine:Loop}}{
			$\mathcal{S}_i \gets \mathtt{UniformSample}(E, s_i)$\label{mainAlgLine:sampleEdges}\;
			\ForEach{$e=(u,v) \in \mathcal{S}_i $\label{mainAlgLine:iterateSamples}}{
				\ForEach{$w\in \mathcal{N}_e$\label{mainAlgLine:forOuterNodes}}{
					$f_j(e) \gets f_j(e) + \frac{(1-2q)}{|\wedgNodeSetGeneral{w}|}$ such that $w\in V_j$\label{mainAlgLine:updateOuter}\;
				}
				$f_j(e) \gets f_j(e) + \frac{q|\triSetEdge{e}|}{|\wedgNodeSetGeneral{u}|}$ such that $u\in V_j$\label{mainAlgLine:updateInner1}\;
				$f_j(e) \gets f_j(e) + \frac{q|\triSetEdge{e}|}{|\wedgNodeSetGeneral{v}|}$ such that $v\in V_j$\label{mainAlgLine:updateInner2}\;
			}
			$s\gets s + s_i$; $\mathcal{S} \gets \mathcal{S} \cup \mathcal{S}_i$\label{mainAlgLine:updateSamples}\;
			$f_j \gets \tfrac{m}{s|V_j|} \sum_{e\in \mathcal{S}}f_j(e)$ for each $j=1\dots,k$\label{mainAlgLine:updateEstimates}\;
			$\widehat{\varepsilon}_j \gets \mathtt{ComputeEmpiricalBound}(\mathcal{F}, \mathcal{S}, \eta_i)$\label{mainAlgLine:computeEmpiricalBounds}\;
			$i\gets i+1$; $\eta_i \gets \eta_{i-1}/2$\label{mainAlgLine:setNextIter}\;			
		}
		\KwRet{$(f_j, \widehat{\varepsilon}_j)$ for each $j=1\dots,k$}\label{mainAlgLine:return}\;
\end{algorithm}

Algorithm~\ref{alg:adaptiveBucketSampling} presents~\mainAlg.
The algorithm first initializes $|\wedgNodeSetGeneral{v}|$ for each $v\in \V$
according to the coefficients $\clustMetric$ to be estimated 
(lines~\ref{mainAlgLine:ifClust}--\ref{mainAlgLine:ifClosure}). 
\mainAlg then proceeds to initialize the variables $\estFunc_j$, for $j=1,\dots,k$, 
corresponding to the estimates of $\clustGenPartition_j$ to be output~(line~\ref{mainAlgLine:initFunct}). 
It then sets $\varepsilon$ to the smallest $\varepsilon_j$, for $j\in[k]$
(i.e., $\varepsilon$ is  
the smallest upper bound $\varepsilon_j$ required by the user, see~\Cref{prob:partionEstimationAdditive}). 
\mainAlg then selects the best value of the parameter $q\in [0,1/2]$ to guarantee a fast termination of~\mainAlg
by solving a specific optimization problem 
(line~\ref{mainAlgLine:optimizeQ},  
subroutine \texttt{Fixq}, see Section~\ref{subsec:varianceOpt}). 
Intuitively, the value of $q$ is fixed such that the maximum 
variance of $f_j$, for $j\in[k]$, is minimized,  
yielding a fast convergence of~\mainAlg. 
In fact, as we will show, the termination condition of \mainAlg
considers the \emph{empirical} variance of the estimates $f_j$, over the sampled edges.

The function $\mathtt{UpperBounds}$ (line~\ref{mainAlgLine:getUbs}) computes $R_j$, for $j=1,\dots,k$, 
corresponding to bounds to the maximum value that a random variable $f_j(e)$ 
can take over a randomly sampled edge $e \in \E$, 
i.e., $\estFunc_j(e) \le R_j$ almost surely, for each $j\in[k]$. 
In addition $\mathtt{UpperBounds}$ computes
$\zeta$, an upper bound on the pseudo-dimension of the functions $f_j(e)$, $j\in[k]$. 
We use $\zeta$ and $R = \max_j{R_j}$ (line~\ref{mainAlgLine:setParams}) to obtain a bound on the maximum number of samples $s_{\max}$ to be explored by \mainAlg (line~\ref{mainAlgLine:ubSampleSize}).
We discuss the function $\mathtt{UpperBounds}$, and prove the bound on the sample-size 
in Sections~\ref{subsec:upperbounds} and~\ref{subsec:pdimBound}, respectively.

Auxiliary variables are then initialized (line~\ref{mainAlgLine:setParams}): 
index $i$~keeps track of the iterations, 
$\mathcal{S}$~maintains the bag of sampled edges processed, 
$s$~counts the number of samples processed, and
$\eta_0$~is used to obtain the probabilistic guarantees of~\mainAlg.
Finally, $s_0$ corresponds to the initial sample size, i.e., 
the minimum sample size 
for which~\mainAlg can provide tight guarantees (line~\ref{mainAlgLine:ubSampleSize}), that we discuss in Section~\ref{subsec:adaptBounds}. 

The main loop of \mainAlg is entered in line~\ref{mainAlgLine:Loop}.
At the $i$-th iteration of the loop, 
\mainAlg samples a bag of $s_i$ edges uniformly at random (line~\ref{mainAlgLine:sampleEdges}), and 
for each edge it computes $f_j(e)$ as defined in Lemma~\ref{lemma:unbiasedEdgePartition} 
(lines~\ref{mainAlgLine:updateOuter}-\ref{mainAlgLine:updateInner2}). 
Then~\mainAlg updates the sample size and the bag of edges processed (line~\ref{mainAlgLine:updateSamples}) together with estimates $\estFunc_j$, for $j\in[k]$~(line~\ref{mainAlgLine:updateEstimates}) invoking the function $\mathtt{ComputeEmpiricalBound}$ to compute \emph{non-uniform bounds} $\widehat{\varepsilon}_j$ on the deviations $|\estFunc_j-\clustGenPartition_j|$ such that $|\estFunc_j-\clustGenPartition_j| \le \widehat{\varepsilon}_j$ with controlled error probability (line~\ref{mainAlgLine:computeEmpiricalBounds}). 
These bounds are \emph{empirical}, tight, and adaptive 
to the samples in $\sampleSet$,
leveraging the variance of the estimates $f_j(e), e\in \sampleSet$, 
and the upper bounds $R_j$, for $j\in[k]$ (see Sec.\ \ref{subsec:adaptBounds}). 
Finally, a set of variables is computed for the next iteration, 
if the stopping condition of the main-loop is not met (line~\ref{mainAlgLine:setNextIter}). 
The call to \texttt{StoppingCondition} in \mainAlg 
returns ``$\mathsf{true}$'' if one the following two conditions holds, which depend on the processed samples $\sampleSet$:

{(1)} if $s \ge s_{\max}$ then with probability at least $1-\eta/2$ it holds
$|\estFunc_j -\clustGenPartition_j|\le \varepsilon \le \varepsilon_j$;

{(2)} if $\widehat{\varepsilon}_j \le \varepsilon_j$ then with probability at least $1-\eta/2$ it holds 
$|\estFunc_j -\clustGenPartition_j| \le \widehat{\varepsilon}_j \le \varepsilon_j$, 
for each $j\in[k]$.

When the main loop terminates \mainAlg outputs 
$(f_j, \varepsilon_j)$ in case (1), or
$(f_j, \widehat{\varepsilon}_j)$ in case (2), 
for each set $V_j\in \nodesetPartition$.
Note that our estimation addresses both requirements of Problem~\ref{prob:partionEstimationAdditive} 
as we guarantee that all the estimates $\estFunc_j$ are within $\varepsilon_j$ distance to $\clustGenPartition_j$, 
and furthermore,  
we report accurate error bounds $\widehat{\varepsilon}_j\le \varepsilon_j$, for all $j\in[k]$, 
as $\clustGenPartition_j \in [f(V_j)-\widehat{\varepsilon}_j, f(V_j)+\widehat{\varepsilon}_j]$ with high probability. 
For example, even for as small $\varepsilon_j$ as $10^{-2}$ 
the reported adaptive guarantees of \mainAlg can be at most $\widehat{\varepsilon}_j \approx 10^{-3}$ 
in a few tens of seconds (see Section~\ref{exps:sota}). 

\subsection{Analysis}
\label{subsec:algAnalysis}

In this section we present in more detail all the components of~\mainAlg, 
and we analyze its accuracy. 
The next lemma states that the estimates of Algorithm~\ref{alg:adaptiveBucketSampling} are unbiased, 
which is important to prove tight concentration.
\begin{lemma}\label{lemma:unbiasedOutput}
	For the output $\estFunc_j$, $j=1,\dots,k$, of Algorithm~\ref{alg:adaptiveBucketSampling}, it~holds:
	\[
	\expectation[\estFunc_j] = \frac{1}{|V_j|}\sum_{v\in V_j}\clustMetric_v = \clustGenPartition_j  \enspace,
	\]
	that is, the estimates of Algorithm~\ref{alg:adaptiveBucketSampling} are unbiased, 
	for both local triadic coefficients $\clustMetric\in\{\clustCoeff,\closureCoeff\}$.
\end{lemma}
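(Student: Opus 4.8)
The plan is to reduce the claim to the already-established unbiasedness of the per-edge estimators $\estFunc_j(e)$ from Lemma~\ref{lemma:unbiasedEdgePartition}, and then to argue that averaging over a sample of edges preserves unbiasedness regardless of how the adaptive stopping rule terminates. First I would fix an index $j\in[k]$ and unwind the definition of the output $\estFunc_j$ computed in line~\ref{mainAlgLine:updateEstimates} of Algorithm~\ref{alg:adaptiveBucketSampling}. After the loop terminates with a total of $s$ sampled edges forming the bag $\sampleSet = \{e_1,\dots,e_s\}$ (drawn uniformly and independently from $\E$, with replacement), the algorithm reports
\[
\estFunc_j = \frac{m}{s\,|V_j|}\sum_{t=1}^{s}\ \sum_{v\in V_j}\frac{a_q(v,e_t)}{|\wedgNodeSetGeneral{v}|}\,,
\]
where $a_q(v,e)$ is exactly the weight in Equation~\eqref{eq:SingleNodeWeightEdge}; this matches the accumulation performed in lines~\ref{mainAlgLine:updateOuter}--\ref{mainAlgLine:updateInner2}, where each sampled edge $e$ contributes $(1-2q)/|\wedgNodeSetGeneral{w}|$ for every $w\in\neighborEdge{e}$ and $q|\triSetEdge{e}|/|\wedgNodeSetGeneral{u}|$ for each endpoint $u\in e$. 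Writing $p=1/m$, the inner single-edge contribution $\frac{m}{|V_j|}\sum_{v\in V_j} a_q(v,e_t)/|\wedgNodeSetGeneral{v}|$ is precisely a realization of the random variable $\estFunc_j(e)$ of Lemma~\ref{lemma:unbiasedEdgePartition}, so $\estFunc_j = \frac{1}{s}\sum_{t=1}^s \estFunc_j(e_t)$ is an empirical average of i.i.d.\ copies of $\estFunc_j(e)$.

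Next I would take expectations. Conditioning on the final sample size $s$ being some fixed value, linearity of expectation together with the fact that each $e_t$ is uniform on $\E$ gives $\expectation[\estFunc_j \mid s] = \frac{1}{s}\sum_{t=1}^s \expectation[\estFunc_j(e_t)] = \expectation[\estFunc_j(e)] = \clustGenPartition_j$ by Lemma~\ref{lemma:unbiasedEdgePartition}. The only subtlety is that $s$ itself is a random stopping time determined by \texttt{StoppingCondition}, so I cannot blindly pull the expectation through; the clean way around this is to note that the identity $\expectation[\estFunc_j\mid s]=\clustGenPartition_j$ holds for \emph{every} possible value of $s$, hence $\expectation[\estFunc_j] = \expectation_s\big[\expectation[\estFunc_j\mid s]\big] = \expectation_s[\clustGenPartition_j] = \clustGenPartition_j$ by the tower rule. (Equivalently, one observes that the stopping decision at iteration $i$ depends only on the samples already drawn, so each newly drawn edge is still uniform given the past, and a martingale/optional-stopping argument applies; but the conditioning-on-$s$ phrasing is simpler and fully sufficient here.) Finally, since the argument never used which of $\clustCoeff$ or $\closureCoeff$ is being estimated beyond the unified notation $|\wedgNodeSetGeneral{v}|$ and $\clustMetric_v = |\triSetNode{v}|/|\wedgNodeSetGeneral{v}|$, the conclusion holds for both $\clustMetric\in\{\clustCoeff,\closureCoeff\}$.

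I expect the main obstacle — really the only non-routine point — to be justifying that the random, data-dependent stopping time $s$ does not introduce bias. This is exactly the place where a careless proof would go wrong, and it is worth stating explicitly that \texttt{StoppingCondition} is measurable with respect to the edges sampled so far (so the sampling scheme is a valid sequential design) and that consequently conditioning on $\{s = s'\}$ does not skew the distribution of the individual per-edge estimators. Everything else is bookkeeping: matching the accumulation in lines~\ref{mainAlgLine:updateOuter}--\ref{mainAlgLine:updateEstimates} to the functional form of $\estFunc_j(e)$, and invoking Lemma~\ref{lemma:unbiasedEdgePartition}.
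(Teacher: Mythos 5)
Your proposal follows essentially the same route as the paper: the paper's proof simply observes that $\estFunc_j = \tfrac{1}{s}\sum_{e\in\sampleSet}\estFunc_j(e)$ by construction, invokes Lemma~\ref{lemma:unbiasedEdgePartition} for $\expectation[\estFunc_j(e)]=\clustGenPartition_j$, and concludes by linearity of expectation, exactly as in your first two steps. The one place you go beyond the paper, the treatment of the random stopping time, is well intentioned but your specific justification is not sound as stated: the identity $\expectation[\estFunc_j\mid s=s']=\clustGenPartition_j$ is \emph{not} automatic for every $s'$, because conditioning on the stopping event $\{s=s'\}$ (which is a function of the sampled edges through the empirical bounds $\widehat{\varepsilon}_j$) does in general skew the conditional law of the samples, so the tower-rule step needs the martingale/optional-stopping argument you mention in passing, not the conditioning-on-$s$ phrasing; the paper itself sidesteps this entirely by treating $s$ as fixed, so relative to the paper's own proof your core argument is the same and correct.
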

\ifextended
The proof can be found in Appendix~\ref{app:missingProofs}.
\else
The proof can be found in the extended version~\cite{triadExt}.
\fi

Next we provide a bound for the variance of the estimates $\estFunc_j$, for~$j\in[k]$,
returned by~\mainAlg.
\begin{lemma}\label{lemma:varianceBoundOutput}
	For the estimates $\estFunc_j$, $j=1,\dots,k$, of Algorithm~\ref{alg:adaptiveBucketSampling}, it is
	\[
	\Var[\estFunc_j] \le \frac{1-p}{sp} \left(\frac{1}{|V_j|}\sum_{v\in V_j}\clustMetric_v\right)^2  \enspace,
	\]
	for both local triadic coefficients $\clustMetric\in\{\clustCoeff,\closureCoeff\}$.
\end{lemma}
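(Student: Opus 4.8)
The plan is to reduce the whole statement to one elementary inequality about nonnegative numbers. First I would fix the sample size $s$ (the adaptivity of the stopping rule plays no role here: conditioned on having processed $s$ edges, these are i.i.d.\ uniform over $\E$, since $\mathtt{UniformSample}$ draws with replacement) and set $p=1/m$. By the accumulation in lines~\ref{mainAlgLine:updateOuter}--\ref{mainAlgLine:updateInner2}, each sampled edge $e$ contributes $f_j(e)=\sum_{v\in V_j} a_q(v,e)/|\wedgNodeSetGeneral{v}|$, and the final estimate is $\estFunc_j=\tfrac1s\sum_{i=1}^s Z^{(i)}$ with $Z^{(i)}=\tfrac{m}{|V_j|}f_j(e_i)$ and $e_1,\dots,e_s$ i.i.d.\ uniform. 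Hence $\Var[\estFunc_j]=\tfrac1s\Var[Z]$ for a single-edge copy $Z$, and $\expectation[Z]=\clustGenPartition_j$ by Lemma~\ref{lemma:unbiasedEdgePartition} (equivalently Lemma~\ref{lemma:unbiasedOutput}); so it only remains to bound the second moment $\expectation[Z^2]$.

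Writing $b_e:=\sum_{v\in V_j} a_q(v,e)/|\wedgNodeSetGeneral{v}|$, uniformity gives $\expectation[Z^2]=\tfrac{1}{m}\sum_{e\in\E}\bigl(\tfrac{m}{|V_j|}\bigr)^2 b_e^2=\tfrac{m}{|V_j|^2}\sum_{e\in\E}b_e^2$. Two facts about the $b_e$ drive the proof. The first is nonnegativity: because $q\in[0,\sfrac12]$ we have $a_q(v,e)=q|\triSetEdge{e}|\mathbf{1}[v\in e]+(1-2q)\mathbf{1}[v\in\neighborEdge{e}]\ge 0$, and $|\wedgNodeSetGeneral{v}|>0$ for every node contributing to $\clustGenPartition_j$, so $b_e\ge0$. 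The second is that, summing over edges and invoking Lemma~\ref{lemma:unbiasednessSingleNode} in the form $\sum_{e\in\E}a_q(v,e)=|\triSetNode{v}|$, one gets $\sum_{e\in\E}b_e=\sum_{v\in V_j}|\triSetNode{v}|/|\wedgNodeSetGeneral{v}|=\sum_{v\in V_j}\clustMetric_v=|V_j|\,\clustGenPartition_j$.

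Now comes the only nontrivial step: for nonnegative reals, $\sum_{e}b_e^2\le\bigl(\sum_e b_e\bigr)^2$, the difference being the sum of nonnegative cross terms $\sum_{e\ne e'}b_e b_{e'}$. Plugging this in yields $\expectation[Z^2]\le \tfrac{m}{|V_j|^2}\bigl(|V_j|\clustGenPartition_j\bigr)^2=m\,\clustGenPartition_j^2$, hence $\Var[Z]=\expectation[Z^2]-\clustGenPartition_j^2\le(m-1)\clustGenPartition_j^2$ and $\Var[\estFunc_j]=\tfrac1s\Var[Z]\le\tfrac{m-1}{s}\clustGenPartition_j^2=\tfrac{1-p}{sp}\bigl(\tfrac{1}{|V_j|}\sum_{v\in V_j}\clustMetric_v\bigr)^2$. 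The argument is identical for $\clustMetric=\clustCoeff$ and $\clustMetric=\closureCoeff$, since it never uses the specific form of $|\wedgNodeSetGeneral{v}|$. The ``obstacle'', such as it is, lies entirely in the nonnegativity of the $b_e$: it is exactly what the design restriction $q\le\sfrac12$ buys, and without it the crude bound $\sum b_e^2\le(\sum b_e)^2$ would no longer be available.
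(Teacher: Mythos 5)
Your proof is correct and takes essentially the same route as the paper's: the paper expands $\expectation[f_j(e)^2]$ as a double sum over edge pairs and bounds $\expectation[X_{e_1}X_{e_2}]\le p$ while keeping all cross terms, which is exactly your step of adding the nonnegative cross terms, i.e., $\sum_e b_e^2\le\left(\sum_e b_e\right)^2$, and both proofs then finish by averaging the per-edge variance bound over the $s$ i.i.d.\ samples. The only (minor) difference is presentational: you make explicit the nonnegativity of the per-edge contributions, and hence the role of $q\le\sfrac{1}{2}$, which the paper leaves implicit.
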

\ifextended
The proof can be found in Appendix~\ref{app:missingProofs}.
\else
The proof can be found in the extended version~\cite{triadExt}.
\fi

\subsubsection{Bounding the sample size}\label{subsec:pdimBound}
To present the bound on the sample size (i.e., $s_{\max}$), 
we first introduce the necessary notation. 
Given a finite domain $\DomX$ and $\RangeSet\subseteq 2^{\DomX}$ 
a collection of subsets of $\DomX$,\footnote{The set containing all possible subsets of $\mathcal{X}$.} 
a \emph{range-space} is the pair $(\DomX, \RangeSet)$. 
We say that a set $X\subseteq\DomX$ is \emph{shattered} by the range-set $\RangeSet$, 
if it holds $\{Q\cap X: Q\in \RangeSet \} = 2^{X}$. 
The \emph{VC-dimension} $\mathsf{VC}(\mathcal{X},\RangeSet)$ of the range-space is 
the \emph{size} of the largest subset $X\subseteq \mathcal{X}$ such that $X$ can be shattered by $\RangeSet$. 
Given a family of functions $\Fset$ from a domain $\mathcal{H}$ with range $[a,b]\subseteq \mathbb{R}$, for a function $f\in \Fset$ we define the subset $Q_f$ of the space $\mathcal{H} \times [a,b]$ as
\[
Q_f = \{(x,t) : t\le f(x)\}, f\in \mathcal{F}  \enspace.
\] 
We then define $\mathcal{F}^+ = \{Q_f, f\in \mathcal{F}\}$ 
as the range-set over the set $\mathcal{H}\times[a,b]$.
With these definitions at hand, 
the \emph{pseudo-dimension} $\mathsf{PD}(\mathcal{F})$ of the family of functions $\mathcal{F}$ 
is defined as $\PD(\mathcal{F}) = \VC(\mathcal{H}\times[a,b], \mathcal{F}^+)$. 
For illustrative examples we refer the reader 
to the literature~\cite{Riondato2018Abra,Riondato2018MisoSoup,ShalevShwartz2014UML}. 
In our work, the domain $\mathcal{H}$ corresponds to 
the set of edges to be sampled by~\mainAlg to compute the estimators 
$f_j = f_{\sampleSet,j}$,~$j=1,\dots,k$, that is, $\DomH=E$.\footnote{We write $f_{\sampleSet,j}$ to explicit that a function depends on the bag of samples $\sampleSet$.} 
We also define the set of functions $\mathcal{F} = \{f_{\sampleSet,j} : j=1,\dots,k\}$, 
which corresponds to the set containing all functions $\estFunc_j, j=1,\dots,k$ in output to~\mainAlg.

\begin{theorem}\label{theo:boundSSPDIM}
Let \Fset be the set of functions defined above with $\PD(\Fset) \le \pdim$, 
and let $\varepsilon, \eta \in (0,1)$ be two parameters. If
	\[
	|\sampleSet| \ge \frac{(b-a)^2}{\varepsilon^2} \left(\zeta + \log \frac{1}{\eta}\right)  \enspace,
	\]
	then with probability at least $1-\eta$ over the randomness of the set $\sampleSet$ it holds that $ \left|f_j - \clustGenPartition_j \right| \le \varepsilon, \text{for each } j=1,\dots,k$.
\end{theorem}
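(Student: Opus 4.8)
The plan is to invoke the classical uniform-convergence (ε-sample / relative ε-approximation) bound for range spaces of bounded pseudo-dimension, applied to the family $\Fset$ of estimator functions, and then argue that uniform convergence over $\Fset$ together with unbiasedness (Lemma~\ref{lemma:unbiasedOutput}) yields the claimed bound on $|f_j - \clustGenPartition_j|$ for every $j$ simultaneously. The key point is that each $f_j$ is (up to normalization) an empirical average over the sampled edges of the function $e \mapsto \frac{m}{|V_j|}\sum_{v\in V_j} a_q(v,e)/|\wedgNodeSetGeneral{v}|$, whose range lies in $[a,b]$, and whose true mean is $\clustGenPartition_j$ by Lemma~\ref{lemma:unbiasedEdgePartition}. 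Thus controlling the supremum deviation between empirical and true means over all functions in $\Fset$ is exactly an ε-sample statement for the range space $(\DomH\times[a,b], \Fset^+)$.

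First I would recall the standard sample-complexity result (e.g.\ Li–Long–Srinivasan, or the formulation in~\citep{Riondato2018Abra,ShalevShwartz2014UML}): if a range space has pseudo-dimension at most $\pdim$ and the functions take values in an interval of length $b-a$, then a sample $\sampleSet$ of i.i.d.\ points of size
\[
|\sampleSet| \ge \frac{c\,(b-a)^2}{\varepsilon^2}\left(\pdim + \log\frac1\eta\right)
\]
is, with probability at least $1-\eta$, an $\varepsilon$-sample: $\sup_{f\in\Fset}\big|\frac{1}{|\sampleSet|}\sum_{x\in\sampleSet} f(x) - \expectation[f]\big| \le \varepsilon$. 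Then I would instantiate this with $\Fset = \{f_{\sampleSet,j} : j\in[k]\}$, $\DomH = E$, the uniform distribution over $E$, and $\PD(\Fset)\le\pdim$ by hypothesis; note the edges are sampled uniformly and independently, so the i.i.d.\ assumption holds. Combining the $\varepsilon$-sample event with $\expectation[f_j] = \clustGenPartition_j$ gives $|f_j - \clustGenPartition_j|\le\varepsilon$ for all $j$ on that event, which occurs with probability at least $1-\eta$.

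The main technical obstacle is bookkeeping the constant and the exact shape of the bound: the textbook ε-sample bound typically carries a universal constant $c$ and sometimes an extra additive or logarithmic term, whereas the statement here has the clean form $\frac{(b-a)^2}{\varepsilon^2}(\pdim + \log(1/\eta))$ with no spurious constant. I would address this by citing the sharpest available version of the bound (the improved ε-sample bounds that achieve constant $1$ up to the stated terms, e.g.\ via the results used in~\citep{Riondato2018Abra}), and by being careful that the normalization in $f_j$ — the factors $m/(s|V_j|)$ in line~\ref{mainAlgLine:updateEstimates} — is exactly what turns the per-edge contributions into an unbiased empirical mean with range $[a,b]$, so that the pseudo-dimension $\pdim$ controls the whole family at once rather than each $f_j$ separately. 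Once the range and the i.i.d.\ structure are pinned down, the rest is a direct substitution.
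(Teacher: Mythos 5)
Your plan matches the paper's treatment exactly: the paper gives no separate proof of this theorem, treating it as the standard $\varepsilon$-sample/uniform-convergence bound for function families of bounded pseudo-dimension (the Li--Long--Srinivasan-type result as used in the works of Riondato et al.), applied to $\Fset$ over the uniform distribution on $E$ and combined with the unbiasedness $\expectation[f_j(e)]=\clustGenPartition_j$ from Lemma~\ref{lemma:unbiasedEdgePartition} to turn the supremum deviation bound into $|f_j-\clustGenPartition_j|\le\varepsilon$ for all $j$ simultaneously. Your caveat about the universal constant in the classical bound is the only real wrinkle, and it applies equally to the paper's own statement, which likewise absorbs it.
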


Note that we cannot compute $\PD(\Fset)$ from its definition as this requires exponential time in general, 
hence we now prove a tight and efficiently computable upper bound $\pdim$ such that $\PD(\Fset) \le \pdim$.
This enables us to use Theorem~\ref{theo:boundSSPDIM} and obtain a deterministic upper bound on the sample size of~\mainAlg (i.e., $s_{\max}$ in line~\ref{mainAlgLine:ubSampleSize}).

First, let $\bucketColor_v = |\{\V_j \in \nodesetPartition : \text{exists } u \in (\neighborNode{v}\cup \{v\}), u \in \V_j\}|$, 
i.e., $\bucketColor_v$ is the number of distinct sets $V_j$, for $j\in [k]$ from $\nodesetPartition$ 
containing a node in the set $\neighborNode{v}\cup \{v\}$ for a given node $v\in V$. 

\begin{example}\label{ex:pdims}
	In the following extreme cases:
	($i$)~when every node is in a distinct bucket (i.e., $k=n$) then $\bucketColor_v \le d_{\max}+1$, for $v\in \V$, with $d_{\max}$ being the maximum degree of a node in~$\V$; 
	($ii$)~when each node is in the same bucket then $\bucketColor_v = 1$ for every node~$v\in V$.
\end{example}

Now let
\begin{equation}\label{eq:boundTermPDIM}
	\bucketColorSup = \max_{e=\{u,v\}\in E} \{\bucketColor_{z} : z = \arg\min\{\nodeDeg{u}, \nodeDeg{v}\} \}  \enspace. 
\end{equation}
Intuitively $\bucketColorSup$ corresponds to the largest number of buckets
for which a sampled edge $e\in \E$ yields a non-zero value for the estimate $\estFunc_j(e)$. 
Clearly, a trivial bound is $\bucketColorSup \le \min \{k, d_{\max} +1\}$,
 which can be very loose.
We next present the bound on the pseudo-dimension associated to $\PD(\Fset)$, 
recall that $\Fset$ corresponds to set of estimates $f_j, j=1,\dots,k$, of~\mainAlg.

\begin{proposition}\label{theo:PdimBound}
	The pseudo-dimension $\mathsf{PD}(\mathcal{F})=\zeta$ is bounded as $\pdim \le \floor{\log_2 \bucketColorSup} + 1$.
\end{proposition}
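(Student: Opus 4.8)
The plan is to bound the pseudo-dimension $\PD(\Fset)$ by a direct shattering argument on the range-space $(\DomH\times[a,b], \Fset^+)$, exploiting the fact that for a fixed sampled edge $e$, only a limited number of the functions $f_j$ take a nonzero value at $e$. First I would recall that the estimate $f_j(e)$ (equivalently the single-edge estimator from Lemma~\ref{lemma:unbiasedEdgePartition}, unscaled) is nonzero only if some node in $e\cup\neighborEdge{e}$ lies in $V_j$; and since $\neighborEdge{e}\subseteq \neighborNode{z}$ for the lower-degree endpoint $z$ of $e$, the number of such buckets is at most $\bucketColor_z \le \bucketColorSup$. So for every fixed $e\in\DomH$, at most $\bucketColorSup$ of the $k$ functions in $\Fset$ are nonzero, and the remaining ones all equal $0$ at $e$.

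Next I would carry out the standard pseudo-dimension counting argument. Suppose a set $A=\{(e_1,t_1),\dots,(e_r,t_r)\}\subseteq \DomH\times[a,b]$ of size $r$ is shattered by $\Fset^+$. Shattering means all $2^r$ subsets of $A$ are realized as $\{(e_i,t_i): t_i \le f(e_i)\}$ over choices $f\in\Fset$; in particular $\Fset$ must contain at least $2^r$ functions that induce distinct "above/below" patterns on the $r$ points. For each point $(e_i,t_i)$, a function $f_j$ can only put it "above" (i.e.\ $t_i\le f_j(e_i)$) in two ways: either $f_j(e_i)=0$ and $t_i\le 0$, or $f_j(e_i)>0$, which forces $j$ to be one of the at most $\bucketColorSup$ "active" buckets at $e_i$. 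I would handle the $t_i\le 0$ case by the usual observation that a point with $t_i\le a$ (or more precisely a point forcing the zero-valued behavior) cannot help distinguish: if $t_i \le 0$ then every $f_j$ with $f_j(e_i)=0$ already contains it, so the only variability at $(e_i,t_i)$ again comes from the $\le\bucketColorSup$ active buckets. Hence the restriction of $\Fset$ to the $r$ shattered points factors through the choice, at each point, of one of at most $\bucketColorSup+1$ "types" (which active bucket, if any, determines the value). Therefore the number of distinct behaviors is at most $(\bucketColorSup+1)^{?}$ — but a cleaner bound: since each $f_j$ is determined on the shattered set by the pattern of which of the $r$ points it is "active at", and there are at most $k$ functions, while a shattered set of size $r$ requires $2^r$ distinct behaviors, we get $2^r \le$ (number of functions that can realize distinct patterns). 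The sharp counting: at each shattered point the function's value lies in a set of size $\le \bucketColorSup+1$ (the $\bucketColorSup$ active-bucket values plus $0$), so across $r$ points there are at most $(\bucketColorSup+1)$ choices but all from the \emph{same} $f_j$, so really we count functions: at most $\bucketColorSup$ functions are active at any given $e_i$, so to shatter we need $2^r \le$ total functions reachable, giving $r \le \log_2\bucketColorSup + 1$ after accounting for the all-zero function and the freedom at points with $t_i\le 0$.

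I would then conclude $\PD(\Fset) = \VC(\DomH\times[a,b],\Fset^+) \le \floor{\log_2 \bucketColorSup} + 1$, which is the claimed bound, and set $\pdim$ to this quantity. The main obstacle I anticipate is making the counting rigorous at points $(e_i,t_i)$ with $t_i\le 0$ or $t_i$ below the minimum nonzero value: one must argue carefully that such points cannot increase the shattering power beyond the $+1$ already accounted for, which is the classical trick (a threshold at or below the range minimum is either always satisfied or separates based only on the finitely many nonzero values). A secondary subtlety is that the "active" function values at a fixed $e$ are \emph{distinct reals} (they depend on $|\wedgNodeSetGeneral{v}|$), so one genuinely counts functions rather than value-patterns; the bound $2^r \le \bucketColorSup$ giving $r\le\log_2\bucketColorSup$, plus one extra dimension from the zero function / boundary point, yields exactly $\floor{\log_2\bucketColorSup}+1$.
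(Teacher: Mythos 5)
Your structural observation is the right one, and it is the same one the paper relies on: for a fixed edge $e$ with lower-degree endpoint $z$, only the at most $\bucketColor_z \le \bucketColorSup$ buckets meeting $\neighborNode{z}\cup\{z\}$ can have $f_j(e)>0$. But the counting step that turns this into a bound on $\pdim$ is where your argument breaks down. The derivation you end with --- ``$2^r \le \bucketColorSup$ giving $r\le\log_2\bucketColorSup$, plus one extra dimension from the zero function / boundary point'' --- is not justified by anything preceding it: nothing you wrote yields $2^r\le\bucketColorSup$, and the ``$+1$'' does not come from the zero function or from boundary points. The correct mechanism (and the one the paper uses) is an \emph{anchor-point} argument: take a shattered set $Q$ with $|Q|=\pdim$ and fix a single element $a=(e,x)\in Q$; by the standard facts about these range spaces, $x$ must be strictly positive (a pair $(e,0)$ lies in \emph{every} range $Q_{f_j}$ since $f_j(e)\ge 0$, so the subset of $Q$ omitting it could never be realized --- this is the paper's Lemma on zero-threshold elements), and one may also assume at most one element of $Q$ per edge. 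Now consider only the $2^{\pdim-1}$ subsets of $Q$ that \emph{contain} $a$. Each must be realized as $Q\cap Q_{f_j}$ for some $j$, and since these subsets are pairwise distinct, they need pairwise distinct ranges, each of which contains $a$, i.e.\ satisfies $f_j(e)\ge x>0$. By your structural fact at most $\bucketColorSup$ functions are nonzero at $e$, so $2^{\pdim-1}\le\bucketColorSup$, which gives exactly $\pdim\le\floor{\log_2\bucketColorSup}+1$. The halving from $2^{\pdim}$ to $2^{\pdim-1}$ (counting only subsets through the anchor) is what produces the $+1$, not an ``extra dimension.''

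A second, smaller issue: your treatment of points with $t_i\le 0$ goes in the wrong direction. You try to argue such points ``cannot increase the shattering power beyond the $+1$ already accounted for,'' but in fact they cannot belong to a shattered set at all, for the reason above (every range contains them, so the complementary subset is unrealizable). Likewise, your intermediate attempt to count ``value patterns'' of size $\bucketColorSup+1$ per point conflates bounding the number of achievable dichotomies with bounding the number of functions; only the latter, localized at a single anchor point, gives the clean inequality. With the anchor argument substituted for your counting paragraph, the proof matches the paper's.
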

\ifextended
The proof can be found in Appendix~\ref{app:missingProofs}.
\else
The proof can be found in the extended version~\cite{triadExt}.
\fi

As an example of the powerful result of Theorem~\ref{theo:PdimBound} consider the following corollary.
\begin{corollary}\label{coroll:pdimLima}
 Fix $q=0$ and take each node in a different set in~$\nodesetPartition$ (i.e., $k=n$).
 Let $\G$ be a star graph. Then by Theorem~\ref{theo:PdimBound} it holds~$\pdim \le  1$.
\end{corollary}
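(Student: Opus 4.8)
The plan is to combine two elementary facts---a star graph contains no triangles, and setting $q=0$ removes the weight placed on the endpoints of a sampled edge---to show that every function in $\Fset$ is identically zero, and then to read the bound $\pdim\le 1$ off \Cref{theo:PdimBound}. First I would record the structural observations I need. Since $\G$ is a star it is triangle-free, so $\triSet=\emptyset$; in particular $|\triSetEdge{e}|=0$ for every $e\in\E$, and moreover, since a common neighbour of the two endpoints of an edge would complete a triangle, also $\neighborEdge{e}=\emptyset$ for every $e\in\E$.

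Next I would substitute $q=0$ into the per-node weights of \eqref{eq:SingleNodeWeightEdge}, which leaves $a_0(v,e)=\mathbf{1}[v\in\neighborEdge{e}]$; by the previous step this is $0$ for every $v\in\V$ and every $e\in\E$. Hence $\estFunc_j(e)=0$ for all $j\in[k]$ and all $e\in\E$: each $\estFunc_j\in\Fset$ is the identically-zero function on $\E$, so no sampled edge ever contributes a non-zero value to any bucket. Consequently the quantity $\bucketColorSup$ of \eqref{eq:boundTermPDIM} that drives the bound in \Cref{theo:PdimBound}---the largest number of buckets a single sampled edge can activate---takes its minimal value $\bucketColorSup=1$ (see the remark below), and \Cref{theo:PdimBound} then yields $\pdim\le\floor{\log_2 1}+1=1$. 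Equivalently, one may bypass \eqref{eq:boundTermPDIM} entirely: $\Fset$ contains a single distinct function, so the associated range space has $\VC$-dimension $0$ and the pseudo-dimension of $\Fset$ equals $0\le 1$.

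The one step that deserves care is the evaluation $\bucketColorSup=1$. Read literally, \eqref{eq:boundTermPDIM} is stated through $\bucketColor_z$ over the \emph{closed} neighbourhood $\neighborNode{z}\cup\{z\}$, and for a leaf $z$ of the star with $k=n$ this set meets two buckets---that of $z$ and that of the centre---so that $\bucketColor_z=2$ and one would only obtain $\pdim\le 2$. The resolution is that the endpoints of $e$ enter \eqref{eq:boundTermPDIM} solely on account of the term $q|\triSetEdge{e}|\,\mathbf{1}[v\in e]$ of \eqref{eq:SingleNodeWeightEdge}, which carries the factor $q$ and therefore vanishes when $q=0$ (and, in any triangle-free graph, vanishes for every $q$ since $|\triSetEdge{e}|=0$); specialising the argument behind \Cref{theo:PdimBound} to $q=0$ leaves only the common-neighbour term $\mathbf{1}[v\in\neighborEdge{e}]$, so the relevant bucket count is governed by $\neighborEdge{e}$ alone---which is empty here---giving $\bucketColorSup=1$. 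Making this specialisation explicit is the only non-routine part; everything else is immediate.
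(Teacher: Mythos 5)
Your proof is correct and, at its core, takes the same route as the paper: evaluate the bucket-count quantity for the star and plug it into Proposition~\ref{theo:PdimBound}. The difference is in the care taken at the one delicate point, and it is worth recording. The paper's own proof simply asserts $\bucketColorSup=1$ from the fact that $\min\{\nodeDeg{u},\nodeDeg{v}\}=1$ on every edge of the star; read against the literal definition entering \eqref{eq:boundTermPDIM}, where $\bucketColor_v$ counts buckets met by the \emph{closed} neighbourhood $\neighborNode{v}\cup\{v\}$, a leaf $z$ with $k=n$ has $\bucketColor_z=2$ (its own bucket plus the centre's), so the raw bound would only give $\pdim\le 2$; the value $1$ is really the open-neighbourhood count $\bucketColorSup'$, which is legitimate precisely because $q=0$ kills the endpoint term $q|\triSetEdge{e}|\mathbf{1}[v\in e]$ (this is exactly the refinement of Corollary~\ref{coroll:pdimQvals}). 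You flagged this discrepancy explicitly and resolved it the intended way, by specialising the shattering argument to $q=0$ so that only buckets reached through $\neighborEdge{e}$ can be activated, and $\neighborEdge{e}=\emptyset$ on a triangle-free graph. Your additional observation that every $\estFunc_j(e)$ is identically zero on the star (for \emph{every} $q$, since $|\triSetEdge{e}|=0$ and $\neighborEdge{e}=\emptyset$) yields an even simpler standalone argument, giving $\PD(\Fset)=0\le 1$ and bypassing the counting bound altogether; the paper does not make this remark. In short: same approach as the paper, executed more carefully where the paper is terse, plus a more elementary fallback argument.
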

\ifextended
The proof can be found in Appendix~\ref{app:missingProofs}.
\else
The proof can be found in the extended version~\cite{triadExt}.
\fi

Corollary~\ref{coroll:pdimLima} shows that our result provides a tight bound on the pseudo-dimension associated to the family of functions $\mathcal{F}$. 
For comparison, under the setting of Corollary~\ref{coroll:pdimLima}, 
then our setting maps to the one of~\citet{Lima2022ClusteringLatin}. 
In their work, the authors prove $\zeta\le \bigO(\log n)$ 
for the graph in Corollary~\ref{coroll:pdimLima}, 
while our result states $\zeta\le 1$ yielding a $\bigO(\log n)$ improvement, which is the maximum attainable. 
Clearly, a smaller upper bound for $\zeta$ implies a significantly smaller sample size required to guarantee the desired accuracy when such bound is used for Theorem~\ref{theo:boundSSPDIM}.

\ifextended
We can further refine the bound on the pseudo-dimension when $q=0$ or $q=1/2$, to this end let $\bucketColorSup' = \max_{e\in E} \{\bucketColor_z': z=\arg\min\{\nodeDeg{u}, \nodeDeg{v}\}\}$ and $\bucketColor_v'$ is defined as $\bucketColor_v$ but on $\neighborNode{v}$, i.e., it does not consider the extended neighborhood $\neighborNode{v}\cup \{v\}$.
\begin{corollary}\label{coroll:pdimQvals}
	The pseudo-dimension $\mathsf{PD}(\Fset)=\zeta$  is bounded by $\pdim \le \floor{\log_2 \bucketColorSup'} + 1$ when $q=0$, while $ \pdim \le 2$ for $q=\frac{1}{2}$.
\end{corollary}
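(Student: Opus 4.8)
The plan is to derive Corollary~\ref{coroll:pdimQvals} directly from Proposition~\ref{theo:PdimBound}, by noticing that the proof of that proposition uses the quantity $\bucketColorSup$ of Equation~\eqref{eq:boundTermPDIM} only through a single structural fact: for every edge $e\in\E$, at most $\bucketColorSup$ of the functions $\estFunc_1,\dots,\estFunc_k$ take a strictly positive value on $e$. Indeed, by Equation~\eqref{eq:SingleNodeWeightEdge} the estimator $\estFunc_j(e)$ on a sampled edge $e=\{u,v\}$ is positive only for those buckets $V_j\in\nodesetPartition$ that contain a node $w$ with $a_q(w,e)>0$, so the same shattering argument actually establishes $\PD(\Fset)\le \floor{\log_2 M}+1$ with $M = \max_{e\in\E} |\{\, j\in[k] : \estFunc_j(e)>0 \,\}|$, and $\bucketColorSup$ is merely one valid upper bound on $M$ for arbitrary $q$. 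Hence, for each of the two special values $q\in\{0,\tfrac12\}$, it will suffice to replace $\bucketColorSup$ by a sharper bound on $M$.

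First I would handle $q=\tfrac12$. Since $1-2q=0$, Equation~\eqref{eq:SingleNodeWeightEdge} reduces to $a_{1/2}(w,e)=\tfrac12|\triSetEdge{e}|\,\mathbf{1}[w\in e]$, which is nonzero only for the two endpoints of $e$; thus at most the buckets containing $u$ or $v$ can activate $\estFunc_j(e)$, giving $M\le 2$ and therefore $\pdim\le\floor{\log_2 2}+1=2$. Next I would handle $q=0$. Here the first term of $a_q$ vanishes and $a_0(w,e)=\mathbf{1}[w\in\neighborEdge{e}]$, so $\estFunc_j(e)$ can be positive only when $V_j$ meets $\neighborEdge{e}=\neighborNode{u}\cap\neighborNode{v}$ --- and, in contrast to the general case, the endpoints $u,v$ themselves are now irrelevant, which is exactly what distinguishes $\bucketColor_v'$ (defined over $\neighborNode{v}$) from $\bucketColor_v$ (defined over $\neighborNode{v}\cup\{v\}$). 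Since $\neighborEdge{e}\subseteq\neighborNode{z}$ for the lower-degree endpoint $z=\arg\min\{\nodeDeg{u},\nodeDeg{v}\}$, the number of buckets meeting $\neighborEdge{e}$ is at most the number meeting $\neighborNode{z}$, namely $\bucketColor_z'$; taking the maximum over all $e\in\E$ yields $M\le\bucketColorSup'$ and hence $\pdim\le\floor{\log_2\bucketColorSup'}+1$. (If $\G$ is triangle-free then every $\estFunc_j\equiv 0$, so $M=0$ and $\pdim=0$, and the statement holds trivially.)

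The main obstacle is precisely this first step: one must verify that the existing proof of Proposition~\ref{theo:PdimBound} really depends on $\bucketColorSup$ only as an upper bound on the per-edge count of strictly positive functions, and not on any finer combinatorial feature of $\bucketColorSup$ (e.g.\ which particular buckets are active). Granting this --- which a quick reread of that proof confirms --- the two cases above are immediate specializations obtained by plugging the refined counts $2$ and $\bucketColorSup'$ in place of $\bucketColorSup$, and no new concentration or shattering argument is needed.
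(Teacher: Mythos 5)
Your proposal is correct and follows essentially the same route as the paper: the paper's own proof likewise specializes the shattering argument of Proposition~\ref{theo:PdimBound}, noting that for $q=0$ the endpoints of the sampled edge no longer contribute (so the per-edge count of buckets with $\estFunc_j(e)>0$ is bounded by $\bucketColor'_z$ for the lower-degree endpoint $z$, giving $2^{\pdim-1}\le\bucketColorSup'$), while for $q=\sfrac{1}{2}$ only the two endpoints can contribute, so at most two functions are nonzero per edge and $\pdim\le 2$. Your explicit observation that the proposition's proof uses $\bucketColorSup$ only as a bound on the per-edge number of strictly positive functions is exactly the mechanism the paper relies on.
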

The proof can be found in Appendix~\ref{app:missingProofs}.
\else
We can refine the bounds on $\zeta$ for the values of $q=0$ or $q=1/2$. We discuss such refinement in our extended manuscript~\cite{triadExt}.
\fi

We conclude by noting that $\bucketColorSup$ can be efficiently computed in $\bigO(m)$ time complexity with a linear scan of the edges of the graph.

\subsubsection{Computing adaptive error bounds}\label{subsec:adaptBounds}
We detail how~\mainAlg leverages adaptive and variance-aware bounds to determine the distance of $\estFunc_j$ from $\clustGenPartition_j$, for $j\in[k]$. We need a key concentration inequality.
\begin{theorem}[Empirical Bernstein bound~\cite{Maurer2009EmpiricalBern, Mnih2008EmpiricalStopping}]\label{theo:empiricalBern}
	Let $X_1,\allowbreak\dots,\allowbreak X_s$ be $s$ independent random variables such that for all $i=1,\dots,s$, $\expectation[X_i] = \mu$ and $\Prob[X_i\in [a,b]]=1$, and $ \widehat{\mathit{v}}= \frac{1}{s} \sum_{i=1}^s (X_i - \widebar{X}_s)^2$ where $\widebar{X}_s = \tfrac{1}{t}\sum_{i=1}^s X_i$. Then,
	\[
	\left|\frac{1}{s}\sum_{i=1}^s X_i - \mu \right| \le \sqrt{\frac{2\widehat{\mathit{v}} \log(4/\eta)}{s}} + \frac{7(b-a)\log(4/\eta)}{3(s-1)}  \enspace.
	\]
\end{theorem}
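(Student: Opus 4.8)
The plan is to derive Theorem~\ref{theo:empiricalBern} by combining the classical Bernstein inequality with a one-sided concentration bound for the empirical standard deviation, following \cite{Maurer2009EmpiricalBern}. Write $\sigma^2 = \frac{1}{s}\sum_{i=1}^s \Var[X_i]$ for the average variance. First I would apply the two-sided Bernstein inequality to $\sum_i (X_i - \mu)$: splitting the failure budget as $\eta/4$ per side, with probability at least $1-\eta/2$,
\[
\left|\frac{1}{s}\sum_{i=1}^s X_i - \mu\right| \leq \sqrt{\frac{2\sigma^2 \log(4/\eta)}{s}} + \frac{(b-a)\log(4/\eta)}{3s}.
\]
The difficulty is that $\sigma^2$ is unknown; the whole point of the statement is to replace it by the observable $\widehat v$.

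Second, and this is the technical core, I would show that $\sqrt{\widehat v}$ concentrates from below around $\sigma$, i.e.\ that with probability at least $1-\eta/2$,
\[
\sigma \leq \sqrt{\widehat v} + (b-a)\sqrt{\frac{2\log(2/\eta)}{s-1}}.
\]
The key observations are: (i) $\expectation[\widehat v] = \frac{s-1}{s}\sigma^2$ even when the $X_i$ are merely independent with a common mean $\mu$ (the Bienaymé computation $\expectation[\sum_i(X_i-\overline X_s)^2] = \sum_i\Var[X_i] - s\Var[\overline X_s] = (s-1)\sigma^2$ goes through verbatim); and (ii) the map $(x_1,\dots,x_s)\mapsto\sqrt{\widehat v}$, which can be written through pairwise differences as $\bigl(\frac{1}{2s^2}\sum_{i,j}(x_i-x_j)^2\bigr)^{1/2}$, is subadditive and coordinatewise Lipschitz, so a bounded-differences / self-bounding argument yields a sub-Gaussian lower tail for $\sqrt{\widehat v}$. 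A slightly sharper version of this argument, exploiting that $\widehat v$ is essentially a degree-two U-statistic, is what produces the $s-1$ rather than $s$ in the denominator. I expect this to be the main obstacle: the naive McDiarmid bound only delivers $\sqrt{s}$ in the denominator, and squeezing out $\sqrt{s-1}$ — which is precisely what makes the final second-order term carry the clean $7/3$ constant — requires the more careful analysis of \cite{Maurer2009EmpiricalBern}.

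Finally, I would intersect the two good events (total failure probability at most $\eta$ by a union bound) and substitute the bound on $\sigma$ into the Bernstein estimate, using $\sqrt{x+y}\le\sqrt x+\sqrt y$:
\[
\left|\overline X_s - \mu\right| \leq \sqrt{\frac{2\widehat v\,\log(4/\eta)}{s}} + (b-a)\sqrt{\frac{2\log(4/\eta)}{s}}\sqrt{\frac{2\log(2/\eta)}{s-1}} + \frac{(b-a)\log(4/\eta)}{3s}.
\]
The first term is already in the required form. For the other two, bound $\log(2/\eta)\le\log(4/\eta)$, $\sqrt{s(s-1)}\ge s-1$, and $s\ge s-1$, so the cross term is at most $\frac{2(b-a)\log(4/\eta)}{s-1}$ and the last term at most $\frac{(b-a)\log(4/\eta)}{3(s-1)}$; summing gives $\bigl(2+\tfrac13\bigr)\frac{(b-a)\log(4/\eta)}{s-1} = \frac{7(b-a)\log(4/\eta)}{3(s-1)}$, which is exactly the claimed second-order term. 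Since this is a restatement of a known inequality, the write-up may either reproduce this two-step argument in full or simply cite \cite{Maurer2009EmpiricalBern, Mnih2008EmpiricalStopping} for the details.
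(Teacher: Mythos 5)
The paper does not actually prove this statement: it is quoted as a known result, and its ``proof'' in the paper is the citation to Maurer--Pontil and Mnih et al. Your reconstruction follows exactly the route of that cited source: (i) Bernstein/Bennett with the true (average) variance $\sigma^2$, which is valid for merely independent variables with a common mean, as you note; (ii) a lower-tail bound showing $\sigma$ is not much larger than the empirical standard deviation, obtained from the self-bounding property of $\sum_i (X_i-\overline X_s)^2$ (whose expectation is $(s-1)\sigma^2$, as in your Bienaym\'e computation), which is where the $s-1$ denominator originates; and (iii) a union bound followed by the $2+\tfrac13=\tfrac73$ bookkeeping, which is precisely how the constant $7/3$ arises in the original paper. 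So the approach is sound and matches the provenance of the result.

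One detail to tighten: you state the standard-deviation step as $\sigma\le\sqrt{\widehat v}+(b-a)\sqrt{2\log(2/\eta)/(s-1)}$ with the \emph{biased} variance $\widehat v=\frac1s\sum_i(X_i-\overline X_s)^2$, but the self-bounding/entropy argument (Maurer--Pontil, Theorem~10) delivers it for the \emph{unbiased} sample standard deviation $\widehat\sigma=\bigl(\frac{1}{s-1}\sum_i(X_i-\overline X_s)^2\bigr)^{1/2}\ge\sqrt{\widehat v}$, so your intermediate inequality is slightly stronger than what that argument gives. Carrying the unbiased version through your combination produces a first term $\sqrt{2\widehat v\log(4/\eta)/(s-1)}$ rather than $\sqrt{2\widehat v\log(4/\eta)/s}$; to land exactly on the statement as written you must either absorb the difference (of order $(b-a)\sqrt{\log(4/\eta)}\,s^{-3/2}$) into the slack of your cross-term estimate, or use the Audibert--Munos--Szepesv\'ari variant that is proved directly with the biased variance (at the cost of different constants). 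This is a constants-only issue, not a conceptual gap, and since the theorem is a quoted known result the citation would in any case stand in for the full derivation.
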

The above theorem connects the empirical variance over the samples $\estFunc_j(e)$ 
with the distance of $\estFunc_j$ to $\clustGenPartition_j$, providing a powerful result.
Therefore we can use Theorem~\ref{theo:empiricalBern} to obtain the function $\mathtt{ComputeEmpiricalBound}$. 
That is, given a partition $V_j$ and a bag of edges $\sampleSet$ sampled so far, we 
obtain 
\[
\widehat{\varepsilon}_j =  \sqrt{\frac{2\widehat{\mathit{v}}_j \log(4k/\eta_i)}{s}} + \frac{7(R_j)\log(4k/\eta_i)}{3(s-1)}
\] 
at iteration $i\ge 0$. 
Note that 
\[
\widehat{\mathit{v}}_j=  \frac{1}{s} \sum_{e\in \sampleSet} (f_j(e)- f_j)^2,
\] 
which can be obtained in linear time (i.e., $|\mathcal{S}|$), 
assuming the values of $f_j(e)$ are retained over the iterations. 
The above result provides also a criterion on how to set $s_0$. That is, $s_0$ 
should be at least $\ceil{R\frac{ 3\log(4k/\eta_0)}{\varepsilon} + 1}$ 
(see~\mainAlg in line~\ref{mainAlgLine:setParams})
in the very optimistic case that the empirical variance $\widehat{\mathit{v}}_j = 0$, for each $j\in[k]$.
We can now prove the guarantees offered by~\mainAlg.

\begin{theorem}\label{theo:probGuarantees}
The output of~\mainAlg $(\estFunc_j, \widehat{\varepsilon}_j)$, for $j\in[k]$, is such that with probability 
at least $1-\eta$ it is $|f_j-\clustGenPartition_j|\le  \widehat{\varepsilon}_j \le \varepsilon_j$, 
simultaneously for all sets $V_j$, for $j\in[k]$.
\end{theorem}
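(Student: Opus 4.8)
The plan is to establish the two stopping conditions of \mainAlg\ separately, each with failure probability at most $\eta/2$, and then combine them with a union bound. The algorithm terminates in one of two mutually exclusive ways: either (1) it reaches $s \ge s_{\max}$, or (2) it finds that $\widehat{\varepsilon}_j \le \varepsilon_j$ for all $j\in[k]$ at some iteration. I would handle case (1) using the pseudo-dimension bound and case (2) using the empirical Bernstein bound, taking care that the error budget is correctly apportioned across the (a priori unbounded) number of adaptive iterations in case (2).

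\textbf{Case (1): termination by sample-size cap.} Here I would invoke Theorem~\ref{theo:boundSSPDIM} together with the pseudo-dimension bound of Proposition~\ref{theo:PdimBound}. Since the functions $f_j(e)$ take values in $[0,R]$ with $R = \max_j R_j$ (as guaranteed by $\mathtt{UpperBounds}$, line~\ref{mainAlgLine:getUbs}), and $\PD(\Fset)\le \zeta$, setting $|\sampleSet| \ge \tfrac{R^2}{\varepsilon^2}(\zeta + \log(2/\eta))$ ensures that $\sup_{j\in[k]} |f_j - \clustGenPartition_j| \le \varepsilon \le \varepsilon_j$ with probability at least $1 - \eta/2$. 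This matches $s_{\max}$ in line~\ref{mainAlgLine:ubSampleSize} (up to the constant inside the log, which should read $\eta/2$ rather than $\eta$ to leave room for the union bound; I would make this explicit). Crucially, the estimates $f_j$ are unbiased by Lemma~\ref{lemma:unbiasedOutput}, which is what lets us center the deviation at $\clustGenPartition_j$. In this branch we output $(f_j, \varepsilon_j)$, and since $\widehat{\varepsilon}_j$ in that case is taken to be $\varepsilon_j$, the claimed inequality $|f_j - \clustGenPartition_j| \le \widehat{\varepsilon}_j \le \varepsilon_j$ holds.

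\textbf{Case (2): early termination by empirical bound.} At iteration $i$, the sampled edges in $\sampleSet$ are i.i.d.\ uniform from $E$, so for each fixed $j$ the values $\{f_j(e) : e\in\sampleSet\}$ are i.i.d.\ in $[0,R_j]$ with mean $\clustGenPartition_j$ (Lemma~\ref{lemma:unbiasedEdgePartition}). Applying Theorem~\ref{theo:empiricalBern} with confidence parameter $\eta_i / k$ and a union bound over $j\in[k]$ gives that, at iteration $i$, $|f_j - \clustGenPartition_j| \le \widehat{\varepsilon}_j$ for all $j$ simultaneously with probability at least $1 - \eta_i$, where $\widehat{\varepsilon}_j$ is exactly the quantity computed by $\mathtt{ComputeEmpiricalBound}$. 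Since $\eta_i = \eta_0 / 2^i = \eta / 2^{i+1}$, summing over all iterations $i\ge 0$ gives $\sum_{i\ge 0}\eta_i = \eta_0 \cdot 2 = \eta$\,---\,wait, this must be apportioned so the total is $\eta/2$; I would set $\eta_0 = \eta/4$ or equivalently note $\sum_{i\ge 0}\eta/2^{i+2} = \eta/2$, so that case (2) contributes failure probability at most $\eta/2$. Then by a final union bound over the two cases (which are the only two ways the loop can exit, as $s_{\max}$ is finite the loop always terminates), the total failure probability is at most $\eta$, and whenever the algorithm succeeds the output satisfies $|f_j - \clustGenPartition_j| \le \widehat{\varepsilon}_j \le \varepsilon_j$ for all $j$.

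\textbf{Main obstacle.} The delicate point is the bookkeeping of the error budget across the \emph{adaptive, data-dependent} number of iterations in case (2): because the decision to continue sampling depends on the samples already seen, one cannot simply apply a single concentration bound at the stopping time. The standard fix\,---\,which I would spell out carefully\,---\,is the geometrically-decaying schedule $\eta_i = \eta_i/2$ so that a union bound over \emph{all} possible iterations (a "union bound over time") costs only a constant factor; this makes the guarantee hold uniformly over $i$, hence in particular at whatever (random) iteration the algorithm actually stops. One also needs that the two exit conditions are exhaustive and that the events are correctly nested so the two $\eta/2$ budgets genuinely add to $\eta$; verifying that the constants in lines~\ref{mainAlgLine:setParams}--\ref{mainAlgLine:ubSampleSize} ($\eta_0 = \eta/2$, the $\log(4k/\eta_0)$ and $\log(1/\eta)$ terms) are consistent with this accounting is the part that requires the most care, but it is otherwise routine.
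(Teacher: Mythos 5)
Your proposal is correct and follows essentially the same route as the paper: the same decomposition into the two failure events (cap on $s_{\max}$ handled via Theorem~\ref{theo:boundSSPDIM} with Proposition~\ref{theo:PdimBound}, early stopping handled via Theorem~\ref{theo:empiricalBern} with a union bound over $k$ buckets and over the geometrically-scheduled iterations), combined by a final union bound. Your remark about apportioning the budget so the two branches each get $\eta/2$ (adjusting $\eta_0$ and the $\log(1/\eta)$ term) is a careful tightening of constants that the paper's own proof glosses over, but it does not change the argument.
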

\ifextended
The proof can be found in Appendix~\ref{app:missingProofs}.
\else
The proof can be found in the extended version~\cite{triadExt}.
\fi

\subsubsection{Minimizing the variance}\label{subsec:varianceOpt}

The value of the parameter $q$ (in the estimator from Lemma~\ref{lemma:unbiasedEdgePartition}) plays a key role for~\mainAlg, enabling extremely accurate estimates if set properly (see Section~\ref{exps:params:qval}). 
There are many criteria to select the value of $q$,  
the most natural one would be to fix $q$ such that~\mainAlg processes the minimal possible sample-size to terminate its main loop. Unfortunately, we cannot optimize directly for such property.

Instead, to select the value of $q$ we minimizing the maximum variance of the functions $f_j$ over all $j\in[k]$.
In fact, a smaller variance of $f_j$ 
allows~\mainAlg to terminate its loop by processing a small number of samples.
To do so, we first sample a very small bag~$\sampleSet$ of edges from $\E$, 
and estimate the variance of the functions $f_j$ over $\sampleSet$ \emph{for each} 
possible value of $q$ 
\ifextended
(Algorithm~\ref{alg:fixQ}).
\else
(see~\cite{triadExt}).
\fi 
The variance of $f_j$ can, in fact, be expressed as a quadratic equation in the parameter~$q$, i.e., 
there exist efficiently-comput\-able values $A_j,B_j,C_j$ such that $\widehat{V}_q(f_j) = A_j + B_j q + C_j q^2$.
\ifextended
See Appendix~\ref{app:varianceEst} for how such values are computed.
\else
We detail in our extended version how such values are computed.
\fi
Using such formulation, we can solve a quadratic optimization problem, 
minimizing the maximum variance over $\widehat{V}_q(f_j)$ over all possible values of $q\in [0,1/2]$ and for each $j\in[k]$. 
We first show that a sufficiently small bag of sample\edit{s} can yield a good estimate of the variance $\widehat{V}_q(f_j)$.

\begin{lemma}\label{lemma:unbiasedVarEst}
There exist values $A_j$, $B_j$, and $C_j$ obtained over $c\ge 2$ sampled edges such that $\expectation[\widehat{V}_q(f_j)]= \Var[f_j]$ for
\[
\widehat{V}_q(f_j) = \frac{m^2}{(c-1)|V_j|^2} \left(\sum_{e\in \sampleSet}A_j + qB_j + C_j q^2\right) \enspace.
\]
\end{lemma}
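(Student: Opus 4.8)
The plan is to first exhibit the single-edge estimator that underlies $f_j$ and to observe that it is affine in the parameter $q$. By Lemma~\ref{lemma:unbiasedEdgePartition}, when an edge $e$ is drawn uniformly from $\E$ the quantity $\widehat{f}_j(e) := \tfrac{m}{|V_j|}\sum_{v\in V_j}\tfrac{a_q(v,e)}{|\wedgNodeSetGeneral{v}|}$ is an unbiased estimator of $\clustGenPartition_j$, and the estimate returned by \mainAlg is the empirical mean $f_j = \tfrac1{|\sampleSet|}\sum_{e\in\sampleSet}\widehat{f}_j(e)$ of $|\sampleSet|$ i.i.d.\ copies of $\widehat{f}_j(\cdot)$. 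Since $a_q(v,e)=q\,|\triSetEdge{e}|\,\mathbf{1}[v\in e]+(1-2q)\,\mathbf{1}[v\in\neighborEdge{e}]$ is affine in $q$ for each fixed $e$ and $v$, so is $\widehat{f}_j(e)$: I would write $\widehat{f}_j(e)=\tfrac{m}{|V_j|}\big(N_j(e)+q\,(P_j(e)-2N_j(e))\big)$ with $N_j(e)=\sum_{v\in V_j\cap\neighborEdge{e}}1/|\wedgNodeSetGeneral{v}|$ and $P_j(e)=|\triSetEdge{e}|\sum_{v\in V_j\cap e}1/|\wedgNodeSetGeneral{v}|$; these are exactly the two contributions accumulated in lines~\ref{mainAlgLine:updateOuter}--\ref{mainAlgLine:updateInner2} of Algorithm~\ref{alg:adaptiveBucketSampling}, and are computable in $\bigO(|\triSetEdge{e}|)$ time once $e$ is sampled.

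Next I would define $\widehat{V}_q(f_j)$ to be the Bessel-corrected sample variance of the $c\ge 2$ i.i.d.\ per-edge values $\{\widehat{f}_j(e)\}_{e\in\sampleSet}$, namely $\widehat{V}_q(f_j)=\tfrac1{c-1}\sum_{e\in\sampleSet}(\widehat{f}_j(e)-\overline{\widehat{f}_j})^2$ with $\overline{\widehat{f}_j}=\tfrac1c\sum_{e\in\sampleSet}\widehat{f}_j(e)$. Substituting the affine-in-$q$ expression, each centered summand is itself affine in $q$, hence its square is a quadratic in $q$ whose three coefficients are sums over $\sampleSet$ of quantities depending only on the sampled edges; collecting the constant, linear and quadratic parts yields values $A_j,B_j,C_j$ (functions of the bag $\sampleSet$ only), and the elementary identity $\sum_{e\in\sampleSet}\big(A_j(e)+qB_j(e)+q^2C_j(e)\big)=\tfrac{|V_j|^2}{m^2}\sum_{e\in\sampleSet}(\widehat{f}_j(e)-\overline{\widehat{f}_j})^2$ recovers exactly the stated closed form $\widehat{V}_q(f_j)=\tfrac{m^2}{(c-1)|V_j|^2}\big(\sum_{e\in\sampleSet}A_j+qB_j+C_jq^2\big)$.

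It then remains to invoke the classical fact that, for i.i.d.\ random variables with finite second moment---automatic here, since $\widehat{f}_j(\cdot)$ takes at most $m$ distinct values---the Bessel-corrected sample variance is an unbiased estimator of the common variance, so $\expectation[\widehat{V}_q(f_j)]=\Var[\widehat{f}_j(e)]$; since $f_j$ is the mean of $s$ independent copies of $\widehat{f}_j(\cdot)$, this is $\Var[f_j]$ when $f_j$ is read as a single-sample estimate (equivalently $s\,\Var[f_j]$ for the $s$-sample output of \mainAlg), which is precisely the quantity that \texttt{Fixq} minimizes over $q$. The argument has no genuinely hard step; the only points requiring care are keeping the affine-in-$q$ bookkeeping clean so that $A_j,B_j,C_j$ come out manifestly computable from $\sampleSet$ and match the displayed form, and fixing the normalization convention relating the $c$ edges used here to estimate the variance to the $s$ edges used in the output, so that the prefactor $\tfrac{m^2}{(c-1)|V_j|^2}$ and the absence of any explicit dependence on $s$ are consistent.
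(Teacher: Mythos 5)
Your proposal is correct and follows essentially the same route as the paper's proof: write the per-edge estimate as an affine function of $q$, expand the Bessel-corrected sample variance of the $c$ i.i.d.\ per-edge values into a quadratic in $q$ whose coefficients give $A_j,B_j,C_j$, and invoke unbiasedness of the sample variance. Your explicit remark about the normalization convention (the sample variance is unbiased for the variance of a single per-edge draw, with the paper reading $\Var[f_j]$ in that single-sample sense) is exactly the same convention the paper uses implicitly, so there is no gap.
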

\ifextended
The proof can be found in Appendix~\ref{app:missingProofs}.
\else
The proof can be found in the extended version~\cite{triadExt}.
\fi

The above lemma tells us that the estimates $\widehat{V}_q(\estFunc_j) $ provide an unbiased approximation of the variance $\Var[\estFunc_j]$ for each function~$\estFunc_j$.
The next lemma tells us that such estimates are also a good approximation, for each value of $q\in[0,1/2]$.

\begin{lemma}\label{lemma:varEpsilon}
	There exist a small value $\varepsilon'>0$ such that $\varEst_q(\estFunc_j)\in[\Var[f_j]-\varepsilon', \Var[f_j]+\varepsilon']$ for each partition $j=1,\dots,k$ and value of $q\in[0,1/2]$, with high probability.  
\end{lemma}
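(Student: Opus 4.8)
The plan is to exploit that $\varEst_q(\estFunc_j)$ is, by construction, a quadratic polynomial in $q$ whose three coefficients do \emph{not} depend on $q$. Writing $A_j(e),B_j(e),C_j(e)$ for the per-edge quantities of Lemma~\ref{lemma:unbiasedVarEst} (whose explicit form, assembled from $|\triSetEdge{e}|$, $|\triSetEdge{e}|^2$, and indicator sums over $\neighborEdge{e}$, is recorded in Appendix~\ref{app:varianceEst}), set $\widehat{a}_j = \tfrac{m^2}{(c-1)|V_j|^2}\sum_{e\in\sampleSet}A_j(e)$ and define $\widehat{b}_j,\widehat{c}_j$ analogously, so that $\varEst_q(\estFunc_j) = \widehat{a}_j + \widehat{b}_j\, q + \widehat{c}_j\, q^2$. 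The true variance is likewise a quadratic $\Var[\estFunc_j] = a_j + b_j q + c_j q^2$ with fixed, graph-dependent coefficients. Since Lemma~\ref{lemma:unbiasedVarEst} gives $\expectation[\varEst_q(\estFunc_j)] = \Var[\estFunc_j]$ for \emph{every} $q\in[0,\sfrac{1}{2}]$, and two real polynomials of degree at most two that agree on an interval are identical, I obtain $\expectation[\widehat{a}_j]=a_j$, $\expectation[\widehat{b}_j]=b_j$, $\expectation[\widehat{c}_j]=c_j$. Consequently, for all $q\in[0,\sfrac{1}{2}]$ simultaneously,
\[
\bigl|\varEst_q(\estFunc_j)-\Var[\estFunc_j]\bigr| \;\le\; |\widehat{a}_j-a_j| + \tfrac12|\widehat{b}_j-b_j| + \tfrac14|\widehat{c}_j-c_j|\,,
\]
so it suffices to bound the three coefficient deviations, which no longer involve $q$; this is how the continuum of values of $q$ is disposed of essentially for free.

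The second step is a concentration argument for each coefficient. The bag $\sampleSet$ consists of $c\ge 2$ edges drawn independently and uniformly from $\E$, and each per-edge term $A_j(e),B_j(e),C_j(e)$ lies in a bounded range: $|\triSetEdge{e}|\le\min\{\nodeDeg{u},\nodeDeg{v}\}$ for $e=\{u,v\}$, $|\wedgNodeSetGeneral{v}|\ge 1$, and the weights $a_q(v,e)$ of~\eqref{eq:SingleNodeWeightEdge} are affine in these quantities. Hence each of $\widehat{a}_j,\widehat{b}_j,\widehat{c}_j$ is a rescaled empirical average of $c$ i.i.d.\ bounded random variables, so Hoeffding's inequality --- or the empirical Bernstein bound of Theorem~\ref{theo:empiricalBern}, if a variance-aware estimate is preferred --- controls each deviation by a quantity of order $\Lambda_j\sqrt{\log(k/\eta)/c}$, where $\Lambda_j$ absorbs the per-edge range together with the $m^2/|V_j|^2$ rescaling. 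A union bound over the $3k$ events (three coefficients for each of the $k$ buckets), combined with the triangle inequality above, then yields the claim with $\varepsilon'$ equal to $\tfrac74$ times the largest of these bounds over $j\in[k]$.

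The one genuinely delicate point is quantitative, not structural: the range factor $\Lambda_j$ contains terms scaling with $|\triSetEdge{e}|^2$, hence with a power of the degrees of the sampled edges, so $\varepsilon'$ is ``small'' in the sense claimed only once $c$ is taken large enough to beat these ranges; for the purely existential statement any $c\ge 2$ already yields a finite such $\varepsilon'$, and in practice a small $c$ suffices because the ranges are governed by $\min\{\nodeDeg{u},\nodeDeg{v}\}$ over the \emph{sampled} edges rather than by $d_{\max}$. I do not expect the probabilistic part to cause trouble; the real work is the bookkeeping of the exact expressions for $A_j(e),B_j(e),C_j(e)$ and of their ranges, carried out in Appendix~\ref{app:varianceEst}.
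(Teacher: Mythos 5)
Your proof is correct and, in one respect, does more than the paper's. The paper's argument is essentially a one-liner: for a fixed $q$, the estimator $\varEst_q(\estFunc_j)$ is unbiased (Lemma~\ref{lemma:unbiasedVarEst}) and bounded by some range constant $B_q$, so Hoeffding's inequality (Theorem~\ref{th:Hoeffding}) with a union bound over the $k$ buckets gives $\varepsilon' = B_q\sqrt{\log(2k/\eta')/(2|\sampleSet'|)}$; the guarantee is thus pointwise in $q$, with the range constant depending on $q$. You instead exploit that $\varEst_q(\estFunc_j)$ is a quadratic in $q$ with $q$-free coefficients, deduce coefficient-wise unbiasedness from the polynomial identity, concentrate the three coefficients separately, and union bound over $3k$ events; the triangle inequality with $q\le \sfrac{1}{2}$, $q^2\le \sfrac{1}{4}$ then yields a deviation bound holding \emph{simultaneously} for all $q\in[0,\sfrac{1}{2}]$, which is what the subsequent minimax optimization over $q$ actually needs and which the paper's pointwise statement leaves implicit. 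The one imprecision in your write-up --- shared by the paper's own proof --- is the claim that the coefficients are rescaled averages of i.i.d.\ terms: because the per-edge quantities are centered at the sample mean ($\hat{a}_j(e)=a_j(e)-\bar{a}/c$ in Algorithm~\ref{alg:fixQ}), the sums $A_j,B_j,C_j$ are not sums of independent terms; they are, however, bounded functions of the $c$ i.i.d.\ sampled edges with bounded differences, so McDiarmid's inequality (or rewriting the sample variance as a U-statistic) gives the same $\bigO\bigl(\Lambda_j\sqrt{\log(k/\eta)/c}\bigr)$ rate and the conclusion stands. Both routes produce an $\varepsilon'$ whose magnitude is governed by range constants that scale with the degrees of sampled edges through $|\triSetEdge{e}|$, a trade-off you correctly flag and the paper absorbs into $B_q$.
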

\ifextended
The proof can be found in Appendix~\ref{app:missingProofs}.
\else
The proof can be found in the extended version~\cite{triadExt}.
\fi

To find the optimal value of $q$, given that we have a good estimate of $\Var[f_j]$, for each $j\in[k]$, 
we solve the following convex problem, minimizing the largest estimated variance $\widehat{V}_q(f_j)$ 
over~$j\in[k]$.

\begin{problem}[Optimization of $q$]
	Let $A_j$, $B_j$ and $C_j$ be the coefficients contributing to estimate the variance $\varEst_q(f_j)$ then solving the following quadratic program yields $x^* = {q_{\mathtt{alg}}}$ such that $\varEst_{{q_{\mathtt{alg}}}}(f_j) \in \varEst_{q^*}(f_j)\pm \varepsilon'$ for small $\varepsilon'$, and controlled error probability, where $q^*$ is the optimal value $q$ minimizing the variance of functions $f_j, j=1,\dots,k$.
	\begin{eqnarray*}
  & \mbox{minimize }  & t  \\
  & \mbox{subject to } & t \ge A_j+B_jx+C_jx^2, ~\mbox{ for all }~ j=1, \dots, k,\\
	& \mbox{and }	& x \in [0,1/2]  \enspace. 
	\end{eqnarray*}
\end{problem}
Solving this quadratic problem with state-of-the-art solvers is efficient: 
$x$ is in $\mathbb{R}$,
the sample size $c$ used to compute the terms $A_j,B_j,C_j$ 
\ifextended
in Algorithm~\ref{alg:fixQ} 
\else
\fi
is a small constant,
the objective function and the constraints are convex, and 
in practice $k$ is small, 
yielding an overall efficient procedure (see Section~\ref{exps:rtAndParams}).
Note that in~\mainAlg we denote collectively with \texttt{Fixq} 
the procedure that computes both $\varEst_q(f_j)$, for $j\in [k]$, and 
identifies $x^* = {q_{\mathtt{alg}}}$.

\subsubsection{$\mathtt{UpperBounds}$}\label{subsec:upperbounds}
A detailed description of the $\mathtt{UpperBounds}$ subroutine, which requires linear time complexity in $m$, is reported
\ifextended
in~\Cref{appsubsec:upperbounds}.
\else
our extended version~\cite{triadExt}.
\fi

The next lemma shows that $\mathtt{UpperBounds}$ outputs an upper bound on the ranges of $f_j(e), e\in \E$, and $\zeta$ as from Proposition~\ref{theo:PdimBound}.

\begin{lemma}\label{lemma:upperBounds}
The output of $\mathtt{UpperBounds}$ is such that $\estFunc_j(e) \le R_j$ almost surely, 
for each $e\in \E$, and partition $\V_j, j=1,\dots,k$, 
where $\zeta$ corresponds to the pseudo-dimension bound from Proposition~\ref{theo:PdimBound}.
\end{lemma}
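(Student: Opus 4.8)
The plan is to establish the two claims separately: (a) $\estFunc_j(e)\le R_j$ almost surely for the quantity $R_j$ computed by $\mathtt{UpperBounds}$, and (b) the returned $\zeta$ equals the pseudo-dimension bound of Proposition~\ref{theo:PdimBound}. Claim (b) is essentially a definitional check: since $\mathtt{UpperBounds}$ by construction scans the edges, computes $\bucketColorSup$ as in Equation~\eqref{eq:boundTermPDIM}, and returns $\zeta = \floor{\log_2 \bucketColorSup}+1$, it suffices to invoke Proposition~\ref{theo:PdimBound} to conclude $\PD(\Fset)\le\zeta$. So the real work is claim (a).

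For claim (a), first I would recall from Lemma~\ref{lemma:unbiasedEdgePartition} that, conditioned on a single sampled edge $e$, the (normalized) contribution is
\[
\estFunc_j(e) = \frac{1}{|V_j|}\sum_{v\in V_j}\frac{a_q(v,e)}{|\wedgNodeSetGeneral{v}|}\cdot\frac{1}{p}
= \frac{m}{|V_j|}\left(\sum_{v\in V_j\cap e}\frac{q|\triSetEdge{e}|}{|\wedgNodeSetGeneral{v}|} + \sum_{v\in V_j\cap\neighborEdge{e}}\frac{1-2q}{|\wedgNodeSetGeneral{v}|}\right),
\]
using $p=1/m$ and the definition of $a_q$ in Equation~\eqref{eq:SingleNodeWeightEdge}. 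The key observations are: only the (at most two) endpoints of $e$ and the nodes in $\neighborEdge{e}$ receive nonzero weight; $|\triSetEdge{e}|\le \min\{\nodeDeg{u},\nodeDeg{v}\}-1$ for $e=\{u,v\}$ (a triangle on $e$ is determined by a common neighbor); each $|\wedgNodeSetGeneral{v}|$ is a fixed positive integer readily bounded below by $1$ (or, more tightly, computed exactly); and the number of buckets that can appear in the sum is at most $\bucketColorSup$, or more precisely bounded by the local quantities $\bucketColor_z$ for $z$ the lower-degree endpoint of $e$. Putting these together, $\mathtt{UpperBounds}$ can take a maximum over all edges $e$ (and over which bucket plays the role of $V_j$) of the quantity above, yielding a valid, finite $R_j$ for each $j$; I would state that this maximization is exactly what the subroutine computes and that it runs in $\bigO(m)$ time since it is a single pass over edges, each processed in time proportional to $\min\{\nodeDeg u,\nodeDeg v\}$ (amortizing to $\bigO(m^{3/2})$ worst case, or $\bigO(m)$ with the cheaper degree-only bound on $|\triSetEdge e|$, depending on which version $\mathtt{UpperBounds}$ implements).

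The main obstacle is bookkeeping rather than conceptual: one must be careful that the bound $R_j$ is tight enough to be useful (a loose bound such as $R_j \le m\,(d_{\max}+1)\,\min\{k,d_{\max}+1\}$ is easy but would hurt the sample-complexity constants), so I would present both the coarse bound and note that the implemented subroutine computes the refined per-edge maximum. A second minor subtlety is handling the two regimes $\clustMetric=\clustCoeff$ versus $\clustMetric=\closureCoeff$ uniformly through the notation $|\wedgNodeSetGeneral{v}|$, and ensuring that edges $e$ with $|\triSetEdge e|=0$ contribute a well-defined (finite, possibly zero) term. Once these are dispatched, the statement $\estFunc_j(e)\le R_j$ almost surely follows because $e$ ranges over the finite set $\E$ and $R_j$ is the maximum of $\estFunc_j(e)$ over that set by definition of the subroutine.
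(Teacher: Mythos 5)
Your plan is correct and follows essentially the same route as the paper's proof: bound $\estFunc_j(e)$ edge-by-edge through the lower-degree endpoint $z=\arg\min\{\nodeDeg{u},\nodeDeg{v}\}$ (using $|\triSetEdge{e}|\le \nodeDeg{z}$ and $\neighborEdge{e}\subseteq\neighborNode{z}$), take a maximum over the finite edge set to get $R_j$, and obtain $\zeta$ directly from Proposition~\ref{theo:PdimBound}. One caveat on your closing sentence: the implemented $\mathtt{UpperBounds}$ does \emph{not} maximize the exact values $\estFunc_j(e)$ (which would require enumerating $\triSetEdge{e}$ per edge); it maximizes the relaxed per-edge quantity obtained by substituting $q\nodeDeg{z}$ for $q|\triSetEdge{e}|$ and summing $(1-2q)/|\wedgNodeSetGeneral{w}|$ over $w\in\neighborNode{z}\cap V_j$ --- the ``degree-only'' variant you mention as a possibility --- so the correct conclusion is $R_j\ge\max_e(\text{relaxed bound})\ge \estFunc_j(e)$, which is exactly what the lemma requires and what keeps the subroutine near-linear.
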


\subsection{Practical optimizations}\label{subsec:algPractOpt}
In this section we introduce some practical optimizations to improve the performances of~\mainAlg, 
with minimal complexity overhead.

\subsubsection{Improved empirical bounds}
We can further refine the empirical bounds for terms $\widehat{\varepsilon}_j$, for $j\in[k]$ (in line~\ref{mainAlgLine:computeEmpiricalBounds}) computed by~\mainAlg by leveraging the following result.
\begin{theorem}[Predictable Plugin-Empirical Bernstein Confidence Interval (PrPl-EB-CI)~\cite{WaudbySmith2023Betting, Shekhar2023OptimalBetting}]
\label{theo:predictablePlugin}
Let $X_1,\dots,X_s$ be $s$ i.i.d.\ random variables such that for all $i=1,\dots,s$, 
it is $\expectation[X_i] = \mu$ and $\Prob[X_i\in [a,b]]=1$, with $R=|b-a|$, and let
\begin{align*}
		&\omega(\lambda) = \frac{-\log(1-R \lambda) -  R\lambda}{4} ~\text{ for } \lambda\in[0,\sfrac{1}{R}), 
		\quad \widehat{\mu}_j = \frac{1}{j}\sum_{i=1}^j X_i, \\
		&\widehat{\sigma}_j = \frac{R^2/4+\sum_{i=1}^j (X_i-\widehat{\mu}_{i-1})^2}{j}, 
		\quad  {\lambda_{j,s}} = \min\left\{\sqrt{\frac{2\log(2/\eta)}{s\widehat{\sigma}_{j-1}}}, \frac{1}{2R} \right\}
	\end{align*}
	for $j=1,\dots, s$, where $\widehat{\mu}_0 = 0 $ and $\widehat{\sigma}_0 = R^2/4$. 
	Then it holds that 
	\begin{equation*}
		\left| \frac{\sum_{i=1}^s \lambda_{i,s} X_i}{\sum_{i=1}^s \lambda_{i,s}} - \mu\right|  \le  \frac{\log(2/\eta)+(2/R)^2 \sum_{i=1}^s [\omega(\lambda_{i,s}) (X_i-\widehat{\mu}_{i-1})^2]}{\sum_{i=1}^s \lambda_{i,s}}
	\end{equation*}
	with probability at least $1-\eta$.
\end{theorem}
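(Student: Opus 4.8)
\emph{Proof plan.} Since this statement is an instance of the game-theoretic ``testing-by-betting'' construction of~\cite{WaudbySmith2023Betting,Shekhar2023OptimalBetting}, the plan is to reproduce that argument. Fix a candidate value $m$ for the common mean of the $X_i$ and let $(\mathcal{F}_i)_{i\ge 0}$ be the natural filtration; note that $\lambda_{i,s}$ depends only on $X_1,\dots,X_{i-1}$ (through $\widehat{\mu}_{i-1},\widehat{\sigma}_{i-1}$), hence is $\mathcal{F}_{i-1}$-measurable (predictable). First I would introduce the two wealth processes
\[
M_s^{+}(m)=\prod_{i=1}^{s}\bigl(1+\lambda_{i,s}(X_i-m)\bigr),
\qquad
M_s^{-}(m)=\prod_{i=1}^{s}\bigl(1-\lambda_{i,s}(X_i-m)\bigr).
\]
Because $m\in[a,b]$ gives $|X_i-m|\le R$ and $\lambda_{i,s}\le \tfrac{1}{2R}$, each factor lies in $[\tfrac12,\tfrac32]$, so both processes are nonnegative. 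I would then verify that, under the hypothesis $\mu=m$, each is a martingale with initial value $1$: conditioning on $\mathcal{F}_{i-1}$ and using predictability of $\lambda_{i,s}$ together with $\expectation[X_i\mid\mathcal{F}_{i-1}]=m$ gives $\expectation[1\pm\lambda_{i,s}(X_i-m)\mid\mathcal{F}_{i-1}]=1$.

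Next I would apply Ville's maximal inequality to each process at level $\eta/2$, so that with probability at least $1-\eta$ we have $M_s^{+}(\mu)<2/\eta$ \emph{and} $M_s^{-}(\mu)<2/\eta$ for all $s$ simultaneously. Taking logarithms of $M_s^{+}(\mu)<2/\eta$ gives $\sum_{i=1}^{s}\log\bigl(1+\lambda_{i,s}(X_i-\mu)\bigr)<\log(2/\eta)$. The crux is a one-sided second-order bound: for $|y|\le\tfrac12$ one has $\log(1+y)\ge y-\kappa(y)\,y^2$, and after re-centering the quadratic remainder at the predictable plug-in $\widehat{\mu}_{i-1}$ (rather than at $\mu$) this remainder can be absorbed into a term of the form $(2/R)^2\,\omega(\lambda_{i,s})(X_i-\widehat{\mu}_{i-1})^2$ with $\omega(\lambda)=\tfrac14\bigl(-\log(1-R\lambda)-R\lambda\bigr)$. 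Substituting this bound, writing the linear part as $\sum_i\lambda_{i,s}(X_i-\mu)=\bigl(\sum_i\lambda_{i,s}\bigr)\bigl(\tfrac{\sum_i\lambda_{i,s}X_i}{\sum_i\lambda_{i,s}}-\mu\bigr)$, and dividing by $\sum_i\lambda_{i,s}>0$ yields the upper side of the stated interval; applying the identical steps to $M_s^{-}(\mu)<2/\eta$ yields the matching lower side, and combining the two gives the two-sided bound with $\log(2/\eta)$.

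The step I expect to be the main obstacle is exactly this coefficient-matching: proving the elementary inequality that controls $\log\bigl(1+\lambda_{i,s}(X_i-\mu)\bigr)$ by its linear term plus $(2/R)^2\,\omega(\lambda_{i,s})(X_i-\widehat{\mu}_{i-1})^2$, and in particular bounding the error incurred by replacing $\mu$ with the predictable estimate $\widehat{\mu}_{i-1}$ inside the quadratic --- this is where the clipping $\lambda_{i,s}\le 1/(2R)$ and the initialization $\widehat{\sigma}_0=R^2/4$ (equivalently the ``$R^2/4$'' slack added to $\sum_i(X_i-\widehat{\mu}_{i-1})^2$) are needed to keep the argument valid uniformly in $s$. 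The remaining ingredients --- nonnegativity and the martingale property of the wealth processes, Ville's inequality, the union bound over the two tails, and the final algebraic rearrangement --- are routine.
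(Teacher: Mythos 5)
First, note that the paper itself does not prove \Cref{theo:predictablePlugin}: it is imported verbatim from~\cite{WaudbySmith2023Betting, Shekhar2023OptimalBetting}, so the only meaningful comparison is with the standard betting-based derivation in those references, which is indeed the argument you are trying to reproduce. Your overall architecture (predictable $\lambda_{i,s}$, a nonnegative wealth process for each tail, Ville/Markov at level $\eta/2$, a Fan-type second-order lower bound on the logarithm, union bound, divide by $\sum_i\lambda_{i,s}$) is the correct skeleton.

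However, there is a genuine gap at exactly the step you flag as the crux, and it is not merely an obstacle to be worked around: with your choice of wealth process $M_s^{\pm}(\mu)=\prod_i\bigl(1\pm\lambda_{i,s}(X_i-\mu)\bigr)$, the Fan--Grama--Liu inequality $\log(1+\lambda\xi)\ge \lambda\xi-(2/R)^2\omega(\lambda)\,\xi^2$ (valid for $\xi\ge -R$, $R\lambda<1$) necessarily produces a quadratic remainder centered at $\mu$, namely $(2/R)^2\omega(\lambda_{i,s})(X_i-\mu)^2$, and there is no valid ``re-centering'' of this remainder at $\widehat{\mu}_{i-1}$: pointwise $(X_i-\mu)^2=(X_i-\widehat{\mu}_{i-1})^2+2(X_i-\widehat{\mu}_{i-1})(\widehat{\mu}_{i-1}-\mu)+(\widehat{\mu}_{i-1}-\mu)^2$, the extra terms have no sign, can dominate (e.g.\ when $\widehat{\mu}_{i-1}$ happens to equal $X_i$ the plug-in quadratic vanishes while $(X_i-\mu)^2>0$), and they cannot be absorbed into the stated constants. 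Your route therefore proves the ``oracle'' Bernstein interval with $(X_i-\mu)^2$, not the empirical one claimed. The correct fix, which is what the cited proofs do, is to center the bet at the predictable plug-in from the start: use the \emph{supermartingale}
\[
M_s^{+}=\prod_{i=1}^{s} e^{\lambda_{i,s}(\widehat{\mu}_{i-1}-\mu)}\bigl(1+\lambda_{i,s}(X_i-\widehat{\mu}_{i-1})\bigr),
\]
which is nonnegative (since $|X_i-\widehat{\mu}_{i-1}|\le R$ and $\lambda_{i,s}\le 1/(2R)$) and satisfies $\expectation[\,\cdot\mid\mathcal{F}_{i-1}]\le$ previous value because $1+\lambda_{i,s}(\mu-\widehat{\mu}_{i-1})\le e^{\lambda_{i,s}(\mu-\widehat{\mu}_{i-1})}$; equivalently, show directly that $\prod_i\exp\{\lambda_{i,s}(X_i-\mu)-(2/R)^2\omega(\lambda_{i,s})(X_i-\widehat{\mu}_{i-1})^2\}$ is a nonnegative supermartingale by applying Fan's inequality with $\xi_i=X_i-\widehat{\mu}_{i-1}$ and then $1+z\le e^z$. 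With this change, Ville/Markov at level $\eta/2$ per side and your own algebra (the linear term now telescopes to $\sum_i\lambda_{i,s}(X_i-\mu)$, since the tilt $\lambda_{i,s}(\widehat{\mu}_{i-1}-\mu)$ recombines with $\lambda_{i,s}(X_i-\widehat{\mu}_{i-1})$) yield exactly the stated bound with the constants $(2/R)^2\omega(\lambda_{i,s})$ and $\log(2/\eta)$. A minor additional correction: the initialization $\widehat{\sigma}_0=R^2/4$ is not needed for validity --- the interval is valid for \emph{any} predictable $\lambda_{i,s}\in[0,1/(2R)]$ --- it only regularizes the variance estimate used to choose $\lambda_{i,s}$ and thus affects tightness, not correctness.
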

While similar in spirit to~Theorem~\ref{theo:empiricalBern}, the above bound often yields a sharper empirical bound on the values $\widehat{\varepsilon}_j$. 
Note that 
the above estimator yields a slightly more complicated formulation, 
i.e., the output of~\mainAlg corresponds to $ f_j = \frac{\sum_{i=1}^s \lambda_{i,s} f_j(e_i)}{\sum_{i=1}^s \lambda_{i,s}}$ where $ f_j(e_i)$ is the estimate associated to bucket $V_j$, for $j\in [k]$ evaluated for the $i$-th sample from $\sampleSet$.

In addition, the stopping condition is evaluated using the bounds $\frac{\log(2/\eta)+(2/R)^2 \sum_{i=1}^s [\omega(\lambda_{i,s}) (X_i-\widehat{\mu}_{i-1})^2]}{\sum_{i=1}^s \lambda_{i,s}} = \widehat{\varepsilon}_j \le \varepsilon_j$, for $j\in[k]$, and $f_j = \frac{1}{s}\sum f_j(e_i)$ if $s\ge s_{\max}$. 
In Section~\ref{sec:exps}, we leverage~\Cref{theo:predictablePlugin}, but for ease of notation and presentation we introduced~\mainAlg 
with the results of Theorem~\ref{theo:empiricalBern}.

\subsubsection{Fixed sample size variant}\label{subsec:algFixed}
In this section we briefly describe a variant that we call~\algFixedSS, 
which leverages a \emph{fixed} sample size schema. 
That is, in many applications, concentration bounds (e.g., Theorem~\ref{theo:empiricalBern}), 
even if tight and empirical, can still  be conservative. 
Hence we modify~\mainAlg to leverage the novel estimators discussed in Lemma~\ref{lemma:unbiasedOutput}, 
and the adaptive procedure to select the value of $q$ as described in Section~\ref{subsec:varianceOpt}, 
but we modify the stopping condition of~\mainAlg. 
That is, we only require~\algFixedSS to process \emph{at most} a number $s\ge 1$ samples as provided in input by the user. 
This is of interest in many applications where strictly sublinear time complexity is required.
In fact, such modification strictly enforces a small number of samples to be processed. 
We show that such variant outputs highly accurate estimates 
\emph{of each} bucket,
by processing only 1\textperthousand\ edges on most graphs (see Section~\ref{exps:accuracyAndEff}).

\subsubsection{Filtering very small degree-nodes}\label{subsec:filtersmall}

To enable better performance for~\mainAlg, we process the graph $\G$ by removing the nodes with \emph{small} degree obtaining a graph $\G'$ where all the nodes' degrees, in $\G$, are above a certain threshold. 
This step decreases the variance of the estimates computed by~\mainAlg, as small degree nodes can have high values for their metric $\clustMetric$ (i.e., close to $1$), but sampling may perform poorly in approximating the values $\clustMetric$, when computing the estimate $f_j$, for $j\in[k]$, similarly to what noted by~\citet{Kutzkov2013StreamCC,Lima2022ClusteringLatin}.
The key challenge is to bound the overall total work to $\Theta(n)$ time complexity, making such processing negligible, 
and retaining the correct information to recover the solution to Problem~\ref{prob:partionEstimationAdditive} on $\G$.

Our approach identifies a threshold over the node degree distribution, namely $\beta$, for which computing all the nodes' triangles under such threshold requires at most linear time in $n$, i.e., bounded by $Cn$ for a small fixed constant $C$. 
Obtaining $\triSetNode{v}$ for a node $v\in \V$ with degree $\nodeDeg{v}$ requires at most $\bigO(\nodeDeg{v}^2)$ time. Therefore we first identify the value $\beta$ such that
$\beta = \max_{i= 1,\dots,d_{\max}}$ such that $\sum_{j=1}^i j^2 D_j \le Cn, $
where $D_i$ denotes the number of nodes in $\G$ such that their degree is exactly $i$, that is $D_i =|\{v \in V : \nodeDeg{v} = i\}|$. 
Clearly, $\beta$ can be computed in linear time $\Theta(n)$ 
by iterating all nodes, assuming constant access to their degrees $\nodeDeg{v}$. 
Given the threshold~$\beta$, for each node $v\in \V$ with degree less than $\beta$ we compute exactly the triangles $\triSetNode{v}$, 
keeping track for all the nodes $v\in \V'$ that have degree higher than $\beta$ of the triangles containing at least one removed node. 
We show that our approach is efficient and yields no additional estimation error for~\mainAlg in 
\ifextended
Appendix~\ref{app:filterSmall}.
\else
the extended version of the manuscript~\cite{triadExt}.
\fi 

\subsection{Time and memory complexity}\label{subsec:timeandmem}
\para{Time complexity.}
We recall that the filtering step requires $\bigO(n)$ total work, and the routine \texttt{UpperBounds} requires $\bigO(km)$, while \texttt{Fixq} requires $\bigO(d_{\max} + \mathtt{T}_{\text{QP}})$. 
Note that $\mathtt{T}_{\text{QP}}$, the complexity of solving the convex minimax problem, 
is negligible as the optimization is over $k$ total convex non-integer constraints, and for our problem formulation  we consider $k$ as a (possibly large) constant.
 
Finally, the largest time complexity to perform the adaptive loop over 
$T=s_{\max}$ total iterations of~\mainAlg, 
can require up to $\bigO(T(d_{\max} + T))$ time since:
(1) each edge can be incident to $d_{\max}$ nodes; 
(2) the additional $T^2$ term is from computing the empirical variance at each iteration.\footnote{This complexity can be reduced to $T$ by relying on the wimpy variance.} 
Hence the \emph{worst} case complexity is $\bigO(R^2\varepsilon^{-2}(\zeta + \log1/\eta) (d_{\max} + T) + m)$. 
Note that such analysis is extremely pessimistic, in fact,
in practice the complexity of~\mainAlg is instead close to $\bigO(R\varepsilon^{-1}(\log k/\eta) d_{\max} + km)$, as we often observe~\mainAlg to terminate after a small number of iterations of its main loop, implying that the processed edges are at most $\bigO(R\varepsilon^{-1}(\log k/\eta))$. 
Hence, when $\bigO(R\varepsilon^{-1}(\log k/\eta))$ is a small fraction $\omega\ll 1$ of $m$ (e.g., 1\% of $m$) then the total complexity is bounded as $\bigO(m(d_{\max}\omega + k))$, capturing the efficiency of~\mainAlg.

\smallskip
\noindent
\para{Memory complexity.} The memory complexity of~\mainAlg is comparable to existing state-of-the-art methods~\cite{Seshadhri2014} for estimating the local clustering coefficient, requiring $\bigO(m)$ memory. 
In more detail,~\mainAlg requires memory $\bigO(m+|\sampleSet|k + kn)$, 
where $|\sampleSet|$ is the size of the samples processed by~\mainAlg. 
Clearly $|\sampleSet|=s$ for~\algFixedSS. 
When processing very large graphs with limited resources such complexity can be prohibitive, 
hence in such cases, we should rely on a (semi-)streaming or distributed (e.g., MPC) implementation of \mainAlg~\cite{Becchetti2010,Kolda2014}, an interesting future direction.

\subsection{Adaptive guarantees}\label{subsec:adaptivity}

We briefly discuss the main advantages and limitations of~\mainAlg
in the adaptive case. 
We observe that~\mainAlg has a significant advantage to solve~\Cref{prob:partionEstimationAdditive}. Given a input graph~$G$ and a partition $\nodesetPartition$,
\mainAlg selects its estimators $f_j$, for $j\in[k]$, by properly fixing the parameter $q$ \emph{adaptive\-ly}. 
That is, the parameter $q$ is optimized by~\mainAlg directly on $G$, leading to estimators $f_j$, for $j\in[k]$, 
with small variance as captured by our theoretical results in~\Cref{lemma:varEpsilon} and in practice in~\Cref{exps:accuracyAndEff}. 
\mainAlg also adapts the number of processed samples 
($|\sampleSet|$) by leveraging the results from~\Cref{theo:predictablePlugin}. 
Such bounds involve the empirical variance $\widehat{\sigma}_j$, for $j\in[k]$, 
optimized by~\mainAlg through the parameter $q$---and the upper bounds $R_j$ 
on the range of $f_j(e)$, $j\in [k]$, $e\in E$. 
Where each $R_j$ depends on the node distribution over $\nodesetPartition$ and $G$.
Obtaining a non-trivial characterization of $R_j$ is extremely challenging, but we observe the following. 
When the values $R_j$ are large in practice the obtained empirical bounds $\widehat{\varepsilon}_j$ 
may be loose with respect to the actual deviations $|f_j-\clustGenPartition_j|$.
Instead, when $R_j \in \bigO(1)$ then~\mainAlg processes  $\widetilde{\bigO}(1/\varepsilon)$ samples (a significant improvement over $\widetilde{\bigO}(1/\varepsilon^2)$).\footnote{In $\widetilde{\bigO}(\cdot)$ we ignore logarithmic factors.} 
This is an inherent trade-off,
when $R\in \bigO(1)$ then the runtime of~\mainAlg is almost constant providing tight bounds $\widehat{\varepsilon}_j \le |f_i-\clustGenPartition_i|$, which depend on $G$ and $\nodesetPartition$.

\section{Experimental evaluation}
\label{sec:exps}

In this section we present our extensive experimental evaluation.
We start by first describing the setup, 
and then we discuss the results of our research questions.

\begin{table}[t]
\caption{Datasets used in the experimental evaluation. Statistics show: $n$ the number of nodes, $m$ the number of edges, $d_{\max}$ the maximum degree, $\bar{\clustCoeff}$ (resp. $\bar{\closureCoeff}$) the average local clustering (resp. closure) coefficient over all nodes.}
\label{tab:datasets}
\scalebox{1}{
\begin{tabular}{lrrlll}
	\toprule
	\textbf{Dataset} & $n$ & $m$ & $d_{\max}$ & $\bar{\clustCoeff}$&$\bar{\closureCoeff}$\\
	\midrule
	fb-CMU & $6.6 \, K$ & $0.3 \, M$ & $8 \cdot 10^{2}$ & 0.27 & 0.12\\
	SP & $1.6 \, M$ & $22 \, M$ & $1 \cdot 10^{4}$ & 0.11 & 0.03\\
	FR & $12 \, M$ & $72 \, M$ & $3 \cdot 10^{3}$ & 0.08 & 0.01\\
	OR & $3.1 \, M$ & $0.1 \, B$ & $3 \cdot 10^{4}$ & 0.17 & 0.06\\
	LJ & $4.8 \, M$ & $43 \, M$ & $2 \cdot 10^{4}$ & 0.27 & 0.08\\
	BM & $43 \, K$ & $14 \, M$ & $8 \cdot 10^{3}$ & 0.51 & 0.19\\
	G500 & $4.6 \, M$ & $0.1 \, B$ & $3 \cdot 10^{5}$ & 0.06 & 0.0\\
	GP & $0.1 \, M $ & $12 \, M$ & $2 \cdot 10^{4}$ & 0.49 & 0.05\\
	PT & $43 \, K$ & $44 \, K$ & $2 \cdot 10^{1}$ & 0.12 & 0.1\\
	HW & $1.1 \, M$ & $56 \, M$ & $1 \cdot 10^{4}$ & 0.77 & 0.16\\
	HG & $0.5 \, M$ & $13 \, M$ & $5 \cdot 10^{4}$ & 0.19 & 0.01\\
	BNH & $0,7 \, M$ & $0.2 \, B$ & $2 \cdot 10^{4}$ & 0.5 & 0.3\\
	TW & $0.2 \, M$ & $6.8 \, M$ & $4 \cdot 10^{4}$ & 0.16 & 0.01\\
	\bottomrule
	\end{tabular}}
\end{table}

\begin{figure*}
	\addtolength{\tabcolsep}{-0.6em}
	\begin{tabular}{ll}
		\includegraphics[width=1.06\columnwidth]{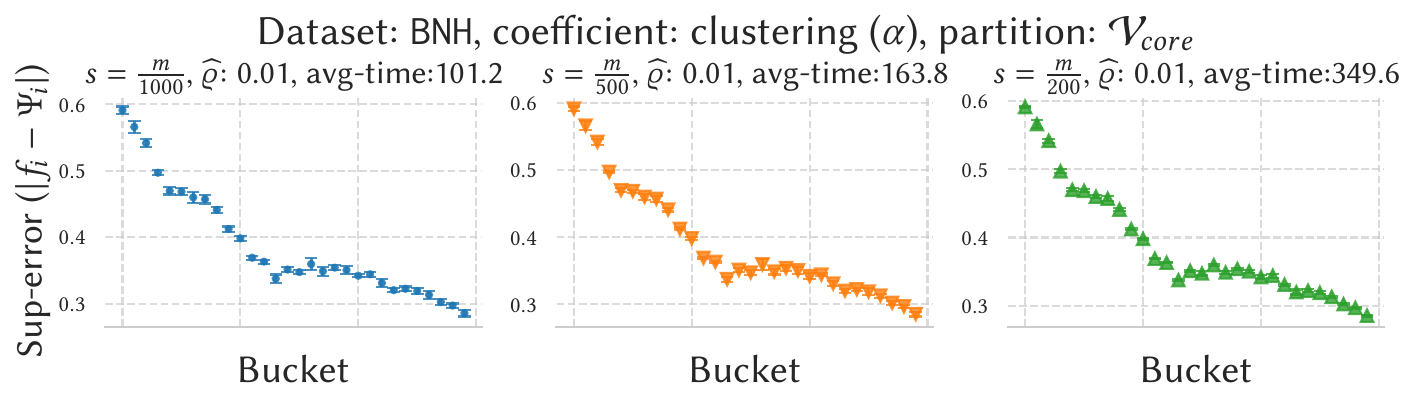}&
		\includegraphics[width=1.06\columnwidth]{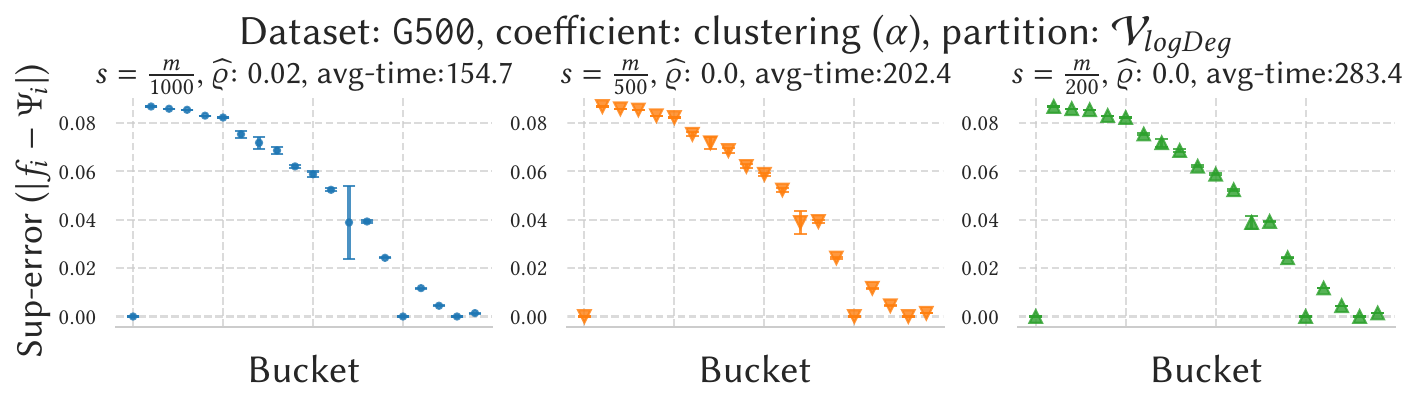}\\
		\includegraphics[width=1.06\columnwidth]{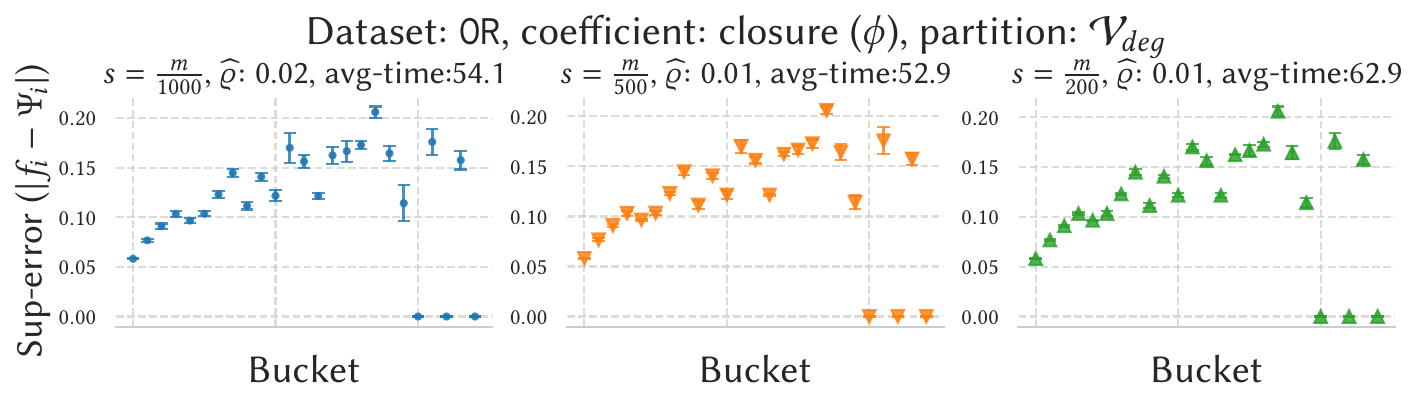}&
		\includegraphics[width=1.06\columnwidth]{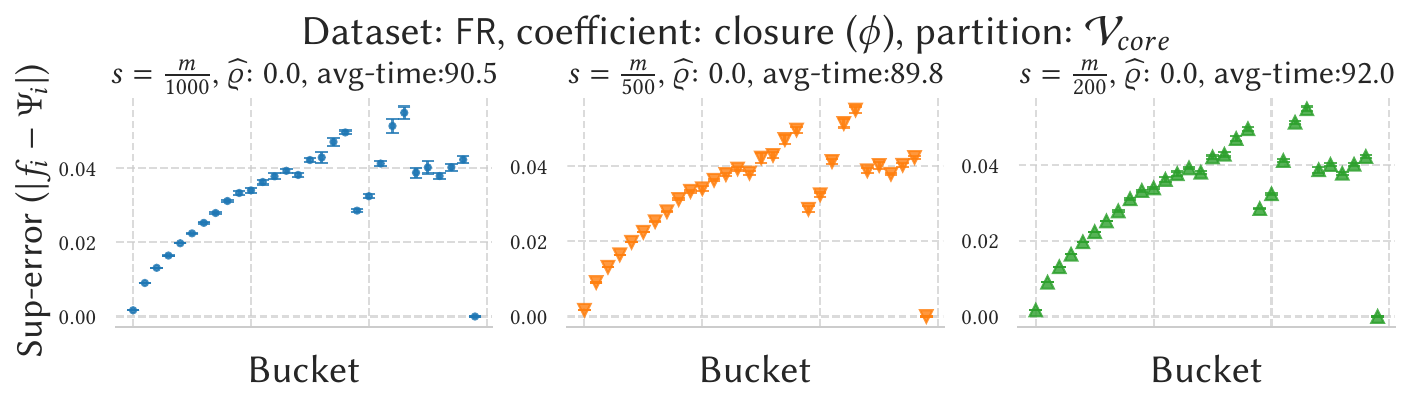}\\
	\end{tabular}
	\caption{Value $\clustGenPartition_i$ and its maximum error ($|f_i-\clustGenPartition_i|$) over five runs. We also report, the supremum error $\hat{\varrho}=\sup_{i\in[k]} |f_i-\clustGenPartition_i|$ and the average runtime over five independent runs across all buckets, for varying sample size ($s\in\{1,2,5\}$\textperthousand\ of the total edges $m$).}
	\label{fig:varyingSS}
\end{figure*}

\subsection{Setting}\label{exps:setting}

\noindent
\textbf{Implementation details.} We implemented our algorithms in C++20, and compiled it under gcc 9.4, 
with optimization flags. 
To solve the variance optimization problem (Section~\ref{subsec:varianceOpt}) we used Gurobi 11 under academic license. 
All the experiments were performed on a 72-core machine Intel Xeon Gold, running Ubuntu 20.04. 
The code to reproduce our results is publicly available.\footnote{\CODEURL.} 

\vspace{1mm}
\noindent 
\textbf{Datasets and partitions into buckets.} 
For our experiments we consider multiple datasets available online, from medium to large sized, which are reported together with a summary of their statistics in Table~\ref{tab:datasets}. 
\ifextended
More details on such datasets, are in Appendix~\ref{appsec:datasets}.
\else
Details on the datasets including URLs are available in our extended version~\cite{triadExt}.
\fi

For each dataset we considered three main different node partitions $\nodesetPartition$:
($i$)~$\mathcal{V}_{core}$ is obtained by grouping nodes with similar core-number over a total of $k=30$ buckets; 
($ii$)~$\mathcal{V}_{deg}$ is obtained by grouping together nodes with similar degrees over a total of $k=25$ buckets; 
($iii$)~$\mathcal{V}_{logDeg}$ assigns each node to a bucket as function of its degree~\cite{Kolda2014}, i.e.,
a node with degree $d$ is assigned to the bucket with index $\floor{\log(1+(d-2)/\log 2)} + 2$, hence it holds $k=\bigO(\log n)$. 

Note that all the above partition schemes place nodes with similar degree in the same bucket.
This is often the case in practical applications, 
where nodes with similar degree are associated to similar structural functions~\cite{Kolda2014}.

In addition, To test a general input to~\Cref{prob:partionEstimationAdditive} we also consider two other partitions $\nodesetPartition$. Partition $\mathcal{V}_{\mathit{rnd}}$ assigns nodes at random into $k=30$ buckets and $\nodesetPartition_{\mathit{met}}$ is obtained by clustering with $k=10$ each graph using METIS~\cite{Karypis1998metis}.
We report the results obtained on these partitions in
\ifextended
\Cref{appsec:missingSS}
\else
our extended version
\fi
as they follow similar trends to the ones discussed below.
We do not discuss the memory usage as it is similar to all algorithms (\mainAlg uses slightly more space compared to baselines as from our analysis in~\Cref{subsec:timeandmem}).

Finally we use \mainAlg-$\clustCoeff$ (resp.\  \mainAlg-$\closureCoeff$) to denote \mainAlg when used to approximate the average local clustering (resp.\ local closure) coefficient. 

\vspace{1mm}
\noindent 
\textbf{Research questions.} Our experimental evaluation investigated the following research questions.

\vspace{1mm}
\noindent 
\textbf{Q1.} 
How~\mainAlg performs in terms of accuracy and efficiency when varying 
its sample size $s$? 
(Section~\ref{exps:accuracyAndEff})

\vspace{1mm}
\noindent 
\textbf{Q2.} 
How tight are the adaptive guarantees provided by~\mainAlg, 
compared to state-of-the-art approaches? 
(Section~\ref{exps:sota})

\vspace{1mm}
\noindent 
\textbf{Q3.} 
What is the runtime of~\mainAlg; 
what is the impact of parameter~$q$, and 
what is the quality of our optimization of $q$? 
(Section~\ref{exps:rtAndParams})

\vspace{1mm}
\noindent 
\textbf{Q4.} 
Which patterns are captured by triadic coefficients and~\mainAlg over collaboration networks?
(Section~\ref{subsec:caseStudy})

\subsection{Accuracy of estimates and efficiency}\label{exps:accuracyAndEff}

In this section we answer \textbf{Q1}, i.e., we study~\mainAlg's accuracy and efficiency by varying the sample size $s$. 
This setting is fundamental
to show that~\mainAlg 
is both efficient and provides extremely accurate estimates by processing a small number of edges.

\vspace{1mm}
\noindent 
\textbf{Setting.} We consider~\algFixedSS, which retains the adaptive selection of the parameter $q$ 
while terminating~\mainAlg's main loop after processing exactly $s$ samples (see Section~\ref{subsec:algFixed}). 
To set $s$ we considered three different values: $s=1$\textperthousand, $s=2$\textperthousand, and $s=5$\textperthousand\ of the total edges $m$ of each dataset. 
We then run each configuration (dataset, value of $s$, and bucket partition) for five independent runs. 
We then measure for each bucket the supremum error $|f_i-\clustGenPartition_i|$ over the five runs, and $\widehat{\varrho}$ the supremum of such errors across all buckets of~$\nodesetPartition$. In addition, we measure the average runtime to process $s$ samples over the five runs.
Some representative results are presented in Figure~\ref{fig:varyingSS}.

\vspace{1mm}
\noindent 
\textbf{Results.} 
First, over almost all configurations tested we note that \mainAlg's estimates 
are very accurate and tightly concentrated for each bucket of the various partitions.
This is reflected by the supremum error $\widehat{\varrho}$, which is small and almost negligible even for very small sample sizes $s$ such $s$=1\textperthousand.  
This holds in particular for datasets \texttt{BNH} and \texttt{FR}, 
while~\mainAlg requires a slightly higher sample size (i.e., $s$=2\textperthousand) to provide extremely accurate estimates for datasets \texttt{G500} and \texttt{OR}. 
Note that the supremum error with a sample size of $s$=2\textperthousand$\cdot m$ tends to 0 on the considered configurations on all datasets. 
This supports the fact that \mainAlg requires only a very small number of samples to obtain highly accurate estimates for Problem~\ref{prob:partionEstimationAdditive}.

In addition, 
\mainAlg's runtime is limited by at most a few hundred of seconds on very large datasets, 
yielding estimates almost comparable to the exact unknown values,
showing that~\mainAlg is both efficient and highly accurate on both triadic coefficients.
\ifextended
Further results are presented in Appendix~\ref{appsec:missingSS}
\else
We report additional results under this setting in our extended version.
\fi

\vspace{1mm}
\noindent 
\textbf{Summary.} A very small sample size (of 1\textperthousand\ total edges) is often sufficient to obtain highly accurate estimates for~\mainAlg, \emph{simultaneously} over all buckets and different partitions for both the average local triadic coefficients, which is remarkable and extremely useful for highly-scalable network analysis.

\subsection{Comparison with state-of-the-art}\label{exps:sota}
\begin{figure*}
	\includegraphics[width=1.2\columnwidth]{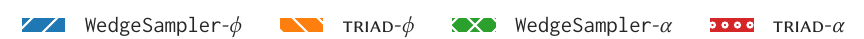}
	\addtolength{\tabcolsep}{-0.6em}
	\begin{tabular}{llllll}
	\includegraphics[width=0.35\columnwidth]{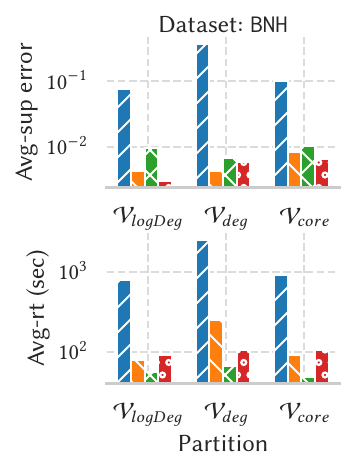} & \includegraphics[width=0.35\columnwidth]{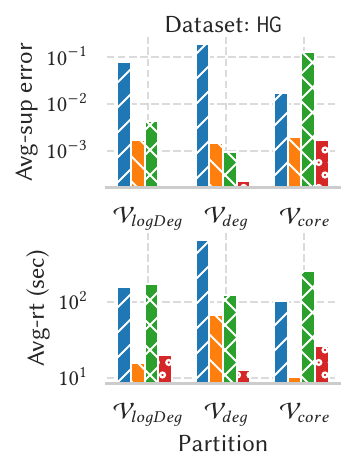} & \includegraphics[width=0.35\columnwidth]{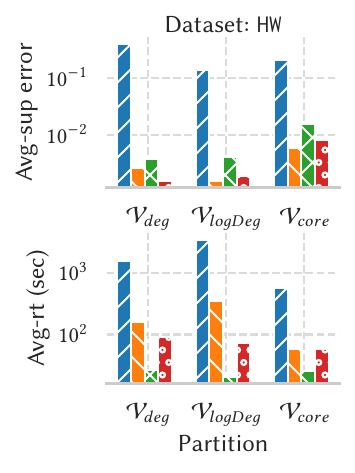} &
	\includegraphics[width=0.35\columnwidth]{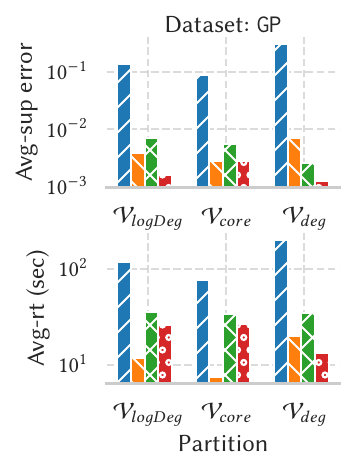}
	\includegraphics[width=0.35\columnwidth]{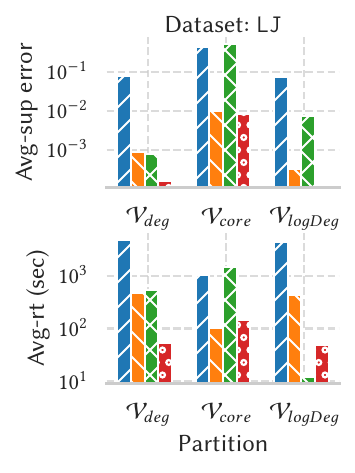} & \includegraphics[width=0.35\columnwidth]{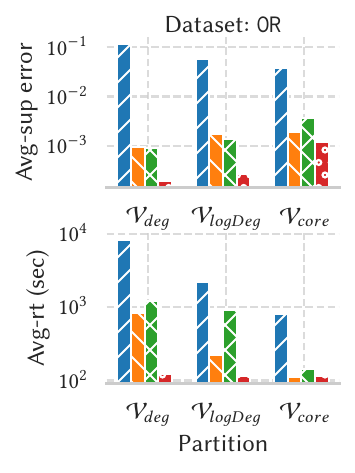}
	\end{tabular}
	\caption{Comparison of~\mainAlg and the baselines~\baseFull. For each dataset we show, (top plot): the average supremum error over all buckets over the various runs. (bottom): average runtime to perform an execution.}
	\label{fig:expsSota}
\end{figure*}

In this section we address \textbf{Q2}, i.e., we evaluate~\mainAlg and its adaptive guarantees with respect to existing state-of-the-art approaches.

\vspace{1mm}
\noindent 
\textbf{Setting.} We consider the state-of-the-art approach to approximate the local clustering coefficient values~\cite{Seshadhri2014} (see Section~\ref{sec:relworks}), denoted with~\baseFull-$\clustCoeff$ (or \texttt{WS}-$\clustCoeff$ for short).
We extend the idea of wedge sampling to approximate the local closure coefficient, as there are no algorithms tailored for the local closure coefficient. 
This baseline, denoted with~\baseFull-$\closureCoeff$, is detailed in
\ifextended
Appendix~\ref{appsec:baselines}.
\else
our extended version~\cite{triadExt}.
\fi
We fix $\varepsilon_j=\varepsilon=0.075$ for all datasets and all buckets in~$\nodesetPartition$ 
for~\mainAlg and $\eta=0.01$, additional parameters are reported in
\ifextended
Appendix~\ref{appsec:params}.
\else
our extended version.
\fi
For each configuration we run~\mainAlg and obtain $\widehat{\varepsilon}_j$, i.e.,
the adaptive upper bounds on the distance between the estimates $f_j$ 
and the unknown values $\clustGenPartition_j$ for each bucket $V_j\in \nodesetPartition$. 
We then use such values as input for~\baseFull, such that both algorithms provide the same guarantees.
For each configuration we compute the estimation error as the supremum error $|f_j-\clustGenPartition_j|$, averaged over all buckets, 
our results will show the maximum of such supremum error over five runs.
In addition, we report the average runtime for each algorithm on the various configurations, which was time-limited for all algorithms. 

\begin{figure}
	\centering
	\includegraphics[width=\columnwidth]{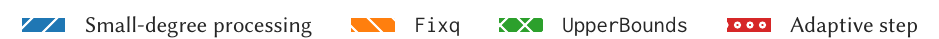}
	\addtolength{\tabcolsep}{-0.5em}
	\begin{tabular}{lr}
		\includegraphics[width=0.49\columnwidth]{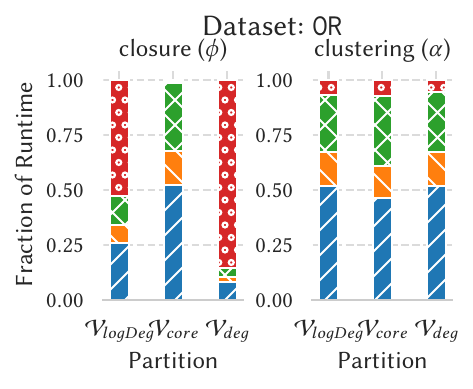} & \includegraphics[width=0.49\columnwidth]{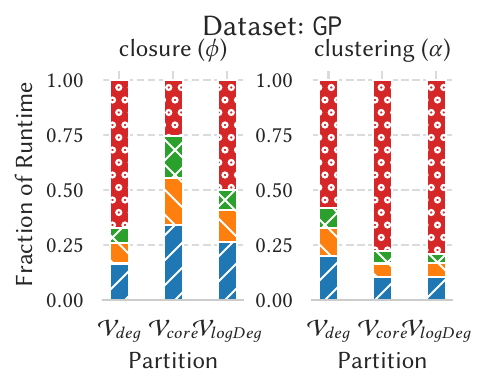}
	\end{tabular}	
	\caption{Fine grained runtime analysis. We show the average fraction of time spent in each step by~\mainAlg, the setting is from Section~\ref{exps:sota}.}
	\label{fig:runtimes}
\end{figure}

\begin{figure}
	\centering
	\includegraphics[width=0.95\columnwidth]{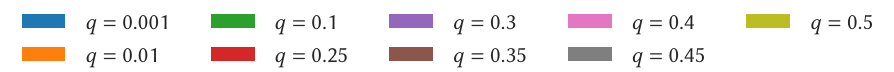}
	\addtolength{\tabcolsep}{-0.5em}
	\begin{tabular}{lr}
		\includegraphics[width=0.47\columnwidth]{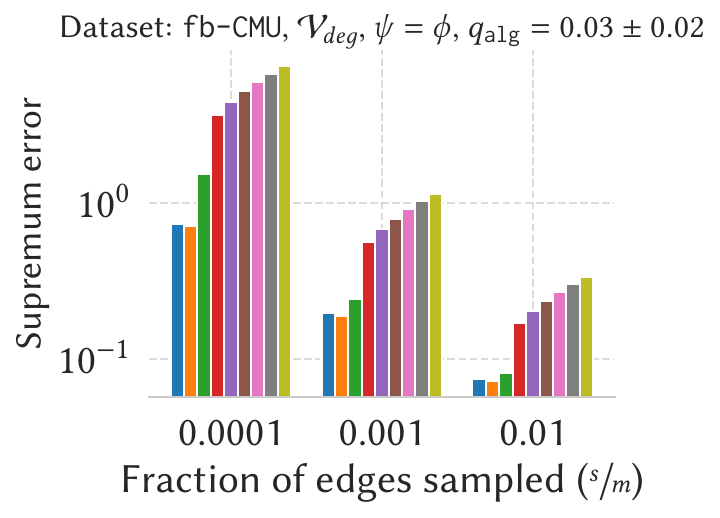} 
		& \includegraphics[width=0.47\columnwidth]{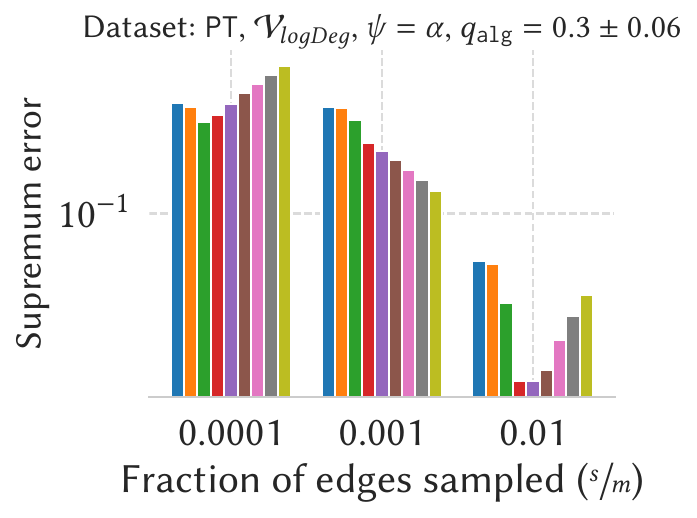}\\
	\end{tabular}	
	\caption{Supremum error over ten runs over different partitions, sample size $s$, coefficient $\clustMetric$, and values of $q$.}
	\label{fig:qvalues}
\end{figure}

\vspace{1mm}
\noindent 
\textbf{Results.} Key results are summarized in Figure~\ref{fig:expsSota}. 
We first observe that~\mainAlg reports very accurate estimates for $\clustGenPartition_j$ on most configurations, 
which are much more precise than the ones provided by~\baseFull. 
In particular, the supremum error, as desired, is of the order of $10^{-2}$ and on some configurations up to $10^{-3}$ 
(e.g.,~for datasets \texttt{HG} and \texttt{OR}). 
The baselines on most configurations achieve higher errors than~\mainAlg.
We observe that this is especially the case for~\baseFull-$\closureCoeff$ achieving higher error than~\mainAlg.
Importantly, 
we observe that the range of improvement in accuracy over the baseline is up to one order of magnitude. 

Remarkably, such results are obtained with a comparable or significantly smaller runtime with respect to the state-of-the-art baseline~\baseFull (up to one order of magnitude on datasets \texttt{HG}, \texttt{LJ} and \texttt{OR}).
In fact, our experiments confirm that~\mainAlg provides tighter bounds on the deviation between its estimates ($f_j$) and the unknown values ($\clustGenPartition_j$), 
significantly better than existing approaches, while being more efficient. 
Unfortunately, in practice, such bounds may still be loose---this can noted by observing that we provided in input to~\mainAlg $\varepsilon_j=0.075$ and under various datasets (e.g., \texttt{HG},  \texttt{OR} and  \texttt{LJ}) the maximum errors for~\mainAlg are much smaller. 

\vspace{1mm}
\noindent 
\textbf{Summary.} Our algorithm~\mainAlg provides better bounds on the deviation 
between the estimates $\estFunc_j$ and the unknown coefficients~$\clustGenPartition_j$, 
compared to existing approaches. 
While being tighter, such guarantees may still be loose for some settings, 
leaving an open question for future directions.

\subsection{Runtime and parameter sensitivity}\label{exps:rtAndParams}
\subsubsection{Runtime analysis}\label{exps:runtime}
In this section we analyze \mainAlg's runtime. 
In particular, we split the runtime into the following steps: 
($i$)~the practical optimization over the small-degree nodes; 
($ii$)~the optimization of the variance through \texttt{Fixq}; 
($iii$)~the execution of the routine~\texttt{UpperBounds}; and 
($iv$)~the adaptive loop. 

Figure~\ref{fig:runtimes} reports the average fraction of time spent by~\mainAlg in the various steps over the experiments from Section~\ref{exps:sota}. 
We report two very different behaviors.
On dataset \texttt{OR}, except for $\nodesetPartition_{\text{deg}}$, 
the time of performing the adaptive loop is negligible compared with all the other steps. 
This is in contrast with dataset~\texttt{GP}, where most of~\mainAlg's runtime is spent in its adaptive loop, 
highlighting that~\mainAlg effectively adapts to the complexity of the graph in input.
In other words, 
when the variance of the coefficients $\clustMetric_v$ 
is small, 
then the adaptive loop can terminate by processing a small amount of samples $s$ as captured by~\Cref{theo:predictablePlugin}.

Interestingly, we note that the procedure to optimize the variance (that we denote with~\texttt{Fixq}) is often negligible, especially compared with~\texttt{UpperBounds}, 
as captured by our analysis in Section~\ref{subsec:varianceOpt}.

Our results show that~\mainAlg's runtime depends on the complexity of the input graph i.e., the distribution of the unknown coefficients across buckets in $\nodesetPartition$, 
which allows~\mainAlg to compute highly accurate estimates very efficiently through its adaptive bounds.

\subsubsection{Assessing the impact of $q$}\label{exps:params:qval}
In this section we investigate the impact of the parameter $q$ on the quality of the estimates computed by~\mainAlg. 
Recall that $q$ controls how the estimates $f_j$ are computed, affecting the variance of the results.
To evaluate such parameter,
we selected two of our smallest datasets, 
for computational efficiency, 
and a fixed grid of nine values for the parameter $q$. 
For each value of $q$ we then tested different sample sizes, 
i.e., using~\algFixedSS with a sample size $s$ such that $s/m \in\{0.0001,0.001,0.01\}$. 
For each combination of dataset, partition $\mathcal{V}$, value of $q$, 
and sample size $s$, 
we performed ten runs over both triadic coefficients $\clustMetric\in\{\closureCoeff,\clustCoeff\}$. 

Some representative results are shown in Figure~\ref{fig:qvalues} 
reporting the supremum error (i.e.,~$\sup_j |\estFunc_j-\clustGenPartition_j|$) over the various configurations. 
We observe that the impact of $q$ on the estimates $f_j$, 
over all sample sizes, can be from negligible (on the bottom-left plot) to very significant (bottom-right plot). 
In general, we observe a significant reduction in the supremum error by a proper selection 
of the value of $q$---up to one order of magnitude 
in several settings (such as the top-left or bottom-right plots). 
This behavior confirms the importance of properly selecting the value of $q$. 
We note that the supremum error tends to be minimized with a specific value of $q$ 
over the different settings,
but this value is, in general, different on most configurations (e.g.,~top- and bottom-right~plots). 

Summarizing, a proper selection of the value of $q$ can have a significant impact on the estimates of~\mainAlg leading to up to one order of accuracy in the estimates.
The next section assesses how well our optimization aligns with a good value of the parameter $q$.

\subsubsection{Optimization of $q$}
We now briefly assess how well the value of $q_{\text{alg}}$ (as optimized by~\mainAlg)
aligns with a good choice for the parameter $q$. 
Results are shown in Figure~\ref{fig:qvalues}, 
where we report the average $q_{\text{alg}}$ and its standard deviation over ten runs.
The size of the bag of samples used to compute $q_{\text{alg}}$ is set to 500.

As captured by our analysis, the value of $q_{\text{alg}}$ well-aligns with a good value of $q$ obtained from the grid of tested values, on each configuration. 
In fact, $q_{\text{alg}}$ often overlaps with the $q$ yielding the minimum supremum error from the grid. For example, on the top-left plot the best value on the grid is $q=0.01$ 
while $q_{\text{alg}}  = 0.03\pm0.02$, 
highlighting that our method properly selects a good value for $q$, yielding small estimation variance. 

\subsection{Case study---academic collaborations}\label{subsec:caseStudy}
\begin{figure}
	\includegraphics[width=0.9\columnwidth]{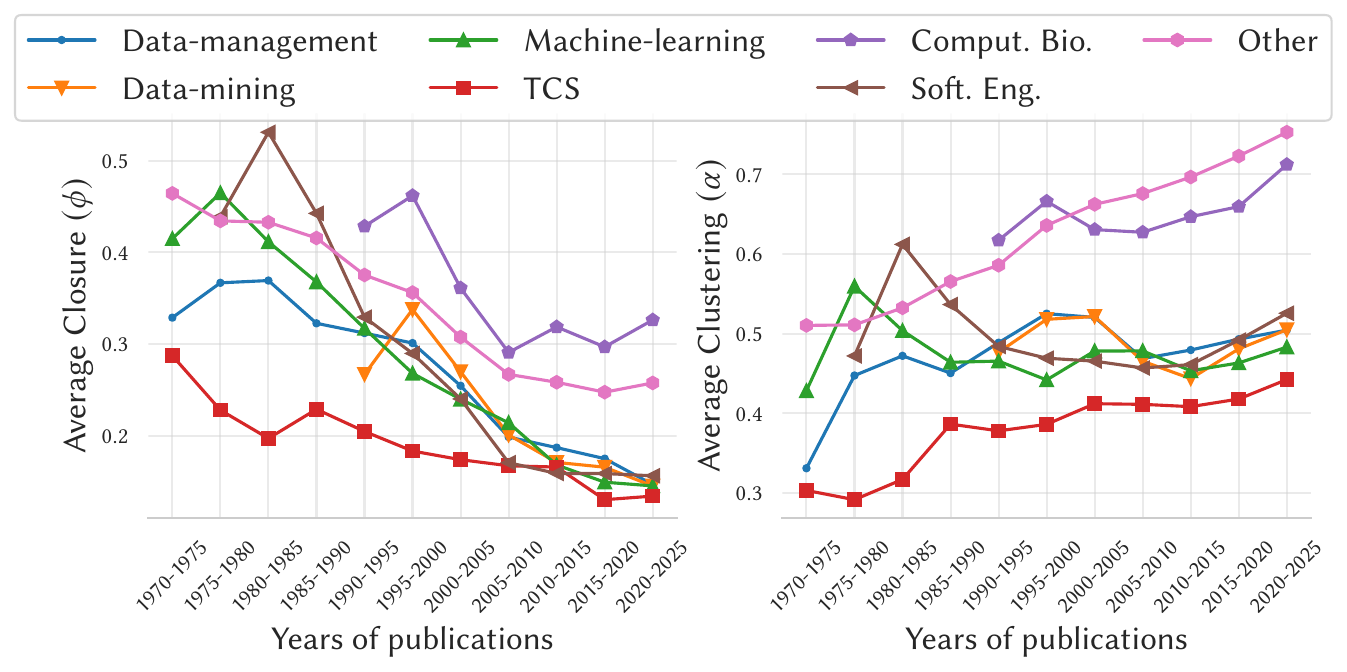}
	\caption{Average local clustering and local closure coefficient over the DBLP graph snapshots, for various computer science communities.}
	\label{fig:DBLPTemporal}
\end{figure}
\begin{figure}
	\begin{tabular}{cc}
		\includegraphics[width=0.49\columnwidth]{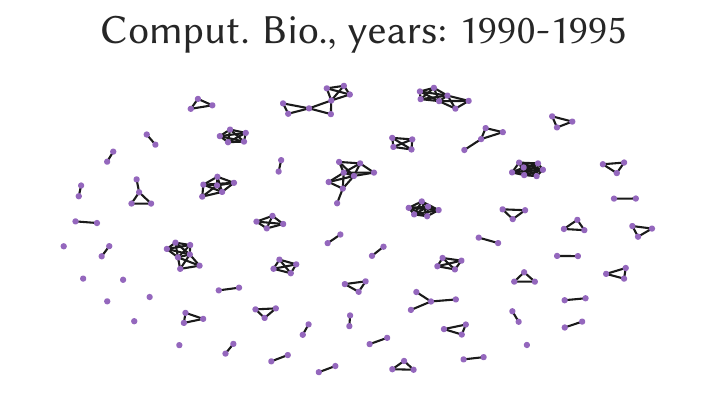}&
		\includegraphics[width=0.49\columnwidth]{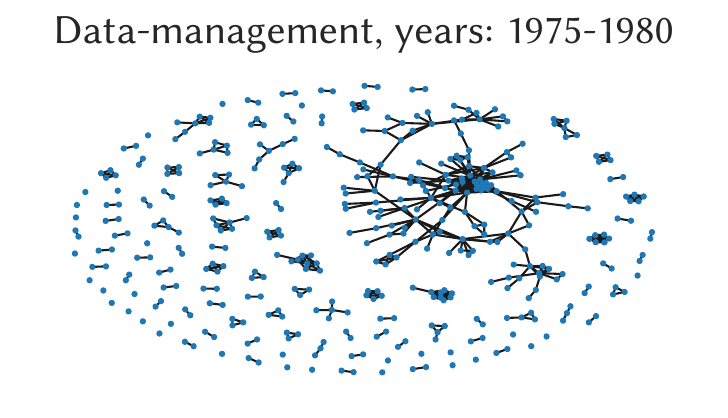}\\
	\end{tabular}
	\caption{Induced subgraphs by communities. (Left): computational biology; (Right): data-management. 
	}
	\label{fig:clusts}
\end{figure}

In this section we analyze collaboration patterns in different communities of the DBLP network using~\mainAlg, 
answering \textbf{Q4}.

\vspace{1mm}
\noindent 
\textbf{Setting.} 
DBLP collects bibliographic information about all major computer science journals and proceedings publications.\footnote{\url{https://dblp.org/}} 
For each time-period of five consecutive years from 1970 to 2024,
we collected the respective set of publications over DBLP. 
On each time period $t_1=[1970,1975], t_2=[1975,1980],\dots$ 
we build a graph $G_{t_i} = (V_{t_i}, E_{t_i})$, with 
$V_{t_i}$ consisting of authors, 
and edges corresponding to authors sharing a common publication. 
We then classified the authors, on each graph $G_{t_i}$, according to their research community. 
The resulting categories are reported in the legend of Figure~\ref{fig:DBLPTemporal}, 
additional details on the classification are in 
\ifextended
Appendix~\ref{appsubsec:DBLP}. 
\else
in our extended version~\cite{triadExt}.
\fi

For each graph we computed the average triadic coefficients over each category. We then investigated if the analysis of the coefficients $f_j$, for $j\in[k]$, provides us insights into similarities and differences of collaboration patterns over different communities.

\vspace{1mm}
\noindent 
\textbf{Results.} 
We observe in Figure~\ref{fig:DBLPTemporal} some interesting trends.
For most of the communities,
the average local clustering coefficient increases 
or remains stable over the years.
Instead, the average local closure coefficient mostly decreases, 
for all but the computational-biology community. 
This can be explained by the fact that, new nodes over the network, 
are likely to have a small local closure coefficient when they collaborate with an author having many coauthors (e.g., students publishing with their advisor). 
Instead, it is easier for novel nodes on the network to a have higher local clustering coefficient, 
e.g., by collaborations within research groups. 
In addition, it is also easier for authors already belonging to the network to increase their local clustering coefficient over time, 
e.g., by publishing more. 

There is a, perhaps surprising, increase in the average local closure coefficient 
for the computational-biology community over time. 
This could be explained by the fact that publications in this area 
require the joint effort of many authors, 
which likely increases the density of the connections of the authors in the graphs. 
Such an aspect can be observed in~\Cref{fig:clusts}, 
where we see that the structure of collaborations over the computational biology category forms many cliques. 
On the other hand, the structure of the graph of the data management community is more sparse, 
containing some chain structures.
We further visualize the subgraph of other categories
\ifextended
in~\Cref{appsec:community}.
\else
in our extended manuscript~\cite{triadExt}.
\fi
We conclude by noting that the average local clustering coefficient is significantly higher 
(ranging from 0.4 to more than 0.6) 
than the average local closure coefficient (which does not exceed 0.35) over all communities.
This may be related to  the nature of academic collaborations, 
where publishing with established researchers decreases the local closure coefficient of novel researchers on average.
We further discuss that the above results cannot be uniquely explained by the degree distribution of the various nodes, in
\ifextended
\Cref{appsec:community}.
\else
our extended version~\cite{triadExt}.
\fi

\vspace{1mm}
\noindent 
\textbf{Summary.} We analyzed the average local clustering (and closure) coefficients over 
different computer science communities across time. 
We observed that the values of the triadic coefficient can capture different collaboration patterns. For example, capturing highly collaborative (computational biology) and more sparse (data management) communities.
Our findings show an example of a simple analysis using Problem~\ref{prob:partionEstimationAdditive} 
to gain better insights into publication and collaboration patterns in different research communities.

\begin{table}
	\caption{Comparison of~\mainAlg and existing state-of-the-art approaches. $\clustGenPartition$: 
	if the algorithm can estimate $\clustCoeff$, $\closureCoeff$, or both. 
	``Adaptive'' denotes if the estimates adapt to the partitions~$\nodesetPartition$. 
	``Number of samples'' denotes the number of samples processed. 
	Finally ``Processing complexity'' denotes the time to process a sample, 
	assuming $\bigO(1)$ time complexity to check the existence of an edge. For \texttt{ThinkD}, $r$ denotes the number of retained edges, which depends on $\varepsilon^{-2}$.}\label{tab:sotaComp}
	\vspace{-2mm}
	\scalebox{0.72}{
	\begin{tabular}{lcccc}
		\toprule
		\multirow{2}*{Algorithm} & 
		\multirow{2}*{$\clustGenPartition$} &  
		\multirow{2}*{Adaptive} & 
		{Number of} & Processing\\ 
		 & &  & samples & complexity\\ 
		\midrule
		{\mainAlg} & 
		{$\clustCoeff,\closureCoeff$} & 
		{\vmark} & 
	  $\Omega\left(\tfrac{R\log k/\eta}{\varepsilon}\right)$ and $\bigO\left(R^2\tfrac{(\zeta+\log 1/\eta)}{\varepsilon^2}\right)$ & 
		{$\bigO(d_{\max})$} \\
		 \midrule
		 \texttt{WS}-$\clustCoeff$~\cite{Seshadhri2014} & $\clustCoeff$ & \xmark & $\Theta\left( k \varepsilon^{-2} \log k/\eta\right)$& $\bigO(1)$\\
		 \texttt{WS}-$\closureCoeff$ & $\closureCoeff$ & \xmark & $\Theta\left( k \varepsilon^{-2} \log k/\eta\right)$& $\bigO(d_{\max})$\\
		 \texttt{LCE}~\cite{Lima2022ClusteringLatin} & $\clustCoeff$ & \xmark & $\Theta\left(m^2 \varepsilon^{-2} (\log d_{\max} + \log 1/\eta)\right)$ &  $\bigO(d_{\max})$\\
		 \midrule
		 \texttt{ECC}~\cite{Kutzkov2013StreamCC} & $\clustCoeff$ & \xmark & $\bigO(\varepsilon^{-2} \log n/\eta)$ & $\bigO(m\log\varepsilon^{-1})$\\
		 \texttt{ThinkD}~\cite{Shin2020Stream} & $\clustCoeff$ & \xmark & $\bigO(r)$ & $\Theta(m)$\\
		\bottomrule	\end{tabular}}
\end{table}

\section{Related Work}
\label{sec:relworks}

The problems addressed in this paper are closely related to counting triangles in graphs, 
which has been studied extensively. 
Thus, an extensive review is outside the scope~\cite{Seshadhri2019Tutorial, AlHasan2017TriCnt}, 
instead, we focus only on discussing the most relevant problem settings and~techniques.

\vspace{1mm}
\noindent 
\textbf{Clustering and closure coefficient algorithms.}
The local clustering coefficient was first introduced by \citet{Watts1998}. 
Since their seminal work, many algorithms have been developed to efficiently compute related graph statistics. 
Many approaches consider both the approximation of the global clustering coefficient~\cite{Schank2005CC,BuriolClusteringIndexes}, which is the local clustering coefficient averaged over all nodes in the graph, 
and the transitivity coefficient, 
which is the fraction of closed triangles over all the wedges in the graph~\cite{Etemadi2017BiasCorrection}, or 
weighted versions~\cite{Lattanzi2016WeightedCC}. 
Interestingly, \mainAlg can be adapted to compute all those coefficients with minimal modifications.

Many algorithms have been designed for computing the local clustering coefficient in restricted access models, 
such as
($i$)~when the graph can be explored only through random walks~\cite{Hardiman2013EstimatingCCRW};
($ii$)~when the graph is accessed in a (semi-)streaming fashion~\cite{Becchetti2010,Kutzkov2013StreamCC}; or
($iii$)~distributed environments~\cite{Kolda2014}. 
Given that these works focus on restrictive scenarios, 
they require a large number of samples and often do not offer accurate guarantees.

The works most related to our formulation are by~\citet{Etemadi2017BiasCorrection}, 
and \citet{Seshadhri2014} who developed wedge-sampling algorithms, 
which, for each partition,  sample wedges 
(i.e., our baseline considered in Section~\ref{exps:sota}). 
These algorithms require a significantly high sample size, i.e., $\Theta(k\varepsilon^{-2} \log(k/\eta))$, 
which is tight~\citep{Baigneres2004LB}. 
Therefore these algorithms become impractical for small values of $\varepsilon$, 
as also demonstrated in our experiments. 

Recently, 
\citet{Lima2022ClusteringLatin} developed an algorithm based on sampling edges 
and collecting their incident triangles, to approximate the local clustering coefficient of nodes with high-degree.
The authors prove an upper bound on the sample complexity using VC-dimension. 
While their approach is similar in spirit to ours, 
it is significantly less general, 
i.e., their approach can be obtained by our class of estimators setting $q=0$ in Equation~\eqref{eq:EstimatorQ}. 
Their approach is also significantly less efficient, 
as their algorithm is based on data-independent bounds. 
In addition, our bound, as captured by Corollary~\ref{coroll:pdimLima},  
is significantly tighter than theirs.

Surprisingly, 
not much work has been done on algorithms for the closure coefficient. 
Recent works only quantify how this coefficient evolves in random networks~\cite{Yuan2024CLTClosure}. 

\vspace{1mm}
\noindent 
\textbf{Triangle-counting algorithms.}
As already noted, triangle counting is a wide area of research 
\cite{Bader2023FastCounting,Eden2017Tri,Kolountzakis2012,Li2024FastLocalCnt,Tsourakakis2009}.
Most existing works address the problem of computing \emph{global} triangle counts, 
and cannot therefore be used in our setting. 
Several works have been instead developed for \emph{local} triangle counting. 
Exact methods~\cite{Pashanasangi2020EVOKE,Li2024FastLocalCnt} are prohibitive for large networks, 
and sampling methods are designed for streaming settings~\cite{Stefani2017Triest,Ahmed2017Streams,Shin2017WRS, Shin2020Stream}. 
Those algorithms can provide accurate estimates for the local clustering coefficient 
when nodes have very high degree, 
while they are highly inaccurate for nodes with small degree. 
Note that this is a significant limitation for the setting we consider in this paper, 
where partitions may contain a large number of nodes with small degree. 

\vspace{1mm}
\noindent 
\textbf{Subgraph-counting algorithms.}
Another related problem is the one of counting subgraph occurrences, 
for which many different methods have been proposed~\cite{Li2024FastLocalCnt,Rahman2014Graphlets,Jha2015, Rossi2019,Rossi2020,Bressan2019Motivo}. 
While these algorithms can be effectively used to count subgraph occurrences with respect to specific or multiple subgraph patterns, they cannot be easily adapted to extract high-quality local subgraph counts and average local triadic coefficients, as considered in Problem~\ref{prob:partionEstimationAdditive}. 
Finally, \citet{Ahmed2016} develop methods to estimate local subgraph counts, 
but compute \emph{exactly} all triangles in the graph. 
A summary of the key differences with most related works is reported in~\Cref{tab:sotaComp}. 
Note that the last two algorithms are for streaming settings~\cite{Kutzkov2013StreamCC, Shin2020Stream}, 
hence we did not consider them in~\Cref{sec:exps}, as they are designed for a more restrictive data-access model,
yielding more inefficient methods. We observe from~\Cref{tab:sotaComp} that depending on the evaluation of the adaptive bounds on the given datasets~\mainAlg can be highly efficient (i.e., when $R$ is small), improving significantly over existing methods as shown in~\Cref{sec:exps}.

\section{Conclusion}
We studied the problem of efficiently computing the average of local triadic coefficients. 
We designed~\mainAlg, an efficient and adaptive sampling algorithm. 
\mainAlg estimates both the average local clustering coefficient and the recently-introduced 
average local closure coefficient, 
for which no algorithmic techniques were previously known. 
We showed that~\mainAlg is efficient and reports extremely accurate estimates, 
especially compared with existing methods.

There are several interesting directions for future work, 
such considering~\mainAlg for averages of local coefficients, 
which depend on the given partitions 
(e.g., a triangle is weighted differently if it contains nodes from different buckets), and 
weighted variants of the clustering and closure coefficients~\cite{Lattanzi2016WeightedCC}. 
Another interesting direction is to adapt~\mainAlg for a multi-pass streaming setting~\cite{Becchetti2010,Kutzkov2013StreamCC}. 
Finally, it will be interesting to study whether it is possible to design tighter bounds on the sample complexity, 
e.g., based on Rademacher complexity~\cite{Pellegrina2023,Pellegrina2023Silvan}.

\begin{acks}
We thank Fabio Vandin for providing us the computing infrastructure. This research is supported by the
ERC Advanced Grant REBOUND (834862), 
the EC H2020 RIA project SoBigData++ (871042), and 
the Wallenberg AI, Autonomous Systems and Software Program (WASP) funded by the Knut and Alice Wallenberg Foundation.
\end{acks}

\balance
\bibliographystyle{ACM-Reference-Format}
\bibliography{ourbib}

\ifextended
\newpage
\appendix

\section{Missing Proofs}\label{app:missingProofs}
\begin{proof}[Proof of Lemma~\ref{lemma:unbiasednessSingleNode}]
First we let $X_e$ be a Bernoulli random variable taking value 1 with probability $p= \sfrac{1}{m}$ and 0 otherwise for each edge $e\in \E$. Then it follows that $\mathbb{E}[X_e] = p$, hence applying the linearity of expectation to Equation~\eqref{eq:EstimatorQ} we get
\begin{align*}
	\mathbb{E}[X_q(v)] &= \mathbb{E}\left[\sum_{e \in E : v\in e} \frac{q |\Delta_e|X_e}{p} + (1-2q)\sum_{e \in \E  : v\in \neighborEdge{e}}\frac{X_e}{p}\right]\\
	&= \sum_{e \in E : v\in e} \frac{q |\Delta_e|\mathbb{E}[X_e]}{p} + (1-2q)\sum_{e \in \E  : v\in \neighborEdge{e}}\frac{\mathbb{E}[X_e]}{p} \\
	&= \sum_{e \in E : v\in e} q |\Delta_e| + (1-2q)\sum_{e \in \E  : v\in \neighborEdge{e}}1 \\
	= q& \left( \sum_{e \in E : v\in e} |\Delta_e| -  2\sum_{e=\{u,v\} : v\in \mathcal{N}_u \cap \mathcal{N}_v}1\right) + \sum_{e \in \E  : v\in \neighborEdge{e}}1\\
	&=  q\left(2|\Delta_v| - 2|\Delta_v|\right) + |\Delta_v| =  |\Delta_v|.
\end{align*}
Where we used the fact that $\sum_{e \in E : v\in e}|\Delta_e| = 2|\Delta_v|$ as each triangle $\singleTri \in \triSetNode{v}$ is counted twice and $\sum_{e\in \E : v\in \neighborEdge{e}} 1 = |\Delta_v|$ as the number of triangles containing $v\in \V$ corresponds to the number of edges connecting two neighbors of $v\in \V$.
\end{proof}

\begin{proof}[Proof of Lemma~\ref{lemma:unbiasedEdgePartition}]
	First observe that the function in Equation~\eqref{eq:SingleNodeWeightEdge} is deterministic for any edge $e\in \E$, i.e., the indicator functions are based on properties of the input graph $\G$, and clearly $\G$ is fixed.
	
	Taking the expectation of $\estFunc_i(e)$ we have 
	\begin{align*}
		\expectation[f_j(e)] &\stackrel{(A)}{=}\sum_{e\in \E} \frac{\expectation[X_e]}{p} \frac{1}{|V_j|}\sum_{v\in V_j} \frac{a_q(v,e)}{|\wedgNodeSetGeneral{v}|}\\
		&\stackrel{(B)}{=}  \frac{1}{|V_j|} \sum_{v\in V_j} \frac{1}{{|\wedgNodeSetGeneral{v}|}} \sum_{e\in \E}  a_q(v,e)\\
		&\stackrel{(C)}{=}  \frac{1}{|V_j|} \sum_{v\in V_i} \clustMetric_v  =\frac{1}{|V_j|}\sum_{v\in V_j}\clustMetric_v = \clustGenPartition_j. \\
	\end{align*}
	Where $(A)$ comes from the linearity of expectation, $(B)$ comes from  $\expectation[X_e]=p$, by swapping the summations, and the fact that $1/|\wedgNodeSetGeneral{v}|$ does not depend on $e\in \E$ for any of nodes $v\in \V_i, i=1,\dots,k$, and $(C)$ by noting that 
	\[
	\sum_{e\in \E}  a_q(v,e) =  \sum_{e \in E : v\in e} q |\Delta_e| + (1-2q)\sum_{e \in \E  : v\in \neighborEdge{e}}1 = |\triSetNode{v}|
	\]
	from the proof of Lemma~\ref{lemma:unbiasednessSingleNode}, concluding therefore the proof as $|\triSetNode{v}|/|\wedgNodeSetGeneral{v}| = \clustMetric_v$.
\end{proof}

\begin{proof}[Proof of~\Cref{lemma:unbiasedOutput}]
	From Lemma~\ref{lemma:unbiasedEdgePartition} it holds that for each edge $e\in E$ and 
	each function $\estFunc_j(e)$, 
	it is $\expectation[\estFunc_j(e)] = \clustGenPartition_j$, \edit{where $\expectation[\cdot]$ is over a random edge $e\in E$}.
	After noting that \edit{by definition} $\estFunc_j = \tfrac{1}{s}\sum_{e\in \mathcal{S}} \estFunc_j (e)$,
	the result follows by the linearity of expectation \edit{and $|\sampleSet| =s$}.
\end{proof}

\begin{proof}[Proof of Lemma~\ref{lemma:varianceBoundOutput}]
	First note that $\estFunc_j$ in output to Algorithm~\ref{alg:adaptiveBucketSampling} is the sample average over $s$ \emph{i.i.d.}\ random variables 
	\begin{equation}\label{eq:VarProofSingleTerm}
		\estFunc_j(e) = \sum_{e\in \E} \frac{X_e}{p} \frac{1}{|V_i|}\sum_{v\in V_i} \frac{a_q(v,e)}{|\wedgNodeSetGeneral{v}|}.
	\end{equation}
	Therefore to bound the variance, note
	\begin{equation}\label{eq:VarProofCombiningRes}
		\Var\left[\tfrac{1}{s}\sum_{i=1}^s \estFunc_j(e)\right] = \tfrac{1}{s^2} \sum_i \Var[\estFunc_j(e)] \le \frac{\hat{\sigma}^2}{s}.
	\end{equation}
	Where $\hat{\sigma}^2$ is a bound on the variance on $\estFunc_j(e)$,  we used the fact that the variables are i.i.d.\ and that $\Var[cX] = c^2\Var[X]$ for a random variable $X$ and a constant $c$.
	We now bound the variance of $\estFunc_j(e)$. Recall that
	\begin{equation}\label{eq:varianceBySecondMoment}
		\Var[X] = \expectation[X^2] - (\expectation[X])^2,
	\end{equation}
	hence,
	\begin{align*}
		&\expectation[\estFunc_j(e)^2]  =\\
		&= \expectation\left[ \sum_{e_1\in \E} \sum_{e_2 \in \E} \frac{X_{e_1}X_{e_2}}{p^2} \frac{1}{|V_j|}\sum_{v_1\in V_j} \frac{a_q(v_1,e_1)}{|\wedgNodeSetGeneral{v_1}|} \frac{1}{|V_j|}\sum_{v_2\in V_j} \frac{a_q(v_2,e_2)}{|\wedgNodeSetGeneral{v_2}|} \right]\\
		&\stackrel{(A)}{=}\sum_{e_1\in \E} \sum_{e_2 \in \E} \frac{\expectation[X_{e_1}X_{e_2}]}{p^2} \frac{1}{|V_j|}\sum_{v_1\in V_j} \frac{a_q(v_1,e_1)}{|\wedgNodeSetGeneral{v_1}|} \frac{1}{|V_j|}\sum_{v_2\in V_j} \frac{a_q(v_2,e_2)}{|\wedgNodeSetGeneral{v_2}|}\\
		&\stackrel{(B)}{\le}  \frac{1}{p} \frac{1}{|V_j|^2} \sum_{e_1\in \E}  \sum_{v_1\in V_j} \frac{a_q(v_1,e_1)}{|\wedgNodeSetGeneral{v_1}|} \sum_{v_2\in V_j} \sum_{e_2 \in \E}  \frac{a_q(v_2,e_2)}{|\wedgNodeSetGeneral{v_2}|} \\
		&\stackrel{(C)}{=}  \frac{1}{p} \frac{1}{|\V_j|^2} \sum_{e_1\in \E}  \sum_{v_1\in \V_j} \frac{a_q(v_1,e_1)}{|\wedgNodeSetGeneral{v_1}|} \sum_{v_2\in \V_j} \clustMetric_{v_2} = \frac{1}{p}  \left(\frac{1}{|\V_j|}  \sum_{v\in \V_j} \clustMetric_{v}\right)^2.
	\end{align*}
	Step $(A)$ follows from the linearity of expectation, step $(B)$ from the fact that $\expectation[X_{e_1}X_{e_2}] \le \expectation[X_{e_1}] = p$ noting that such random variables are 0--1 variables and for the events $I \doteq$``$X_{e_1}=1$'', and $I' \doteq \text{``}X_{e_2}=1$'' it holds $I\subseteq I'$, and $(C)$ follows from the same argument used in the proof of Lemma~\ref{lemma:unbiasedEdgePartition}. The final claim follows by combining the above result with Equations~\eqref{eq:varianceBySecondMoment} and~\eqref{eq:VarProofCombiningRes}, and the fact that $\estFunc_j(e)$ is an unbiased estimator of the triadic measure over each partition $j\in[k]$ (Lemma~\ref{lemma:unbiasedEdgePartition}).
\end{proof}

\begin{proof}[Proof of Theorem~\ref{theo:PdimBound}]
	We now show that $\PD(\Fset) = \VC(\DomH\times[a,b], \Fset^+)$ is bounded by $\floor{\log_2 \bucketColorSup} + 1$. 
	Let $\mathsf{VC}(\mathcal{H}\times[a,b], \mathcal{F}^+)=\pdim$ be the size of the largest set of elements of the space $\mathcal{H}\times[a,b]$ that can be shattered by the range-set $\mathcal{F}^+$, and let $Q$ be the shattered set attaining such size, that is $0<|Q|=\pdim$. Without loss of generality $Q$ contains one element $a=(e,x)$ for some $e\in E$ and $x>0$ (note that $x$ needs to be strictly greater than 0 otherwise the set $Q$ cannot be shattered, see Lemma~\ref{lemma:RVNozeroElements}), and furthermore note that there cannot be another element in $Q$ of the form $a=(e,y)$ for $y\neq x$ (see Lemma~\ref{lemma:RVatmostoneEdge}). There exist therefore $2^{\pdim-1}$ distinct and non-empty subsets of $Q$ containing the element $a$ that we label as $Q_1,\dots,Q_{2^{\pdim-1}}$, since the set $Q$ is shattered by $\mathcal{F}^+$. This implies therefore that there are $R_1,\dots R_{2^{\pdim-1}}$ ranges for some partitions $V_j, j =1\dots, k$ for which it holds $Q_i = Q \cap R_i$ with $i=1,\dots,2^{\pdim-1}$, by the definition of $Q$.
	Given an edge $e$ from $\DomH=\E$, an element  of the form $(e,x)$ can belong to at most $\bucketColorSup$ distinct ranges, which is easy to see by the following argument: (1) for a function $f_j(e)$ of a partition $j=1,\dots,k$ to be non-zero it should hold that there exists a node $v\in V_j$ such that $v \in \singleTri, \singleTri\in\triSetEdge{e}$ and (2) for an edge $e=\{u,v\}$ it holds $|\triSetEdge{e}| \le \min\{\nodeDeg{u},\nodeDeg{v}\}$ and further $\triSetEdge{e} = \{\{u,v\} \cup \{w\} : (\neighborNode{u}\setminus\{v\})\cap(\neighborNode{v}\setminus\{u\}) \})$ implying that all the nodes forming a triangle with edge $e=\{u,v\}$ should be in the neighborhood of the node with minimum degree among $u$ and $v$. Therefore for a fixed edge $e=\{u,v\}$ at most $\bucketColor_z$ where $z= \arg\min\{\nodeDeg{u},\nodeDeg{v}\}$ distinct functions $f_j(e)$ can have non-zero values. Hence the element $a=(e,x) \in Q$  can be contained in at most $\bucketColorSup$ ranges, therefore
	\[
	2^{\pdim-1}\le \bucketColorSup \implies \pdim \le \floor{\log_2 \bucketColorSup} + 1.
	\]
\end{proof}

\begin{proof}[Proof of Corollary~\ref{coroll:pdimQvals}]
	First we consider the case where $q=0$. By a similar argument to the proof of Theorem~\ref{theo:PdimBound} consider an element of the form $e=(u,x)$ in the shattered set $Q$ such that $|Q|=\pdim$, then the two nodes forming the edge $e=\{u,v\}$ do not contribute to the functions $f_j(e)$ for the partitions they belong to, that is $2^{\pdim-1} \le \bucketColorSup'$, taking the logarithm yields the claim. While for $q=1/2$ it can only be that at most two partitions (if $k\ge 2$) have $f_j(e) > 0$ and hence $\bucketColorSup' \le 2$ implying that $\pdim \le 2$.
\end{proof}

\begin{proof}[Proof of~\Cref{coroll:pdimLima}]
	Consider a star graph $G=(V,E)$ over $n$ nodes, then in such case $\bucketColorSup = 1$ as \emph{for each} edge $e=\{u,v\}\in \E$ it holds that $ \min\{\nodeDeg{u},\nodeDeg{v}\} = 1$, hence \edit{by Proposition~\ref{theo:PdimBound}: $\zeta \le \floor{\log_2 1} + 1 = 1$.}
\end{proof}

\begin{proof}[Proof of~\Cref{theo:probGuarantees}]
	\edit{To prove the claim we bound  the probability of failure of~\mainAlg, that is let $F$ be the event that~\mainAlg's guarantees do not hold. We now show $\Prob[F]\le \eta$. Clearly if this is the case, then~\mainAlg succeeds with probability at least $1-\eta$.}
	
	Let us define the following events:
	\begin{itemize}
		\item $F_1=s\ge s_{\max}$ and there exists $j\in [k]$ such that $|f_j-\clustGenPartition_j| > \varepsilon_j$; and
		\item $F_2=s< s_{\max}$, $\widehat{\varepsilon}_j\le \varepsilon_j$, for each $j\in [k]$, and there exists $j\in [k]$ such that $|f_j-\clustGenPartition_j| > \varepsilon_j$.
	\end{itemize}
	These are the two conditions under which Algorithm~\ref{alg:adaptiveBucketSampling} fails to report sufficiently accurate estimate\edit{s}, and under which its guarantees do not hold for the confidence intervals built around $f_j$, for $j\in[k]$ \edit{as from the statement}. By the combination of Theorem~\ref{theo:PdimBound} and~\ref{theo:boundSSPDIM} 
	then for~\mainAlg  it holds that $\Prob[F_1] \le \eta/2$, next consider $\widehat{\varepsilon}_j$ to be computed through Theorem~\ref{theo:empiricalBern} at each iteration $i$ using $\eta'=\eta/2^{i+1}$ then by a union bound over the possible (infinite) iterations~\cite{Borassi2019Kadabra,Pellegrina2023Silvan} it holds that $\Prob[F_2]\le \eta/2$. Hence, the failure probability of~\mainAlg $\Prob[F_1\cup F_2] \le \Prob[F_1] + \Prob[F_2] \le \eta/2 + \eta/2 \le \eta$, yielding the claim.
\end{proof}

\begin{proof}[Proof of Lemma~\ref{lemma:unbiasedVarEst}]
	Clearly $\frac{1}{c-1} \sum_{e\in\sampleSet} (f_j(e)-f_j)^2$ is an unbiased estimator of the variance of each function $f_j$ for $|\sampleSet|=c$, then 
	notice that $f_j = \frac{m}{c|V_j|}(\sum_{e}a_j(e) + q\sum_e b_j(e))$ and therefore
	$(f_j(e) - f_j)^2 = \frac{m^2}{|V_j|^2}[\hat{a}_j(e)^2 +q\hat{a}_j(e)\hat{b}_j(e)+ q^2\hat{b}_j(e)^2 ]$, hence 
	\begin{align*}
		\widehat{V}_q(f_j) &= \frac{m^2}{(c-1)|V_j|^2} \left(\sum_{e\in \sampleSet}A_j + qB_j + C_j q^2\right)\\
		&= \frac{m^2}{(c-1)|V_j|^2} \left(\sum_{e\in \sampleSet}\hat{a}_j(e)^2 + q \sum_{e\in \sampleSet}\hat{a}_j(e)\hat{b}_j(e) + q^2\sum_{e\in \sampleSet}\hat{b}_j(e)^2 \right)\\
		&=  \frac{1}{(c-1)}\sum_{e\in \sampleSet} (\estFunc_j(e) - \estFunc_j)^2
	\end{align*}
\end{proof}

\begin{proof}[Proof of Lemma~\ref{lemma:varEpsilon}]
	First note that $\varEst_q(\estFunc_j)$ is an unbiased estimate of $\Var[f_j]$, additionally given that $\varEst_q(\estFunc_j)$ has bounded range, i.e., it is a finite sum over bounded terms. Hence, it exists a value $B_q$ for each value of $q$ such that $\varEst_q(\estFunc_j) \le B_q$ therefore by Hoeffding's inequality (i.e., Theorem~\ref{th:Hoeffding}) it holds $\varepsilon'=B_q\sqrt{\frac{\log(2k/\eta')}{2|\sampleSet'|}}$, for any bound on the error probability $\eta'$.
\end{proof}

\begin{proof}[Proof of Lemma~\ref{lemma:upperBounds}]
	Recall that for each partition $j=1,\dots, k$, and each edge $e\in \E$ it holds that 
	\[
	\estFunc_j(e) = \frac{1}{p} \frac{1}{|V_i|}\sum_{v\in V_i} \frac{a_q(v,e)}{|\wedgNodeSetGeneral{v}|}.
	\]
	First notice that for a fixed edge $e=\{u,v\}\in \E$ the only nodes that can form a triangle with edge $e\in \E$ are those nodes such that $w \in (\neighborNode{u} \cap \neighborNode{v})$ for which it clearly holds that $(\neighborNode{u} \cap \neighborNode{v}) \subseteq \neighborNode{z}$ where $z=\arg\min\{\nodeDeg{u}, \nodeDeg{v}\}$, and further $|\triSetEdge{e}| \le \nodeDeg{z}$. Then recall that by definition $a_q(v,e) =  q|\triSetEdge{e}|\mathbf{1}[v\in e] + (1-2q)\mathbf{1}[v\in \neighborEdge{e}]$, hence we have 
	$q|\triSetEdge{e}|\mathbf{1}[w\in e] \le qd_z$ for both the two nodes $u,v \in e$ and
	\[
	(1-2q)\mathbf{1}[w\in \neighborEdge{e}] \le (1-2q) \text{ for }w\in \neighborNode{z}.
	\]
	Therefore for each partition $V_j$ we have that
	\[
	f_j(e) \le \frac{1}{p} \frac{1}{|V_i|} \left(\sum_{v\in V_i, v\in e} \frac{q d_z}{|\wedgNodeSetGeneral{v}|} + \sum_{v\in V_i, v\in \neighborNode{z}} \frac{(1-2q)}{|\wedgNodeSetGeneral{v}|}\right).
	\]
	which follows by the fact that $f_j(e)$ is the sum of non-negative terms, clearly $f_j(e) \le \max_e \{f_j(e)\} = R_j$ as computed by the algorithm, the final result follows from the definition of $p=\sfrac{1}{m}$.
	Finally the fact that $\floor{\log_2 \bucketColorSup}+1$ is an upper bound to the pseudo-dimension associated to the functions $f_j$, for $j\in[k]$, follows from Proposition~\ref{theo:PdimBound}
\end{proof}

\section{Auxiliary theoretical results}\label{appsec:auxTheo}
\begin{lemma}
	[\cite{Chiba1985}]\label{lemma:ChibaN}
	Given a graph $\G=(\V,\E)$ with $n$ nodes and $m$ edges, then
	\begin{equation*}
		\sum_{\{u,v\}\in E} \min\{\nodeDeg{u},\nodeDeg{v}\} \le 2 \arb m,
	\end{equation*}
	where $\arb$ corresponds to the \emph{arboricity} of the graph $G$, i.e., the minimum number of edge disjoint spanning-forests required in which $G$ can be decomposed.
\end{lemma}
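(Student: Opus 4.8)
The plan is to prove this classical bound by the standard arboricity-orientation argument. First I would invoke the definition of arboricity: by hypothesis the edge set $\E$ decomposes into $\arb$ edge-disjoint spanning forests $F_1,\dots,F_{\arb}$ with $\E=\bigcup_{i=1}^{\arb}F_i$. Within each forest $F_i$ I would root every tree arbitrarily and orient each edge from child to parent, so that in $F_i$ every vertex has out-degree at most $1$ (exactly $1$ for non-roots, $0$ for roots). Superposing these $\arb$ orientations yields an orientation of all of $\E$ in which every vertex $v\in\V$ has out-degree $d^+(v)\le\arb$.

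With this orientation fixed, the bound follows by a one-line charging argument: charge each edge to its tail endpoint. If $\{u,v\}$ is oriented $u\to v$ then trivially $\min\{\nodeDeg{u},\nodeDeg{v}\}\le\nodeDeg{u}$, hence
\[
\sum_{\{u,v\}\in\E}\min\{\nodeDeg{u},\nodeDeg{v}\}\le\sum_{v\in\V}d^+(v)\,\nodeDeg{v}\le\arb\sum_{v\in\V}\nodeDeg{v}=2\arb m,
\]
where the last equality is the handshake identity $\sum_{v\in\V}\nodeDeg{v}=2m$.

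There is no substantial obstacle here: the only step requiring a moment's justification is the claim that a forest admits an orientation in which all out-degrees are at most $1$, which is immediate once each tree is rooted. (Alternatively, the statement can simply be cited from~\cite{Chiba1985}; the argument above is recorded for completeness.)
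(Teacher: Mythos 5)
Your argument is correct: the orientation with out-degree at most $\arb$ obtained from the forest decomposition, together with the charge of $\min\{\nodeDeg{u},\nodeDeg{v}\}$ to the tail endpoint and the handshake identity, is exactly the classical Chiba--Nishizeki argument. The paper itself does not prove this lemma but merely cites it from~\cite{Chiba1985}, so your reconstruction matches the intended (cited) proof and nothing further is needed.
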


\begin{lemma}[\cite{Riondato2018MisoSoup}]\label{lemma:RVatmostoneEdge}
	If $B\subseteq\DomH\times [a,b]$ is shattered by $\Fset^+$, it may contain \emph{at most one} element $(e,x)\in \DomH\times [a,b]$ for each $e\in \DomH$.
\end{lemma}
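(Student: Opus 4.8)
The plan is a short argument by contradiction exploiting the fact that, for a fixed first coordinate, each set $Q_f$ intersects the vertical line through that coordinate in a \emph{downward-closed} set. Concretely, recall that $Q_f = \{(x,t) : t \le f(x)\}$, so for any fixed $e \in \DomH$ we have $\{t : (e,t) \in Q_f\} = \{t \le f(e)\}$, an interval unbounded below. This nesting structure is the only property we will use.

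First I would suppose, toward a contradiction, that the shattered set $B$ contains two \emph{distinct} elements sharing the same first coordinate, say $(e, x_1)$ and $(e, x_2)$ with $x_1 \ne x_2$; without loss of generality assume $x_1 < x_2$. Since $B$ is shattered by $\Fset^+$, every subset of $B$ is of the form $Q_f \cap B$ for some $f \in \Fset$; in particular, there exists $f \in \Fset$ such that $Q_f \cap B = \{(e,x_2)\}$, i.e.\ the singleton containing only the element with the larger second coordinate.

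Now I would derive the contradiction. From $(e,x_2) \in Q_f$ we get $x_2 \le f(e)$. Combined with $x_1 < x_2$, this yields $x_1 < x_2 \le f(e)$, hence $(e,x_1) \in Q_f$ as well. Therefore $(e,x_1) \in Q_f \cap B$, contradicting $Q_f \cap B = \{(e,x_2)\}$ (the two points are distinct because $x_1 \ne x_2$). Consequently $B$ cannot contain two elements with the same first coordinate, which is exactly the claim.

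There is no real obstacle here: the entire content is the observation that $Q_f$ restricted to a vertical line is a lower set, so two collinear points can never be ``separated'' in the right direction by any $Q_f$. The only points requiring a little care are fixing the orientation $x_1 < x_2$ before invoking shattering (so that we target the singleton $\{(e,x_2)\}$ rather than $\{(e,x_1)\}$), and noting that the argument is symmetric in the two candidate pairs only after this choice is made; both are routine.
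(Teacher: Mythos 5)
Your argument is correct: the only property needed is that each range $Q_f$ is downward-closed along the vertical fiber over $e$, so no $f\in\Fset$ can pick out $\{(e,x_2)\}$ without also capturing $(e,x_1)$ when $x_1<x_2$, contradicting shattering. This is exactly the standard argument behind the cited result (the paper itself imports the lemma from \cite{Riondato2018MisoSoup} without reproving it), so there is nothing further to add; the remark that the fiber is ``unbounded below'' should read ``equal to $[a,\min\{f(e),b\}]$,'' but this does not affect the proof.
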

\begin{lemma}[\cite{Riondato2018MisoSoup}]\label{lemma:RVNozeroElements}
	If $B\subseteq\DomH\times [a,b]$ is shattered by $\Fset^+$, it cannot contain \emph{any} element of the form $(e,a)\in \DomH\times [a,b]$ for any $e\in \DomH$.
\end{lemma}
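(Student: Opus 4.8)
The plan is a short argument by contradiction that exploits the ``below-the-graph'' shape of the ranges in $\Fset^+$. Suppose some $B\subseteq\DomH\times[a,b]$ is shattered by $\Fset^+$ and contains an element $(e,a)$ whose second coordinate equals the left endpoint $a$ of the common range $[a,b]$ of the functions in $\Fset$. The goal is to derive a contradiction with the shattering property.

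First I would show that $(e,a)$ lies in \emph{every} range of $\Fset^+$. Indeed, for any $f\in\Fset$ we have $f(e)\in[a,b]$, so $a\le f(e)$, and by the definition $Q_f=\{(x,t):t\le f(x)\}$ this immediately gives $(e,a)\in Q_f$. Hence $(e,a)\in Q\cap B$ for every $Q\in\Fset^+$.

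Next I would invoke the definition of shattering, namely $\{Q\cap B:Q\in\Fset^+\}=2^B$. In particular the subset $B\setminus\{(e,a)\}\in 2^B$ must be realizable as $Q\cap B$ for some $Q\in\Fset^+$; but since $(e,a)$ belongs to \emph{every} such intersection, no $Q$ attains this subset, a contradiction. (In the degenerate case $B=\{(e,a)\}$ the same reasoning shows the subset $\emptyset$ is not attainable.) Therefore no shattered set can contain an element of the form $(e,a)$, which is the claim.

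There is essentially no obstacle here: the only points requiring care are that the functions in $\Fset$ take values in the \emph{closed} interval $[a,b]$, so that $f(e)\ge a$ always holds, and that ``shattered'' is taken in the strong sense that \emph{all} $2^{|B|}$ subsets of $B$ — including the one obtained by deleting $(e,a)$ — must be cut out by ranges of $\Fset^+$. Both facts are part of the range-space setup fixed before the statement, so the proof reduces to the one-line observation above.
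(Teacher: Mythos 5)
Your argument is correct: since every $f\in\Fset$ has range in $[a,b]$, the point $(e,a)$ lies in every range $Q_f\in\Fset^+$, so the subset $B\setminus\{(e,a)\}$ (or $\emptyset$, in the degenerate case) can never be realized as $Q\cap B$, contradicting shattering. The paper itself does not prove this lemma but imports it from \cite{Riondato2018MisoSoup}, and your one-line contradiction is precisely the standard argument used there, so there is nothing to add.
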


\begin{algorithm}[t]
	\caption{\texttt{FindThreshold}}\label{alg:findbeta}
	\KwIn{$\G=(\V,\E)$ with $|\V|=n$.}
	\KwOut{Value of the small-node degree threshold $\beta$.}
	$D=(0,\dots,0)\in \mathbb{R}^{n-2}; \beta \gets 2; T\gets 0$\;
	\lFor{$v\in V$}{
		$D_{\nodeDeg{v}}\gets D_{\nodeDeg{v}}+1$}
	\While{$T < c |V|$}{
		$T \gets \beta^2 D_{\beta}$\;
		$\beta \gets \beta + 1$\;
	}
	\KwRet{$\beta-1$}\;
	
\end{algorithm}

\begin{theorem}[Hoeffding bound~\citep{Mitzenmacher2017PC}]\label{th:Hoeffding}
	Let $X_1,\dots,X_s$ be $s$ independent random variables such that for all $i=1,\dots,s$, $\expectation[X_i] = \mu$ and $\Prob[X_i\in [a,b]]=1$. Then,
	\[
	\Prob\left(\left|\frac{1}{s}\sum_{i=1}^s X_i - \mu \right| \ge \varepsilon\right) \le 2\exp(-2s\varepsilon^2/(b-a)^2).
	\]
\end{theorem}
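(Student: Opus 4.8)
The plan is to reproduce the standard Chernoff-method proof of Hoeffding's inequality. First I would reduce the two-sided bound to a one-sided one: by a union bound it suffices to show that each of $\Prob[\frac{1}{s}\sum_{i=1}^s X_i - \mu \ge \varepsilon]$ and $\Prob[\frac{1}{s}\sum_{i=1}^s X_i - \mu \le -\varepsilon]$ is at most $\exp(-2s\varepsilon^2/(b-a)^2)$. The second bound follows from the first applied to the variables $-X_1,\dots,-X_s$, which are independent, have common mean $-\mu$, and lie in $[-b,-a]$, an interval of the same width $b-a$. So the whole argument reduces to the upper tail of a sum of bounded independent variables.

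For the upper tail I would use the exponential Markov (Chernoff) inequality together with independence: for any $t>0$,
\[
\Prob\left[\sum_{i=1}^s (X_i-\mu) \ge s\varepsilon\right] \le e^{-ts\varepsilon}\,\expectation\left[e^{t\sum_{i=1}^s (X_i-\mu)}\right] = e^{-ts\varepsilon}\prod_{i=1}^s \expectation\left[e^{t(X_i-\mu)}\right].
\]
The key ingredient is Hoeffding's lemma: if $Y$ is a mean-zero random variable with $Y\in[c,d]$ almost surely, then $\expectation[e^{tY}]\le e^{t^2(d-c)^2/8}$ for every $t>0$. Applying this to each $Y=X_i-\mu$, whose mean is $0$ and whose support lies in an interval of width $b-a$, bounds the product by $e^{st^2(b-a)^2/8}$, so the upper tail is at most $\exp(-ts\varepsilon+st^2(b-a)^2/8)$. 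Minimizing the exponent over $t$ (the optimum is $t=4\varepsilon/(b-a)^2$) gives exactly $\exp(-2s\varepsilon^2/(b-a)^2)$; combining with the reduction step produces the factor of $2$ in the statement.

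The main obstacle is proving Hoeffding's lemma, and this is where I would concentrate the work. The idea is to exploit convexity of $y\mapsto e^{ty}$ on $[c,d]$: writing $y$ as a convex combination of the endpoints gives $e^{ty}\le \frac{d-y}{d-c}e^{tc}+\frac{y-c}{d-c}e^{td}$, and taking expectations with $\expectation[Y]=0$ yields $\expectation[e^{tY}]\le \frac{d}{d-c}e^{tc}-\frac{c}{d-c}e^{td}$. Setting $\theta=-c/(d-c)\in[0,1]$ and $u=t(d-c)$, the right-hand side equals $e^{L(u)}$ with $L(u)=-\theta u+\log\!\big(1-\theta+\theta e^{u}\big)$. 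One checks $L(0)=0$, $L'(0)=0$, and $L''(u)=\frac{\theta(1-\theta)e^{u}}{(1-\theta+\theta e^{u})^2}\le \tfrac14$ for all $u$ (the last step because this quantity has the form $x(1-x)$ with $x=\theta e^u/(1-\theta+\theta e^u)\in[0,1]$); a second-order Taylor expansion of $L$ with integral remainder then gives $L(u)\le u^2/8$, i.e.\ $\expectation[e^{tY}]\le e^{t^2(d-c)^2/8}$. Beyond this lemma everything is elementary: the only genuinely delicate point is the uniform bound $L''\le \tfrac14$, and the only bookkeeping is the single-variable minimization over $t$.
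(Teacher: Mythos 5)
Your proposal is correct: it is the standard Chernoff-method proof of Hoeffding's inequality, with the union bound reducing to one-sided tails, Hoeffding's lemma proved via convexity and the bound $L''(u)\le \tfrac14$, and the optimization $t=4\varepsilon/(b-a)^2$ yielding exactly the stated exponent. Note, however, that the paper does not prove this statement at all; it is quoted as a known auxiliary result with a citation to the textbook of Mitzenmacher and Upfal, and your argument is essentially the canonical proof found there, so there is no discrepancy to reconcile.
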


\section{Missing subroutines}

\subsection{\edit{Upper-bounds computation}}\label{appsubsec:upperbounds}
\begin{algorithm}[t]
	\caption{$\mathtt{UpperBounds}$}\label{alg:upperBondRange}
	\KwIn{$G=(V,E), \nodesetPartition, |\wedgNodeSetGeneral{v}|_{v\in V}$, $q$.}
	\KwOut{Upper bounds: $\zeta, R_1,\dots,R_k$.}
	$\mathbf{r}_v \gets (0,\dots,0) \in \mathbb{R}^{k}$; $\bucketColor_v \gets 0$ for each $v\in V$\label{alglineRange:initVectors}\;
	\For{$v\in V$\label{alglineRange:forNodes}}{
		\ForEach{$w\in \mathcal{N}_v$\label{alglineRange:forNodesVects}}{
			$\mathbf{r}_v[j] \gets \mathbf{r}_v[j] + \frac{(1-2q)}{|\wedgNodeSetGeneral{w}|}$ such that $w\in V_j$\label{alglineRange:updateVectors}\;
		}
		$\bucketColor_v \gets |\{j : w\in V_j \text{ and } w\in (\neighborNode{v}\cup \{v\})\}|$\label{alglineRange:ubCols}\;
	}
	$R_i \gets 0$ for each $i=1\dots,k$; $\bucketColorSup \gets 0$\label{alglineRange:initRanges}\;
	\ForEach{$e=(u,v) \in E$\label{alglineRange:forEdges}}{
		$z\gets \arg\min \{d_u,d_v\}$\label{alglineRange:minDeg}\;
		\ForEach{$r_j \in \mathbf{r}_z$\label{alglineRange:forVect}}{
			$\widehat{R}_j \gets r_j  + q \tfrac{\nodeDeg{z}}{|\wedgNodeSetGeneral{u}|} \mathbf{1}[u\in V_j] +  q\tfrac{\nodeDeg{z}}{|\wedgNodeSetGeneral{v}|} \mathbf{1}[v\in V_j] $\label{alglineRange:ubPart}\;
			$R_j \gets \max\left\{R_j, \widehat{R}_j\frac{m}{|V_j|}\right\}$\label{alglineRange:ubRange}\;
		}
		$\bucketColorSup \gets \max\{\bucketColorSup, \zeta_z\}$\label{alglineRange:ubPdim}\;
	}
	\KwRet{$\floor{\log_2 \bucketColorSup}+1, R_1,\dots,R_k$\label{alglineRange:return}}\;
\end{algorithm}

Algorithm~\ref{alg:upperBondRange} details the procedure used to compute an upper bound to the range of functions $f_j(e)$ for each of the possible sampled edges $e\in \E$ and partitions $j\in[k]$. 
The main idea is to first iterate through all the nodes maintaining a vector of size $k$ 
(lines \ref{alglineRange:initVectors}-\ref{alglineRange:updateVectors}). 
Each entry in a vector is then used to compute an upper bound on the range of $f_j(e)$, 
for $j\in[k]$, by iterating all edges $e\in \E$ (lines~\ref{alglineRange:initRanges}-\ref{alglineRange:ubRange}). 
For each edge $e\in \E$ we consider the vector corresponding to the node with minimum degree over $e\in \E$ 
to obtain the upper bounds $R_j$, for $j\in [k]$ (line~\ref{alglineRange:minDeg}). 

\emph{Time complexity.} Next we analyze the time complexity of Algorithm~\ref{alg:upperBondRange},
showing that in practice it is bounded by $\bigO(m)$. 
First, computing the vectors $\mathbf{r}_v$, for $v\in V$, requires at most $\bigO(m)$ by the handshaking lemma. 
Next, the algorithm iterates the vectors $\mathbf{r}_z, z\in V$ for all $z$ corresponding to the minimum degree nodes over all edges $e\in \E$. 
By storing only the non-zero entries over vectors $\mathbf{r}_z, z\in V$, the total non-zero entries are bounded by $\bigO(k)$ and by $\bigO(d_z)$. 
Consider the bound $\bigO(k)$ then the time complexity of  Algorithm~\ref{alg:upperBondRange} is bounded by $\bigO(km)$, otherwise by
\ifextended
Lemma~\ref{lemma:ChibaN}, 
\else
a known result~\citep{Chiba1985} (see our extended version),
\fi
the bound is $\bigO(\arb m)$, where $\arb$ is the arboricity of $\G$. 
For practical values of $k$, i.e., when $k$ is a large constant, the above bound is linear in $m$. Also note that if $k\ge d_{\max}$ where $d_{\max}$ is the maximum degree of a node in $\G$ then we can avoid iterating all entries in $\mathbf{r}_z, z\in V$, and use a tight deterministic upper bound on each $f_j$ reducing the time complexity $\bigO(\arb m)$ to $\bigO(m)$, yielding linear complexity again \edit{(see~\Cref{subsec:pdimBound})}.

\subsection{Variance estimation}\label{app:varianceEst}
\begin{algorithm}[t]
	\caption{$\mathtt{Fixq}$}\label{alg:fixQ}
	\KwIn{$G=(V,E), \nodesetPartition, |\wedgNodeSetGeneral{v}|_{v\in V}$, $q$, sample size $c$.}
	\KwOut{Terms $A_j,B_j,C_j, j\in[k]$.}
	$\mathcal{S} \gets \mathtt{UniformSample}(E,c)$\tcp*{$|\mathcal{S}|=c$}
	\For{$e\in \mathcal{S}$}{
		$a_j(e) \gets \sum_{v\in V_j, v\in \neighborEdge{e}} 1/|\wedgNodeSetGeneral{v}|$\;
		$b_j(e) \gets \sum_{v\in V_j, v\in e}|\triSetEdge{e}|/|\wedgNodeSetGeneral{v}| + \sum_{v\in V_j, v\in \neighborEdge{e}} 1/|\wedgNodeSetGeneral{v}|$\;
		$\bar{a} \gets \bar{a} + a_j(e)$\;
		$\bar{b} \gets \bar{b} + b_j(e)$\;
	}
	\For{$e\in \mathcal{S}$}{
		$\hat{a}_j(e) \gets a_j(e) -\bar{a}/c$\;
		$\hat{b}_j(e) \gets b_j(e) -\bar{b}/c$\;
	}
	\For{$j\gets 1$ to $k$}{
		$A_j \gets \sum_{e\in \mathcal{S}} \hat{a}_j(e)^2$\; 
		$B_j \gets \sum_{e\in \mathcal{S}} 2\hat{a}_j(e)\hat{b}_j(e)$\;
		$C_j \gets \sum_{e\in \mathcal{S}} \hat{b}_j(e)^2$\;
	}
	\KwRet{$A_j,B_j,C_j, j\in[k]$;}
\end{algorithm}
The procedure that we use to estimate the variance of the random variables $f_j$ is detailed in Algorithm~\ref{alg:fixQ}.

\subsection{Filtering small-degree nodes}\label{app:filterSmall}
In this section we briefly show that filtering small degree nodes, as described in Section~\ref{subsec:filtersmall}, yields no estimation error for~\mainAlg.
\begin{lemma}\label{lemma:triangleDecomp}
Let $v\in \V$ and $|\triSetNode{v}^L|$ be the counts returned for node $v$ by Algorithm~\ref{alg:preprocess}, and $\G'$ be the filtered graph of the input graph $\G$. Then $|\triSetNode{v}| = |\triSetNode{v}^L| + |\triSetNode{v}'|$, where $\triSetNode{v}'$ (resp. $\triSetNode{v}$) is the set of triangles containing node $v\in \V'$ in $\G'$ (resp. $\G$).
\end{lemma}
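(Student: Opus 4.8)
The plan is to split the triangles of $\G$ incident to $v$ according to whether all of their vertices survive the filtering step, and then to verify that the preprocessing routine (Algorithm~\ref{alg:preprocess}) charges each non-surviving triangle to its surviving vertices exactly once. Let $\V'$ denote the set of nodes retained in $\G'$; since the filter only \emph{deletes} small-degree nodes, $\G'$ is exactly the subgraph of $\G$ induced by $\V'$. Fixing $v\in\V$, I would write $\triSetNode{v} = \triSetNode{v}^{\mathrm{in}} \cup \triSetNode{v}^{\mathrm{out}}$, where $\triSetNode{v}^{\mathrm{in}}$ is the set of $\singleTri\in\triSetNode{v}$ all of whose three vertices lie in $\V'$, and $\triSetNode{v}^{\mathrm{out}}$ is the set of $\singleTri\in\triSetNode{v}$ with at least one vertex in $\V\setminus\V'$. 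These two sets are disjoint and exhaust $\triSetNode{v}$, so $|\triSetNode{v}| = |\triSetNode{v}^{\mathrm{in}}| + |\triSetNode{v}^{\mathrm{out}}|$, and it only remains to identify the two terms.

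First I would show $\triSetNode{v}^{\mathrm{in}} = \triSetNode{v}'$. If $\singleTri\in\triSetNode{v}^{\mathrm{in}}$, then its three vertices are in $\V'$ and its three edges are in $\E$, and because $\G'$ is the subgraph induced by $\V'$ these edges also belong to $\E'$; hence $\singleTri$ is a triangle of $\G'$ containing $v$. Conversely, any triangle of $\G'$ containing $v$ is a triangle of $\G$ whose vertices are all in $\V'$, so it lies in $\triSetNode{v}^{\mathrm{in}}$. Therefore $|\triSetNode{v}^{\mathrm{in}}| = |\triSetNode{v}'|$.

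Next I would show $|\triSetNode{v}^L| = |\triSetNode{v}^{\mathrm{out}}|$ by inspecting the preprocessing routine: it iterates over every removed node $w$ (which has degree below $\beta$, so enumerating its incident triangles costs $\bigO(\nodeDeg{w}^2)$), and for each triangle $\singleTri$ it discovers and each vertex $u\in\singleTri$ with $u\in\V'$ it adds one unit to the local count $|\triSetNode{u}^L|$, using a canonical tie-break --- e.g.\ processing $\singleTri$ only when $w$ is the removed vertex of $\singleTri$ with the smallest identifier (equivalently, marking $\singleTri$ as already counted) --- so that a triangle with two or three removed vertices is discovered only once overall. Consequently every $\singleTri\in\triSetNode{v}^{\mathrm{out}}$ (which by definition contains $v\in\V'$ and at least one removed vertex) contributes exactly one unit to $|\triSetNode{v}^L|$, and no other triangle contributes anything, giving $|\triSetNode{v}^L| = |\triSetNode{v}^{\mathrm{out}}|$. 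For a removed node $v\notin\V'$ the routine instead computes $\triSetNode{v}$ exactly while $\triSetNode{v}'=\emptyset$, so the identity is trivial. Combining, $|\triSetNode{v}| = |\triSetNode{v}^{\mathrm{in}}| + |\triSetNode{v}^{\mathrm{out}}| = |\triSetNode{v}'| + |\triSetNode{v}^L|$, as claimed.

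The step I expect to require the most care is the last one: a triangle containing several removed vertices is naturally rediscovered once per removed vertex, so the correctness of the count identity hinges on the canonical-representative (or ``mark-as-processed'') rule in Algorithm~\ref{alg:preprocess} ensuring that such a triangle is attributed to each of its \emph{retained} vertices exactly once. The remainder is a routine partition-of-a-finite-set argument that uses only the fact that $\G'$ is an induced subgraph of $\G$.
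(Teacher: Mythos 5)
Your proposal is correct and takes essentially the same route as the paper's proof, which simply partitions \triSetNode{v} into the triangles containing at least one removed node and those whose vertices all survive (the latter being exactly the triangles of $v$ in the induced subgraph \G'); the paper is in fact terser and does not even spell out the exactly-once counting argument you supply. The only minor mismatch is the de-duplication mechanism: Algorithm~\ref{alg:preprocess} avoids recounting not via a smallest-identifier tie-break or explicit marking, but by deleting each processed small-degree node and updating $\E'$ before moving on, so a triangle is enumerated only when the first of its removed vertices is handled---functionally equivalent to what you assumed, and your conclusion stands.
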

\begin{proof}
Pick an arbitrary node $v\in \V'$ and let $\triSetEdge{v}$ be the set of triangles containing node $v\in \V'$ in $\G = (\V,\E)$. Then the set $\triSetEdge{v} = \{\singleTri_1, \dots, \singleTri_{|\triSetEdge{v}|}\}$ can be partitioned as follows, let $\triSetNode{v}^L = \{\singleTri \in \triSetNode{v}: \text{it exists } w\in (\V\setminus \V') \text{ and } w \in \delta\}$ corresponding to the set of triangles containing at least one node removed from $\G$ by Algorithm~\ref{alg:preprocess} as from the statement, and $\triSetNode{v}' = \triSetNode{v} \setminus \triSetNode{v}^L$. Therefore it follows that $\triSetNode{v} = \triSetNode{v}^L \cup \triSetNode{v}'$, yielding the claim, as $\triSetNode{v}^L \cap \triSetNode{v}' = \emptyset$.
\end{proof}
Lemma~\ref{lemma:triangleDecomp} allows us to safely remove small degree nodes while focusing on estimating the number of remaining coefficients $\clustGenPartition_j$ in the filtered graph $\G'$. The next lemma shows that if we obtain function $f_j$
with respect to some slightly modified scores $\clustGenPartition'_j$ we can recover $|f_j-\clustGenPartition_j| \le \varepsilon_j$, for $j\in [k]$ on the original graph $\G$, i.e., the desired output.

\begin{corollary}\label{corollary:approxMetricDecomp}
Let $\clustMetric_v' = \frac{|\triSetNode{v}'|}{|\wedgNodeSetGeneral{v}|}$ where $\wedgNodeSetGeneral{v}$ refers to the graph $\G$ (i.e., not preprocessed) and $\clustGenPartition_j' = 1/|\V_j| \sum_{v\in \V_j} \clustMetric_v'$ then if $f_j'$ is an estimate such that 
$
|f_j'-\clustGenPartition_j' | \le \varepsilon_j$, for $j\in [k]$ in $\G'$
then $f_j = f_j'+\frac{1}{|\V_j|} \sum_{v\in \V_j}\frac{|\triSetNode{v}^L|}{|\wedgNodeSetGeneral{v}|}$ is such that $|f_j-\clustGenPartition_j| \le \varepsilon_j$, for $j\in [k]$ in $\G$. 

\end{corollary}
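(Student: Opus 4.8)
The plan is to reduce the claim to a deterministic algebraic identity relating $\clustGenPartition_j$ to $\clustGenPartition_j'$, obtained by summing the per-node decomposition of Lemma~\ref{lemma:triangleDecomp} over each bucket, and then to observe that the explicit correction term added to $f_j'$ is \emph{exactly} the shift needed to make the two estimation errors identical, so the accuracy guarantee transfers verbatim.

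First I would fix a bucket $\V_j$ and, for bookkeeping, split it into $\V_j \cap \V'$ and $\V_j \setminus \V'$. For $v \in \V_j \cap \V'$, Lemma~\ref{lemma:triangleDecomp} gives $|\triSetNode{v}| = |\triSetNode{v}^L| + |\triSetNode{v}'|$; for a filtered node $v \in \V_j \setminus \V'$, the node is absent from $\G'$ so $|\triSetNode{v}'| = 0$ and every triangle through $v$ trivially contains a removed node, hence $|\triSetNode{v}^L| = |\triSetNode{v}|$ and the same identity holds (and, crucially, $|\triSetNode{v}^L|=|\triSetNode{v}|$ is computed exactly for such $v$ during preprocessing, while $|\triSetNode{v}^L|$ for surviving $v$ is tracked by the filtering step, so the correction term is available). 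Dividing by $|\wedgNodeSetGeneral{v}|$ — which is, throughout the statement, the wedge count measured on the \emph{original} graph $\G$, hence the identical denominator on both sides — and averaging over $\V_j$ yields
\[
\clustGenPartition_j \;=\; \frac{1}{|\V_j|}\sum_{v\in \V_j}\frac{|\triSetNode{v}'|}{|\wedgNodeSetGeneral{v}|} \;+\; \frac{1}{|\V_j|}\sum_{v\in \V_j}\frac{|\triSetNode{v}^L|}{|\wedgNodeSetGeneral{v}|} \;=\; \clustGenPartition_j' \;+\; \frac{1}{|\V_j|}\sum_{v\in \V_j}\frac{|\triSetNode{v}^L|}{|\wedgNodeSetGeneral{v}|}.
\]

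Then I would simply subtract: by the definition of $f_j$ in the statement, $f_j - \clustGenPartition_j = \bigl(f_j' + \tfrac{1}{|\V_j|}\sum_{v\in \V_j}\tfrac{|\triSetNode{v}^L|}{|\wedgNodeSetGeneral{v}|}\bigr) - \bigl(\clustGenPartition_j' + \tfrac{1}{|\V_j|}\sum_{v\in \V_j}\tfrac{|\triSetNode{v}^L|}{|\wedgNodeSetGeneral{v}|}\bigr) = f_j' - \clustGenPartition_j'$, so $|f_j - \clustGenPartition_j| = |f_j' - \clustGenPartition_j'| \le \varepsilon_j$ by hypothesis, for every $j \in [k]$; since the assumed bound on $\G'$ holds simultaneously over all buckets, so does the conclusion on $\G$, with the same probability delivered by \mainAlg on $\G'$.

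The whole argument is elementary, so there is no genuine obstacle; the only points requiring care are (i) keeping $|\wedgNodeSetGeneral{v}|$ consistently the $\G$-quantity on both sides, so nothing cancels incorrectly (the filtering perturbs neighborhoods, but both $\clustMetric_v'$ and the correction term are defined with $\G$-denominators by construction); (ii) handling the removed nodes, for which $\clustMetric_v'=0$ and the correction contributes their exact value; and (iii) noting that the correction is a deterministic additive shift, so it does not affect the concentration guarantee on $f_j'$ at all, and the probabilistic part is inherited unchanged.
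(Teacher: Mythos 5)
Your proposal is correct and follows essentially the same route as the paper's proof: decompose $\clustMetric_v$ via Lemma~\ref{lemma:triangleDecomp} with the $\G$-denominators, sum over each bucket to get $\clustGenPartition_j = \clustGenPartition_j' + \frac{1}{|\V_j|}\sum_{v\in \V_j}\frac{|\triSetNode{v}^L|}{|\wedgNodeSetGeneral{v}|}$, and note the correction term is exact so the error on $f_j$ equals the error on $f_j'$. Your explicit treatment of the removed nodes ($v\in\V_j\setminus\V'$, where $|\triSetNode{v}'|=0$ and $|\triSetNode{v}^L|=|\triSetNode{v}|$) is a small extra care step the paper leaves implicit, but it does not change the argument.
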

\begin{algorithm}[t]
	\caption{Filter small-degree nodes}\label{alg:preprocess}
	\KwIn{$G=(V,E)$.}
	\KwOut{Filtered graph $G'=(V',E')$, counts $|\triSetNode{v}^L|$ for $v\in V$.}
	$V'\gets V$\;
	\ForEach{$v\in V' : \nodeDeg{v} =1$}{
		$V'\gets V'\setminus \{v\}$\;	
	}
	$E' \gets \mathtt{Update}(E, V')$\;
	$\beta \gets \mathtt{FindThreshold}(G)$ \tcp*{$\Theta(n)$}
	\ForEach{$v\in V' : 2\le d(v) \le \beta$}{
		\ForEach{$\singleTri \in \triSetNode{v}$}{
			\lForEach{$w\in \singleTri$}{
				$|\triSetNode{w}^L| \gets |\triSetNode{w}^L| +1$}
		}
		$\V' \gets \V' \setminus{v}$\;
		$\E' \gets  \mathtt{Update}(\E, \V')$\;
	}
	\KwRet{$\G'=(\V',\E'), |\triSetNode{v}^L|$ for $v\in\V$}\;
	
\end{algorithm}

\begin{proof}
The proof directly follows from the definition of $\clustMetric_v$ which by Lemma~\ref{lemma:triangleDecomp} can be written as $\clustMetric_v = \frac{|\triSetNode{v}^L|+ |\triSetNode{v}'|}{|\wedgNodeSetGeneral{v}|}$.
Hence, $f_j = \frac{1}{|\V_j|}\sum_{v\in \V_j} \clustMetric_v = \frac{1}{|\V_j|}\sum_{v\in \V_j} \frac{|\triSetNode{v}'|}{|\wedgNodeSetGeneral{v}|} + \frac{1}{|\V_j|}\sum_{v\in \V_j} \frac{|\triSetNode{v}^L|}{|\wedgNodeSetGeneral{v}|} $ hence by the definition of $f_j'$, and the fact that there is no error over the second term, the claim follows.
\end{proof}

The above result captures the fact that we can execute~\mainAlg on the processed graph $\G'$, by only adopting $\wedgNodeSetGeneral{v}$, for $v\in V$, from the original graph $\G$ without any loss on the guarantees over $\G$, with negligible overhead (i.e., only $\bigO(n)$ total additional work).

The exact procedure for  filtering small degree nodes is detailed in Algorithm~\ref{alg:preprocess}.

\section{Baselines}\label{appsec:baselines}

\begin{algorithm}[t]
	\caption{\texttt{\baseFull}}\label{alg:fullWS}
	\SetKwComment{Comment}{$\triangleright$\ }{}
	\KwIn{$\G=(\V,\E)$, partitions $V_i$, $\varepsilon_i, i=1,\dots,k, \eta$, $\clustMetric$.}
	\KwOut{$f(V_i) \in\Psi(V_i) \pm \varepsilon_i$ w.p.\ $\ge 1-\eta$.}
	$\widehat{\clustMetric}_{v} \gets 0, v\in \V$; $s\gets \ceil{\tfrac{1}{2\varepsilon_i^2} \log(2k/\eta)}$\;
		\ForEach{$i \gets 1$ to  $k$}{
			$\mathcal{S} \gets \emptyset$\;
				\For{$j \gets 1$ to $s$}{
					$v\gets \mathtt{Uniform}(V_i)$\;
					\If{$\clustMetric = \clustCoeff$}{
						$\mathcal{S}\gets \mathcal{S} \cup \mathtt{SampleWedge}(v)$\;}
					\ElseIf{$\clustMetric=\closureCoeff$}{
						$\mathcal{S}\gets \mathcal{S} \cup \mathtt{Sample2Path}(v)$\;
					}
				}
				\ForEach{$\mathtt{w} \in \mathcal{S} : ``\mathtt{w}=\mathtt{closed}$''}
				{
					$\estFunc_i \gets \estFunc_i + \tfrac{1}{s}$\;
			}}
		\KwRet{$\estFunc_i, i=1,\dots,k$}\;
	\end{algorithm}

	\begin{algorithm}[t]
		\caption{\texttt{SampleWedge}~\citep{Kolda2014, Seshadhri2014}}\label{alg:WS}
		\SetKwComment{Comment}{$\triangleright$\ }{}
		\KwIn{$\G=(\V,\E), v\in \V$.}
		\KwOut{A wedge sampled uniformly at random from $\wedgNodeSetCenter{v}$.}
			$u_1 \gets \mathtt{Uniform}(\neighborNode{v})$\;
			$u_2 \gets \mathtt{Uniform}(\neighborNode{v}\setminus\{u_1\})$\;
			$\wedg \gets \langle u_1,u_2\rangle$\;
		\KwRet{$\wedg$}\;
	\end{algorithm}
	
	\begin{algorithm}[t]
		\caption{\texttt{Sample2Path}}\label{alg:2PS}
		\SetKwComment{Comment}{$\triangleright$\ }{}
		\KwIn{$\G=(\V,\E), v\in \V$.}
		\KwOut{A 2-path sampled uniformly at random from $\wedgNodeSetHead{v}$.}
		$p_{u_1}, \dots p_{u_{|\nodeDeg{v}|}} \gets \tfrac{\nodeDeg{u_1}-1}{\sum_{u\in \neighborNode{v}} (\nodeDeg{u}-1)}, \dots, \tfrac{\nodeDeg{u_{|\nodeDeg{v}|}} -1}{ \sum_{u\in \neighborNode{v}} (\nodeDeg{u}-1) }, u_i \in \neighborNode{v}$\;
			$u_1 \gets \mathtt{Sample}(\neighborNode{v}, p_i)$\;
			$u_2 \gets \mathtt{Uniform}(\neighborNode{u_1}\setminus\{v\})$\;
			$\wedg \gets \langle u_1,u_2\rangle$\;
		\KwRet{$\wedg$}\;
	\end{algorithm}
	
The baseline considered is based on the state-of-the-art approach devised for estimating local clustering coefficients by sampling~\citep{Seshadhri2014}. We denoted such algorithm by~\baseFull. 
We present such baseline in Algorithm~\ref{alg:fullWS}.
Note that the~\baseFull exactly coincides with the approach of~\citet{Seshadhri2014} when the considered metric is the clustering coefficient, in particular,
such algorithm proceeds as follows: consider each partition independently $\V_i$,  $i=1,\dots,k$:  
($i$)~uniformly sample a random node $v$ from each partition $\V_i$, $i=1,\dots,k$; 
($ii$)~for the sampled node, uniformly sample a wedge centered at the node and update an estimate if the wedge is closed, i.e., it forms a triangle; 
($iii$)~repeat such steps until concentration (i.e., for $\bigO(\varepsilon_i^{-2} \log(k/\eta))$).  
Such approach is based on the fact that a random sampled wedge from a node $v\in V_i$ 
provides an unbiased estimate of $\clustCoeff_v$. 
Inspired by such approach we designed a non trivial baseline for estimating the closure coefficient $\closureCoeff_v$ for a node $v\in \V$, which is based on uniformly sampling a wedge \emph{headed} at $v\in \V$, that is we uniformly sample $\wedg$ from $\wedgNodeSetHead{v}$, which requires particular attention to avoid bias in the final sampling procedure, in particular we have the following.

\begin{lemma}\label{lemma:unbiasedBaseline}
	Each wedge $\wedg \in \wedgNodeSetHead{v}$ in output to Algorithm~\ref{alg:2PS} is sampled uniformly at random from $\wedgNodeSetHead{v}$ for $v\in V$, and the probability that a sampled path is closed, i.e., forms a triangle with $v\in\V$ corresponds to $\closureCoeff_v$.
\end{lemma}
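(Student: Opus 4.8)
The plan is to compute directly the probability with which Algorithm~\ref{alg:2PS} returns a given headed wedge, and then to count how many headed wedges are closed.

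First I would fix a canonical representation of the elements of $\wedgNodeSetHead{v}$. Every $\wedg\in\wedgNodeSetHead{v}$ has the form $v - u_1 - u_2$, i.e.\ it consists of the edge $\{v,u_1\}$ with $u_1\in\neighborNode{v}$ together with the edge $\{u_1,u_2\}$ with $u_2\in\neighborNode{u_1}\setminus\{v\}$: indeed $u_1$ must be the (unique) center of $\wedg$, which is distinct from $v$ because $v$ is the head, and $u_2\ne v$ since the two edges of a wedge are distinct. Conversely each such pair $(u_1,u_2)$ determines a distinct headed wedge, so the set of outputs $\langle u_1,u_2\rangle$ of Algorithm~\ref{alg:2PS} is in bijection with $\wedgNodeSetHead{v}$, whose size is $\sum_{u\in\neighborNode{v}}(\nodeDeg{u}-1)$ as in Section~\ref{sec:prelims}.

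Next, I would fix any target wedge $v - u_1 - u_2$ and note that Algorithm~\ref{alg:2PS} outputs it exactly when it first selects $u_1$, with probability $p_{u_1}=\tfrac{\nodeDeg{u_1}-1}{\sum_{u\in\neighborNode{v}}(\nodeDeg{u}-1)}$, and then selects $u_2$ uniformly from $\neighborNode{u_1}\setminus\{v\}$, a set of size exactly $\nodeDeg{u_1}-1$ since $v\in\neighborNode{u_1}$. Multiplying, the factors $\nodeDeg{u_1}-1$ cancel and the probability equals $\tfrac{1}{\sum_{u\in\neighborNode{v}}(\nodeDeg{u}-1)}=\tfrac{1}{|\wedgNodeSetHead{v}|}$, which does not depend on the chosen wedge; this proves uniformity. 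This cancellation is the only step that needs care: one must also check that the second sampling step is well defined whenever it is reached, which holds because $p_{u_1}>0$ forces $\nodeDeg{u_1}\ge 2$, so $\neighborNode{u_1}\setminus\{v\}\ne\emptyset$, and the whole statement is vacuous when $\wedgNodeSetHead{v}=\emptyset$.

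Finally, for the closure probability I would observe that a headed wedge $v - u_1 - u_2$ is closed (forms a triangle with $v$) precisely when $\{v,u_2\}\in\E$, i.e.\ when $\{v,u_1,u_2\}$ induces a triangle. Each triangle $\singleTri=\{v,a,b\}\in\triSetNode{v}$ yields exactly the two closed headed wedges $v - a - b$ and $v - b - a$, distinct triangles yield disjoint pairs, and by the bijection above every closed headed wedge arises this way. Hence $\wedgNodeSetHead{v}$ contains exactly $2|\triSetNode{v}|$ closed wedges, and since the sampled wedge is uniform over $\wedgNodeSetHead{v}$, the probability it is closed is $\tfrac{2|\triSetNode{v}|}{|\wedgNodeSetHead{v}|}=\closureCoeff_v$ by Definition~\ref{def:coefficientsDef}, which completes the proof. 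The main (mild) obstacle throughout is bookkeeping: verifying the bijection so that wedges are neither double-counted in the $2|\triSetNode{v}|$ tally nor missed, and tracking the $\nodeDeg{u_1}-1$ factor so the final cancellation is exact.
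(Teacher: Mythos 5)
Your proposal is correct and follows essentially the same route as the paper's proof: compute the sampling probability of a fixed headed wedge as $p_{u_1}\cdot\tfrac{1}{\nodeDeg{u_1}-1}$, observe the cancellation giving $\tfrac{1}{\sum_{u\in\neighborNode{v}}(\nodeDeg{u}-1)}$ (uniformity), then count closed wedges as $2|\triSetNode{v}|$ since each triangle containing $v$ contributes two headed wedges, yielding $\closureCoeff_v$. Your additional bookkeeping (the explicit bijection and the check that the second sampling step is well defined) only makes explicit what the paper leaves implicit.
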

\begin{proof}
	Let us compute the probability to sample a path $\wedg = \langle u_1, u_2 \rangle \in \wedgNodeSetHead{v}$ of length-2 starting from a node $v\in \V$,
	\begin{align*}
	&\mathbb{P}[\wedg= \langle u_1, u_2 \rangle \text{ is sampled} ] = \mathbb{P}[u_1 \text{ is 1-st node}] \mathbb{P}[u_2 \text{ is 2-nd node}] \\
	&= p_{u_1} \cdot p_{u_2} = \tfrac{\nodeDeg{u_1}-1}{\sum_{u\in \neighborNode{v}} (\nodeDeg{u}-1)} \cdot \tfrac{1}{\nodeDeg{u_1}-1} =  \tfrac{1}{\sum_{u\in \neighborNode{v}} (\nodeDeg{u}-1)}
	\end{align*}
	Therefore the 2-path $\wedg = \langle u_1, u_2 \rangle$ is sampled uniformly among all the $\sum_{u\in \neighborNode{v}} (\nodeDeg{u}-1)$ paths of length 2 headed at $v\in \V$. To compute the probability that such path is closed, notice that a path can be closed only if $\{u_1,u_2,v\}\in \triSet$, that is if the three nodes form a triangle, hence
	\[
	\mathbb{P}[\wedg= \langle u_1, u_2 \rangle \text{ is closed} ] = \tfrac{2|\triSetNode{v}|}{\sum_{u\in \neighborNode{v}} (\nodeDeg{u}-1)} = \closureCoeff_v.
	\] 
	where we used the fact that each triangle contributes to two distinct wedges that can be sampled by Algorithm~\ref{alg:2PS}.
\end{proof}

\begin{corollary}
	The output $f_i$ of Algorithm~\ref{alg:fullWS} is unbiased and it holds that $\Prob[|f_i-\clustGenPartition_i| \ge \varepsilon_i] \le \eta$ simultaneously for all $i=1,\dots,k$.
\end{corollary}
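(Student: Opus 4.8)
The plan is a standard two–level sampling argument followed by Hoeffding plus a union bound. Fix a partition index $i\in[k]$, and for $j=1,\dots,s$ let $Y_{i,j}\in\{0,1\}$ be the indicator that the $j$-th structure sampled in the inner loop of Algorithm~\ref{alg:fullWS} for bucket $V_i$ is \emph{closed} (a wedge centered at the sampled node when $\clustMetric=\clustCoeff$, a $2$-path headed at it when $\clustMetric=\closureCoeff$). Then $\estFunc_i=\tfrac1s\sum_{j=1}^s Y_{i,j}$ by construction, so it suffices to control $Y_{i,j}$.

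First I would establish that $\expectation[Y_{i,j}]=\clustGenPartition_i$. Condition on the node $v\in V_i$ drawn by $\mathtt{Uniform}(V_i)$. In the clustering case, $\mathtt{SampleWedge}$ (Algorithm~\ref{alg:WS}) returns a wedge drawn uniformly from $\wedgNodeSetCenter{v}$, and such a wedge is closed exactly when its two endpoints are adjacent, i.e.\ with probability $|\triSetNode{v}|/\binom{\nodeDeg{v}}{2}=\clustCoeff_v$; in the closure case this is precisely Lemma~\ref{lemma:unbiasedBaseline}, which gives closing probability $\closureCoeff_v$. Hence $\Prob[Y_{i,j}=1\mid v]=\clustMetric_v$, and since $v$ is uniform on $V_i$,
\[
\expectation[Y_{i,j}]=\frac{1}{|V_i|}\sum_{v\in V_i}\clustMetric_v=\clustGenPartition_i .
\]
Because a fresh node and a fresh wedge are drawn in every inner iteration, $Y_{i,1},\dots,Y_{i,s}$ are i.i.d., and therefore $\expectation[\estFunc_i]=\clustGenPartition_i$, giving unbiasedness.

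Next I would apply Theorem~\ref{th:Hoeffding} to the i.i.d.\ variables $Y_{i,1},\dots,Y_{i,s}$, which lie in $[0,1]$, to get
\[
\Prob\!\left[\,\bigl|\estFunc_i-\clustGenPartition_i\bigr|\ge \varepsilon_i\right]\le 2\exp\!\left(-2s\varepsilon_i^2\right).
\]
Plugging in $s=\ceil{\tfrac{1}{2\varepsilon_i^2}\log(2k/\eta)}$ (the sample size set in Algorithm~\ref{alg:fullWS}) makes the right-hand side at most $\eta/k$. A union bound over the $k$ buckets then yields $\Prob[\exists\, i:\,|\estFunc_i-\clustGenPartition_i|\ge\varepsilon_i]\le\eta$, i.e.\ the stated simultaneous guarantee.

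The only delicate points are bookkeeping rather than conceptual: one must argue carefully that the inner-loop draws are genuinely independent across $j$ (they are, since each uses an independent uniform node and an independent uniform wedge), and one must handle degenerate nodes for which $\wedgNodeSetCenter{v}$ or $\wedgNodeSetHead{v}$ is empty — these are excluded by the same convention/preprocessing used for $\mainAlg$ (such nodes contribute $\clustMetric_v=0$), so they do not affect either the conditional-expectation computation or the $[0,1]$ range needed for Hoeffding. I expect verifying that $\mathtt{Sample2Path}$ indeed samples uniformly from $\wedgNodeSetHead{v}$ to be the main thing to get right, but this is exactly what Lemma~\ref{lemma:unbiasedBaseline} already provides, so the corollary follows almost immediately from it together with Theorem~\ref{th:Hoeffding}.
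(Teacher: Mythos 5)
Your proposal is correct and follows essentially the same route as the paper: show the per-sample closure indicator has expectation $\clustGenPartition_i$ (using Lemma~\ref{lemma:unbiasedBaseline} for the closure case), then apply Hoeffding's bound (Theorem~\ref{th:Hoeffding}) with the chosen sample size and a union bound over the $k$ buckets. The only cosmetic difference is that the paper dispatches the clustering-coefficient case by citing the wedge-sampling results of Seshadhri et al., while you verify it directly by conditioning on the sampled node, which is the same calculation.
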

\begin{proof}
	When Algorithm~\ref{alg:WS} is executed for the clustering coefficient $\clustCoeff$, 
	this immediately follows from the results of~\citet{Seshadhri2014}.
	While when Algorithm~\ref{alg:WS} is executed for the closure coefficient $\closureCoeff$ we e can write $f_i$ as,
	\[
	\estFunc_i = \frac{1}{s}\sum_{i=1}^s \sum_{v\in V} X_\wedg^v 
	\]
	where $X_\wedg^v$ is an indicator random variable if the wedge $\wedg$ headed at $v\in \V$sampled by the algorithm  is closed, clearly $\Prob[X_\wedg^v = 1] = \frac{\closureCoeff_v}{|V_i|}$ as it is the results of the event that node $v\in \V_i$ is sampled (which occurs with probability $1/|\V_i|$) and that a closed wedge headed at $v\in \V_i$ is sampled that occurs with probability $\closureCoeff_v$ by Corollary~\ref{lemma:unbiasedBaseline}, hence $\expectation[\estFunc_i] = \clustGenPartition_i$. The concentration of $f_i$ follows by Hoeffding's bound~(Theorem~\ref{th:Hoeffding}) and the application of the union bound over the $k$ partitions.
\end{proof}

\section{Parameter setting}\label{appsec:parameterGuidance}
\edit{Following the discussion in~\Cref{subsec:adaptivity}, since~\mainAlg adapts both the parameter $q$ and the bounds $\widehat{\varepsilon}_j$ we found that best parameter setting is as follows:
\begin{itemize}
	\item Setting the parameter $C$ (the threshold defining the filtering of small-degree nodes) according to the degree distribution of the graph. 
	That is, such that the minimum-degree of the nodes $v\in V'$ in the remaining graph $G'$ have degree $\ge 10$ in the original graph $G$ (note that they may have smaller degree in $G'$). 
	This helps to obtain a better bound for the values $R_j$, for $j\in[k]$.
	\item Setting the parameter $\eta = 0.01$, as it gives high probabilistic guarantees of success and it has little impact on the final sample size.
	\item Setting $c$, i.e., the sample size used to estimate the variance of the functions $f_j$, for $j\in [k]$ and optimize the parameter $q$ to a small constant, (we use $c=500$), therefore fixing the value of $q$ becomes extremely efficient.
	\item Setting the parameter $\theta$ controlling the geometric sample schedule $s_i = \theta s_{i-1}$ to $\theta = 1.4$, in accordance with previous work~\cite{Riondato2018Abra, Pellegrina2023Silvan}. Often yielding good results in practice as shown in~\Cref{sec:exps}.
	\item Setting the parameters $\varepsilon_j = \varepsilon \in (0.15,0.05)$, this is given by the fact that~\mainAlg's guarantees are adaptive. Hence, as from~\Cref{exps:sota} it is often the case that~\mainAlg returns bounds $\widehat{\varepsilon}_j$ much smaller than $\varepsilon_j$ (up to the order of $10^{-4}$). This is clearly, another advantage of our algorithm $\varepsilon$.
\end{itemize}
Concerning~\algFixedSS we observe that it is often sufficient to set the sample size $s = m/500$ where $m$ is the number of edges in the graph $G$, to obtain empirical errors $|f_j-\clustGenPartition_j|$ that are negligible, yielding also remarkably accurate relative approximations.
}

\begin{figure*}
	\addtolength{\tabcolsep}{-0.5em}
	\begin{tabular}{ll}
		\includegraphics[width=1.1\columnwidth]{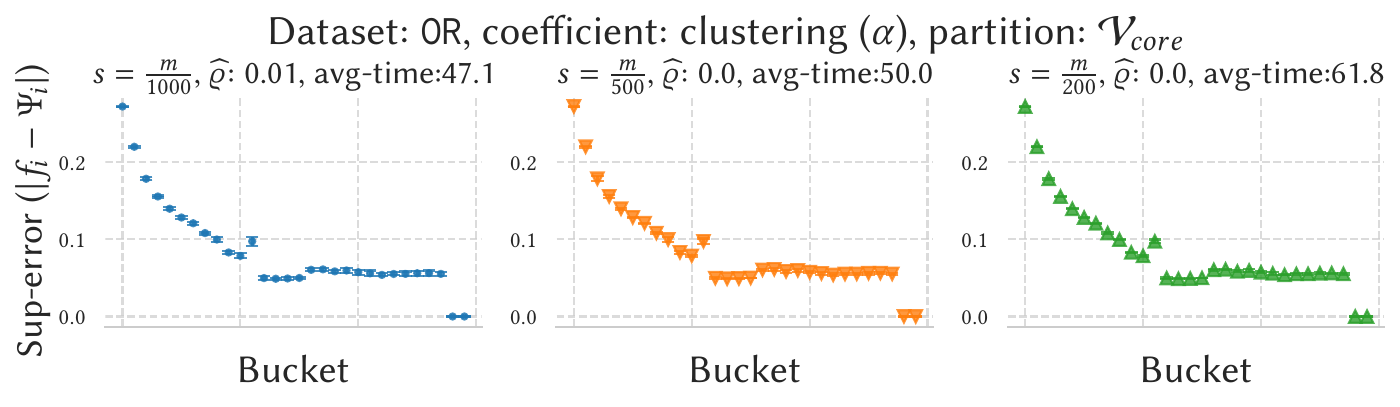}&
		\includegraphics[width=1.1\columnwidth]{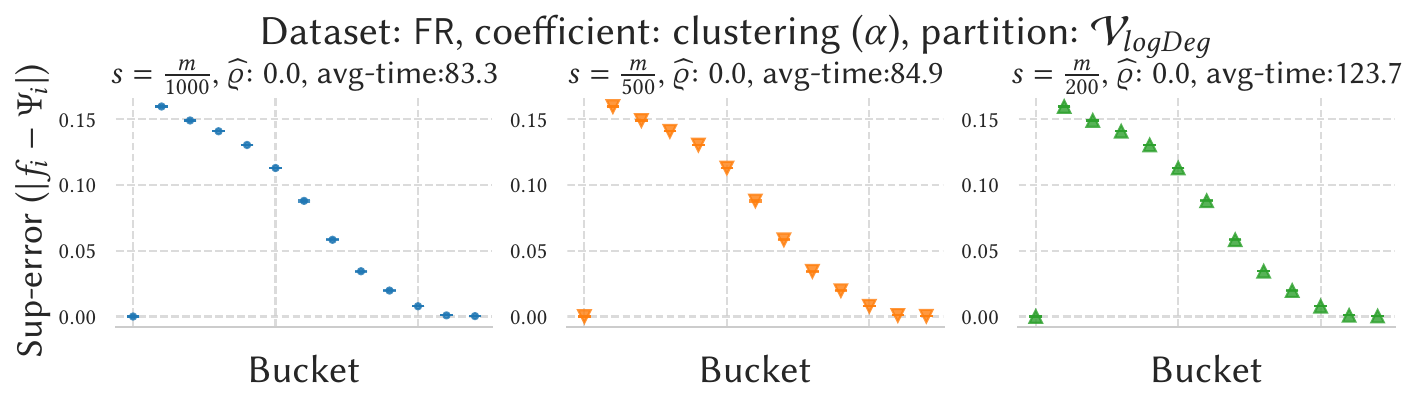}\\
		\includegraphics[width=1.1\columnwidth]{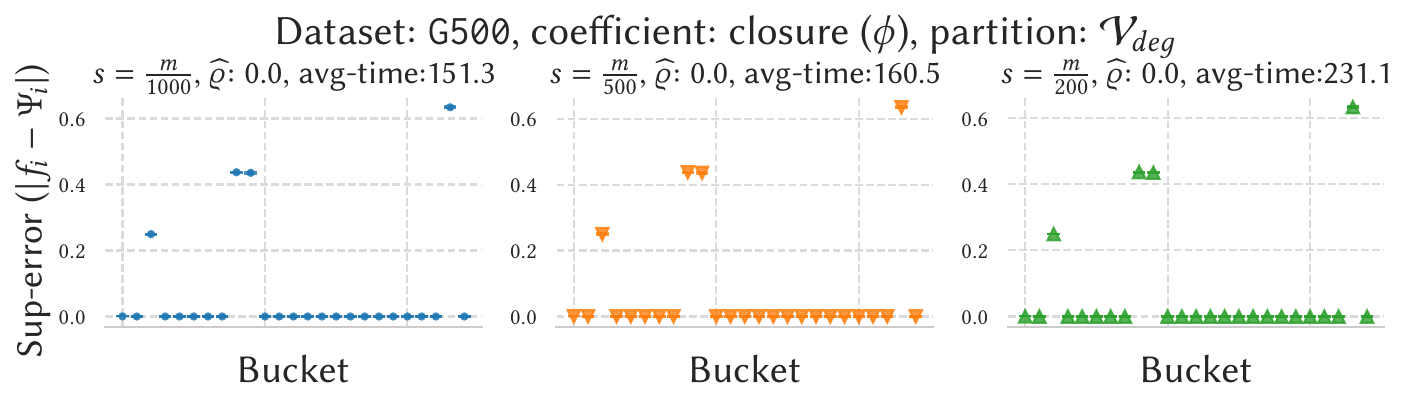}&
		\includegraphics[width=1.1\columnwidth]{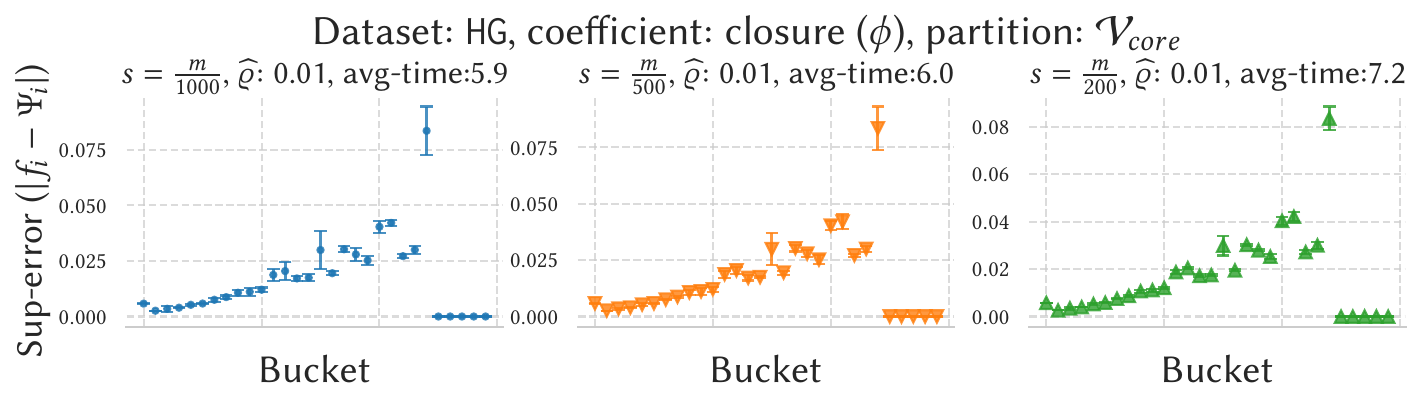}\\
		\includegraphics[width=1.1\columnwidth]{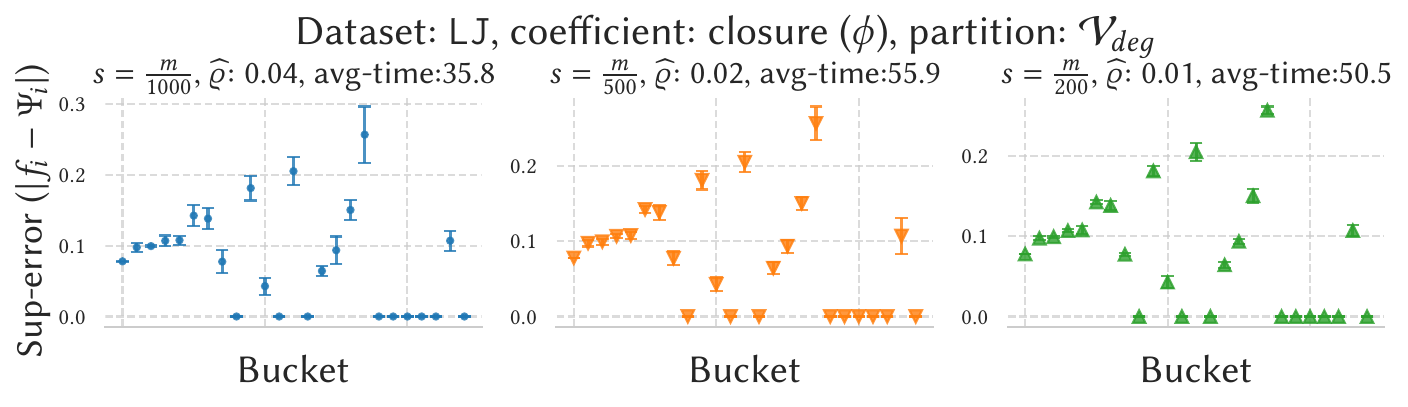}&
		\includegraphics[width=1.1\columnwidth]{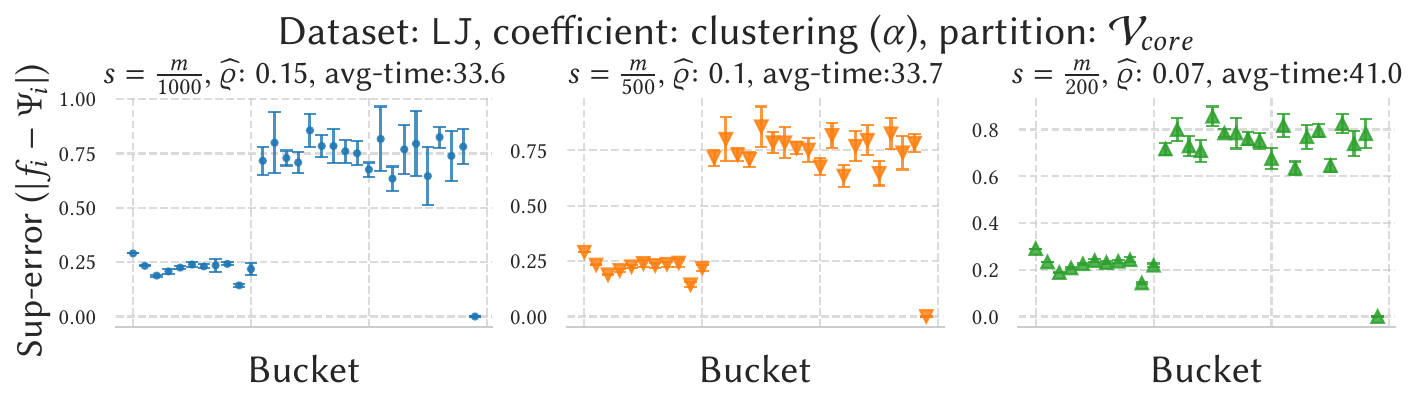}\\
		\includegraphics[width=1.1\columnwidth]{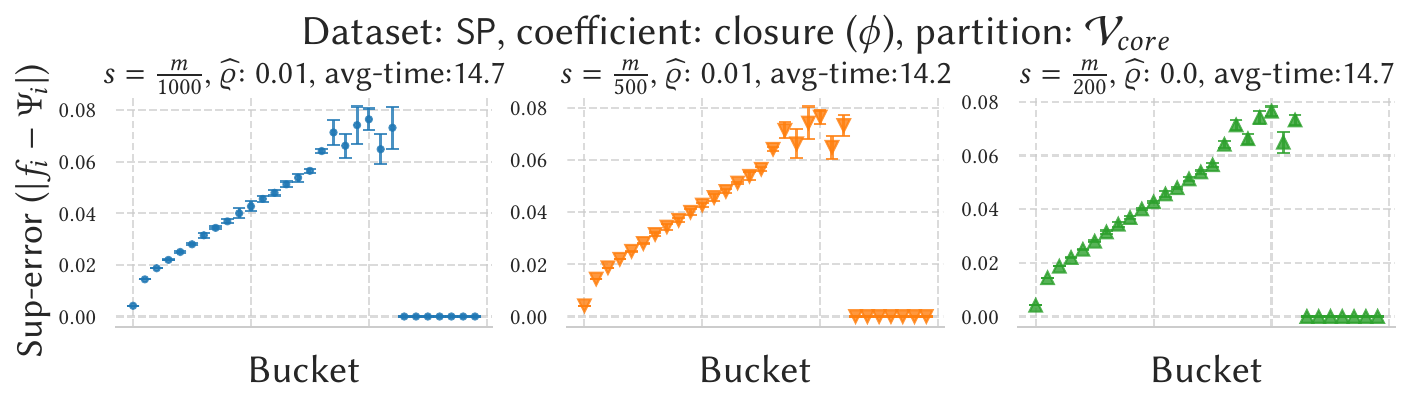}&
		\includegraphics[width=1.1\columnwidth]{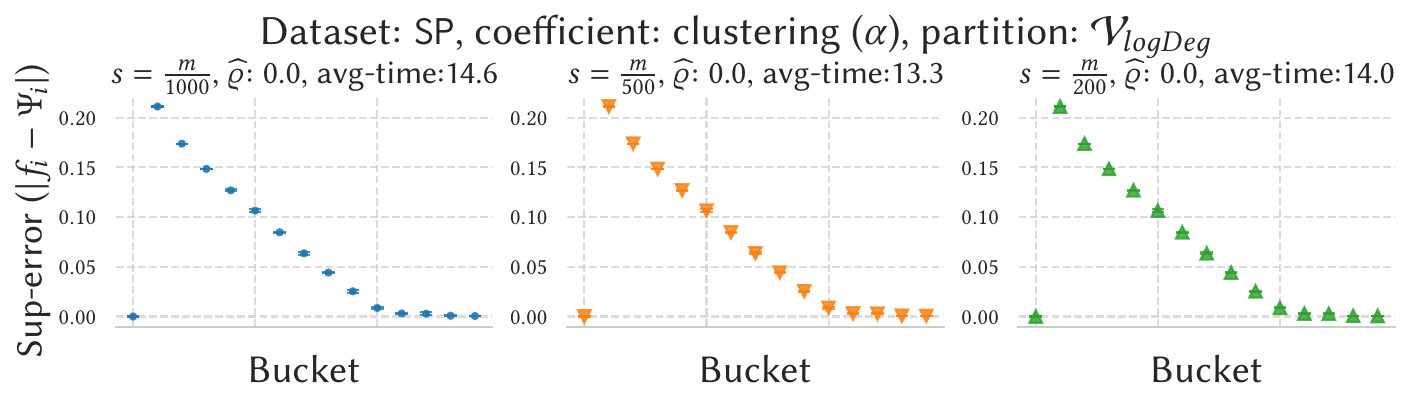}\\
	\end{tabular}
	\caption{\edit{Value $\clustGenPartition_i$} and its \edit{maximum error ($|f_i-\clustGenPartition_i|$) over five runs. We also report, the supremum error $\hat{\varrho}=\sup_{i\in[k]} |f_i-\clustGenPartition_i|$} and the average runtime over five independent runs across all buckets, for varying sample size \edit{($s\in\{1,2,5\}$\textperthousand\ of the total edges $m$).}}
	\label{fig:varyingSSApp}
\end{figure*}

\section{Additional results}

\subsection{Reproducibility}
The code used for our experiments is available online (\CODEURL). 
Below we provide additional details on the setup of the experiments.
\subsubsection{Datasets}\label{appsec:datasets}
All the datasets that we considered are available online, in particular we downloaded some of the largest graphs from SNAP~\cite{Leskovec2016SNAP}, and the network repository~\cite{Rossi2015NetworkRep}. 
\edit{More in more detail,
\begin{itemize}
	\item fb-CMU (\url{https://networkrepository.com/fb-CMU-Carnegie49.php})
	\item SP (\url{https://snap.stanford.edu/data/soc-Pokec.html})
	\item FR (\url{https://snap.stanford.edu/data/com-Friendster.html})
	\item OR (\url{https://snap.stanford.edu/data/com-Orkut.html})
	\item LJ (\url{https://snap.stanford.edu/data/soc-LiveJournal1.html})
	\item BM (\url{https://networkrepository.com/bio-mouse-gene.php})
	\item G500 (\url{https://networkrepository.com/graph500-scale23-ef16-adj.php})
	\item GP (\url{https://snap.stanford.edu/data/gplus_combined.txt.gz})
	\item PT (\url{https://networkrepository.com/proteins-all.php})
	\item HW (\url{https://networkrepository.com/ca-hollywood-2009.php})
	\item HG (\url{https://snap.stanford.edu/data/higgs-twitter.html})
	\item BNH (\url{https://networkrepository.com/bn-human-BNU-1-0025919-session-1.php})
	\item TW (\url{https://snap.stanford.edu/data/twitch_gamers.html})
\end{itemize}}
Since we deal with simple and undirected graphs, we only performed standard pre-processing, i.e., removing self loops and parallel edges, and remapping nodes from $1,\dots,n$.

\subsubsection{Parameter and settings}\label{appsec:params} We further clarify all the parameter settings used in the various experiments. Note that all the scripts for reproducing the results are provided with the source code (i.e., containing also the parameters described hereby). This section is in addition to that.
\begin{itemize}
	\item For the setting of Section~\ref{exps:accuracyAndEff}: we do not set the values of the parameters $\varepsilon,\delta$, as we used~\algFixedSS. The value of $C$ used for the optimization in Section~\ref{subsec:algPractOpt} is fixed to 30 for all datasets but \texttt{G500}, \texttt{BNH}, \texttt{BM} where $C=100$. As discussed in Section~\ref{exps:accuracyAndEff}, we tested different values of $s$ according to the size of the graph considered setting $s$=1\textperthousand, $s$=2\textperthousand, $s$=5\textperthousand. The value of the parameter $q$ is optimized according to Section~\ref{subsec:varianceOpt} and a total samples of 10\% of $s$.
	\item For the setting in Section~\ref{exps:sota}: we run all datasets with $\varepsilon=0.075$ and $\delta=0.01$. For each partition we obtain $\widehat{\varepsilon}$ from~\mainAlg and we use it as input for the baselines. The value of $C$ used for the optimization in Section~\ref{subsec:algPractOpt} is fixed to 150 for all datasets but \texttt{G500},\texttt{GSH}, \texttt{BNH}, \texttt{BM} where $C=500$. The value of the parameter $q$ is optimized according to Section~\ref{subsec:varianceOpt} over a total of 500 sampled edges.
	The time limit was set to 10 minutes for all datasets but the largest/densest ones (i.e., \texttt{G500}, \texttt{BNH}, \texttt{BM}) where the time limit was set to 1 hour.
	\item For the setting of Section~\ref{subsec:caseStudy}: we set $\varepsilon_j=\varepsilon=0.05$ for each $j$ corresponding to a different community, and $\eta=0.01$. The results shown, are over a single run.
\end{itemize}

\begin{figure}
	\includegraphics[width=\columnwidth]{figures/adapt/testlegend}
	\addtolength{\tabcolsep}{-0.6em}
	\begin{tabular}{llllll}
		\includegraphics[width=0.35\columnwidth]{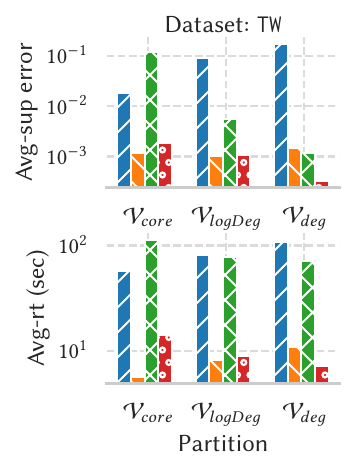} & \includegraphics[width=0.35\columnwidth]{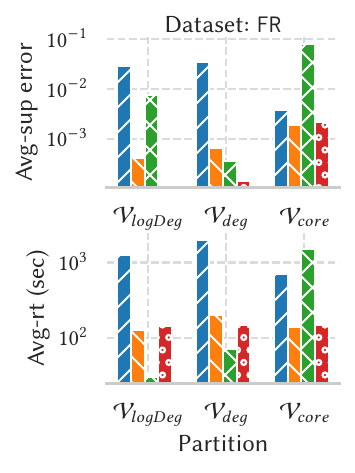} & \includegraphics[width=0.35\columnwidth]{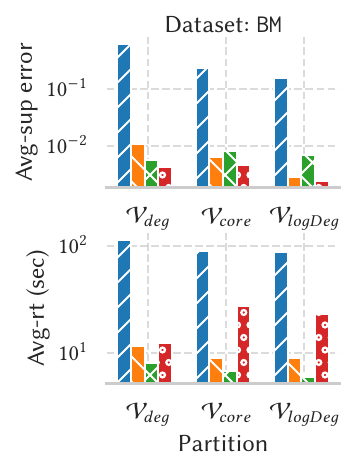} & 
	\end{tabular}
	\caption{Comparison of~\mainAlg and the baselines~\baseFull. For each dataset we show, (top plot): the average supremum error over all buckets over the various runs. (bottom): average runtime to perform an execution.}
	\label{fig:expsSotaApp}
\end{figure}

\subsubsection{DBLP dataset}\label{appsubsec:DBLP}
Here we provide additional information on how we build the DBLP dataset. 
We used the DBLP dump referring to August 2024 (\url{https://dblp.org/xml/release/dblp-2024-08-01.xml.gz}), all the code to generate such data is reported in our repository.
To classify the authors on each snapshot from Section~\ref{subsec:caseStudy}.
considering the following classification rule,
\begin{itemize}
	\item We first consider the following fields represented by a set of conferences:
	\begin{itemize}
		\item Data management: VLDB, PVLDB, SIGMOD, PODS, ICDE, PODC, SIGIR;
		\item Data mining: KDD, WWW, CIKM, SDM, DAMI, TKDD, TKDE, WSDM, SIGKDD;
		\item Machine learning: NIPS, ICML, ICLR, AAAI, IJCAI, AAMAS;
		\item Theoretical computer science (TCS): STOC, FOCS, SODA, ITCS, ICALP, ISAAC, ESA, FOSSACS, STACS, COLT, MFCS, STACS;
		\item Computational biology: RECOMB, CIBB, ISMB, WABI, PSB;
		\item Software engineering: ASE, DEBS, ESEC, FSE, ESEM, ICPE, ICSE, ISSTA, MOBILESoft, MoDELS, CBSE, PASTE, SIGSOFT;
	\end{itemize}
	\item For each author we count the number of publication in the above fields, and assign the label according to field in which the author published most. 
	If the author did not publish in any of the fields, we assign it to the ``other'' category.
\end{itemize}

We note that for some of the graphs that refer to years from 1970 to 1995, for some communities we have no information, i.e., the number of authors in such communities is 0. This is likely due to the classification above, and the set of conferences that we used to classify the authors, which started after the considered years. To obtain the score in Figure~\ref{fig:DBLPTemporal} we use~\mainAlg with $\varepsilon=0.05$ and $\eta=0.01$ on each dataset over a single run.

\begin{figure*}
	\addtolength{\tabcolsep}{-0.5em}
	\begin{tabular}{ll}
		\includegraphics[width=1.1\columnwidth]{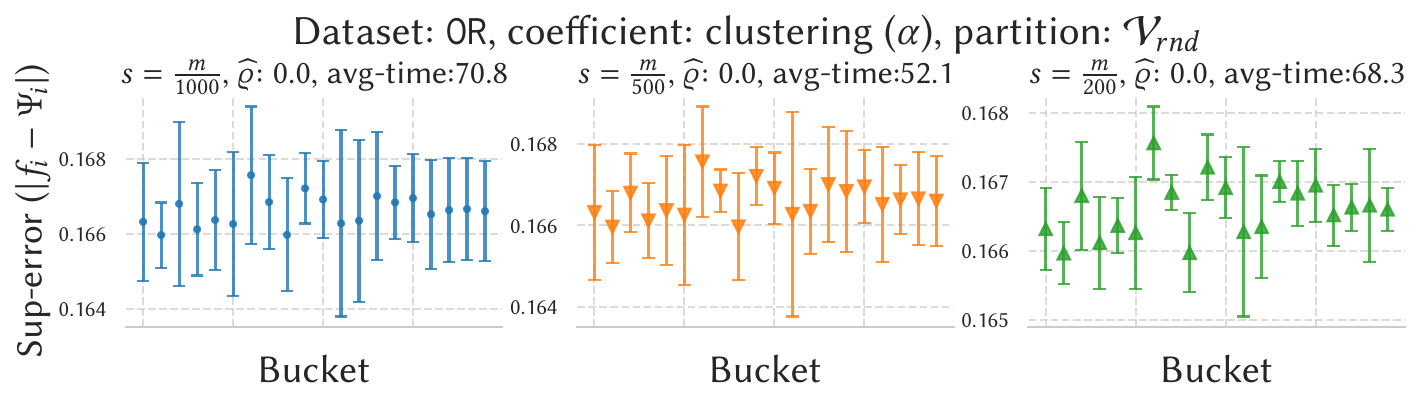}&
		\includegraphics[width=1.1\columnwidth]{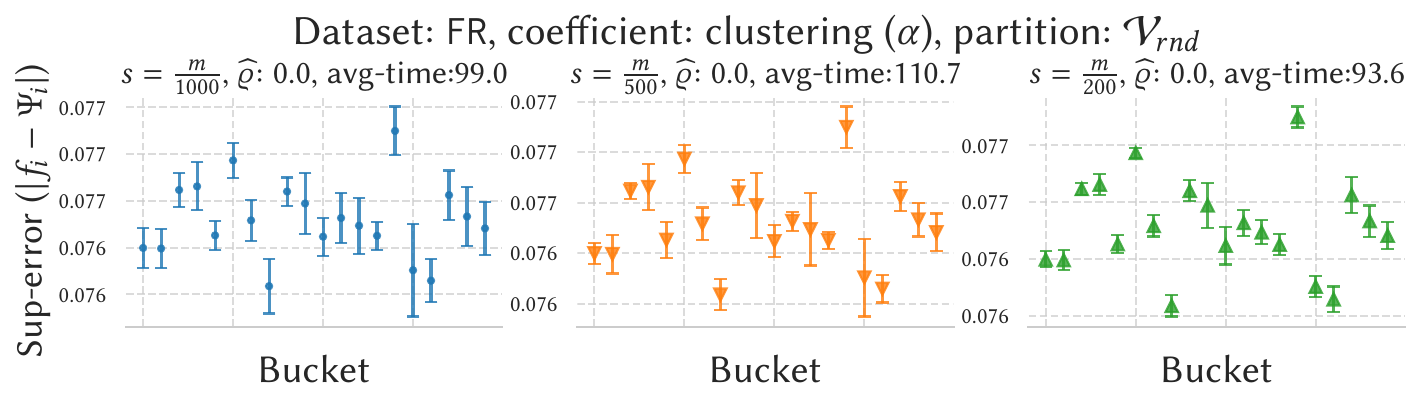}\\
		\includegraphics[width=1.1\columnwidth]{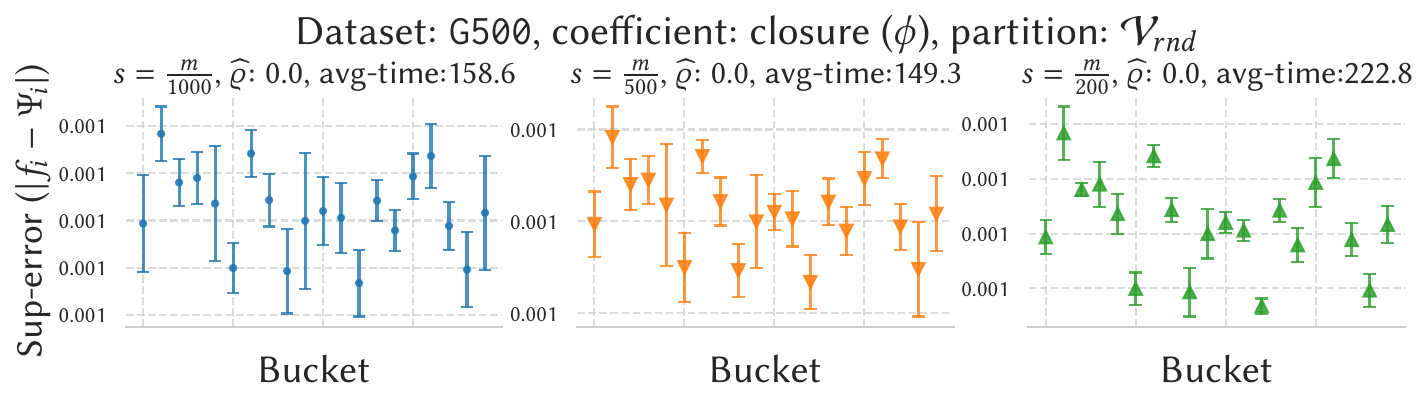}&
		\includegraphics[width=1.1\columnwidth]{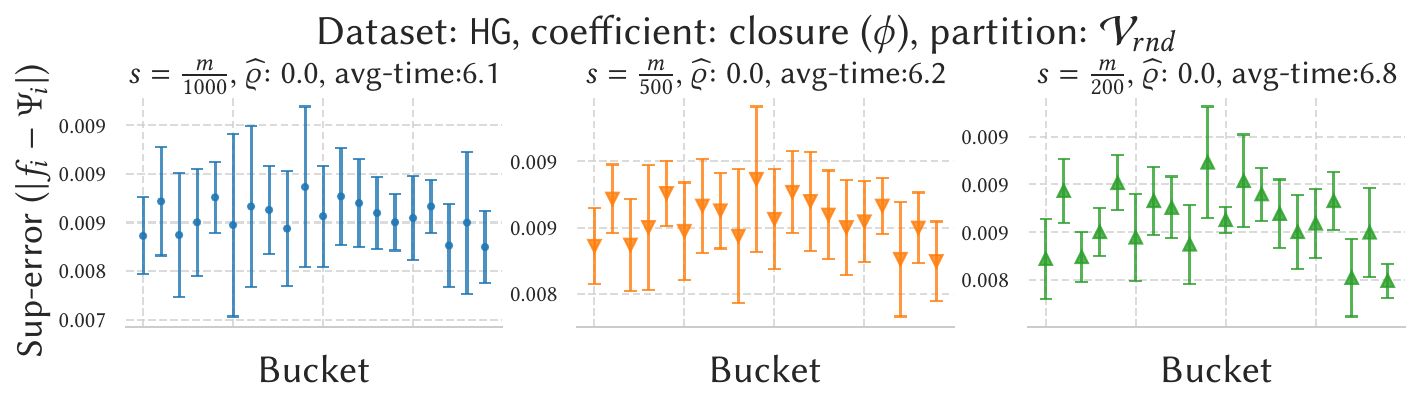}\\
		\includegraphics[width=1.1\columnwidth]{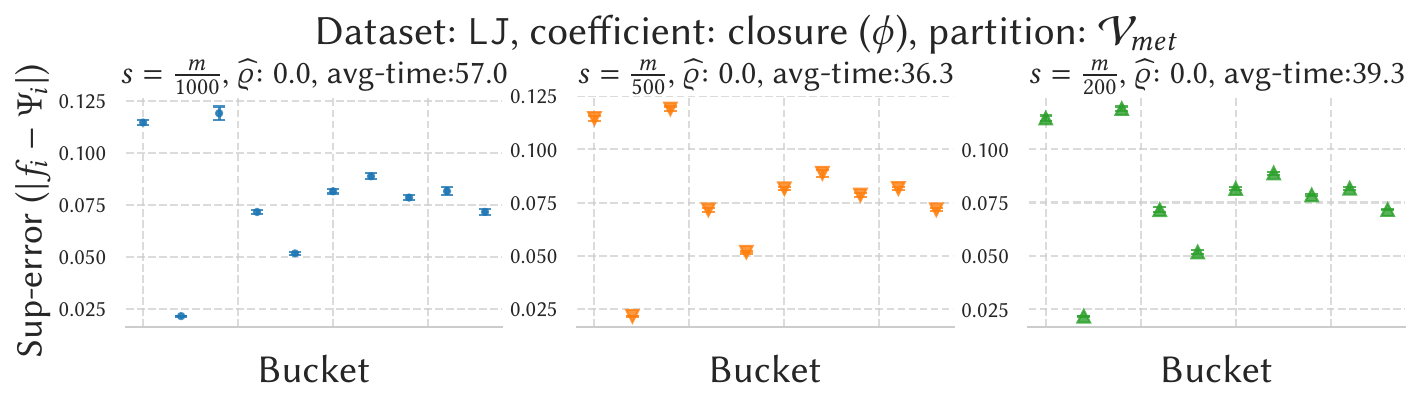}&
		\includegraphics[width=1.1\columnwidth]{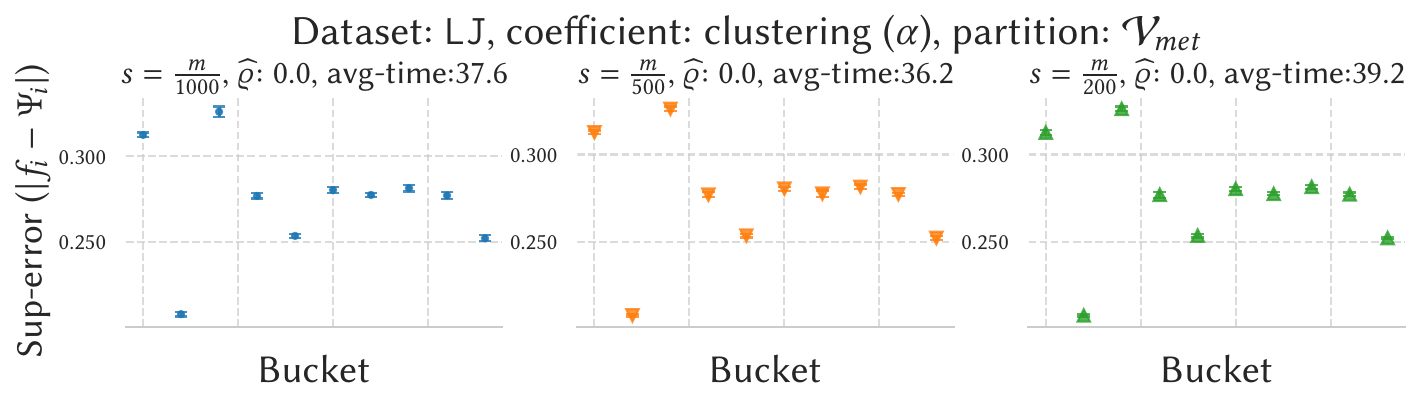}\\
		\includegraphics[width=1.1\columnwidth]{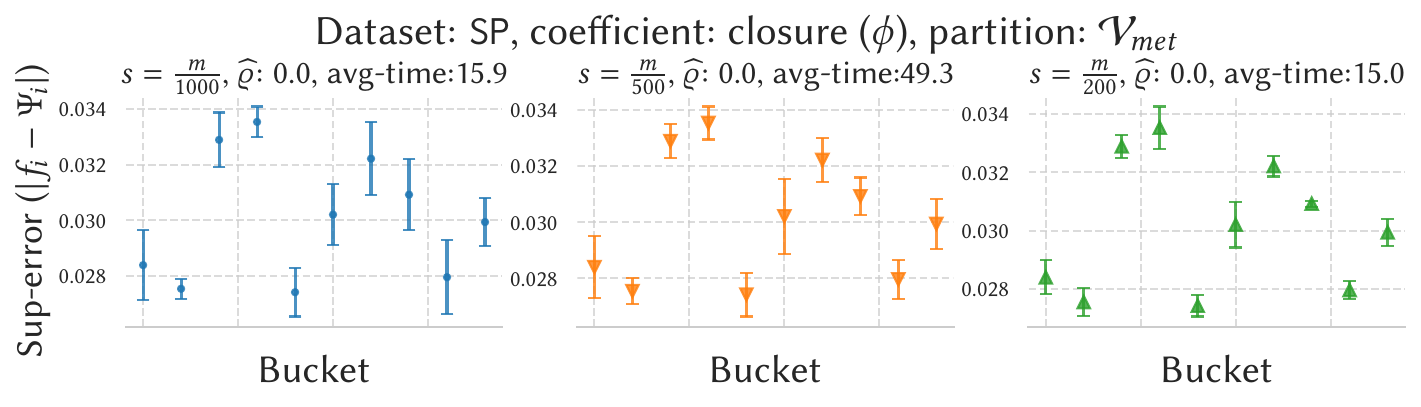}&
		\includegraphics[width=1.1\columnwidth]{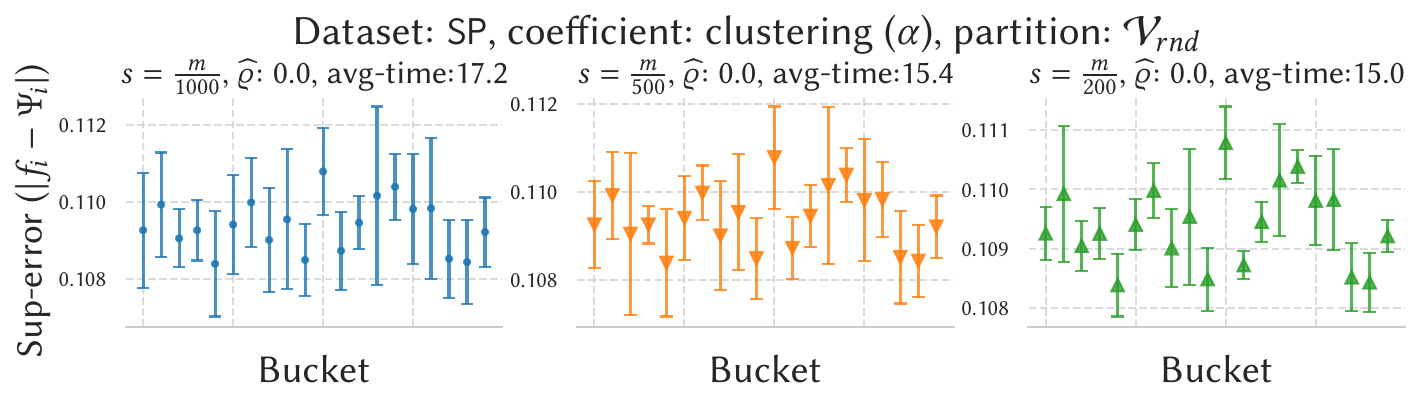}\\
	\end{tabular}
	\caption{\edit{Value $\clustGenPartition_i$} and its \edit{maximum error ($|f_i-\clustGenPartition_i|$) over five runs. We also report, the supremum error $\hat{\varrho}=\sup_{i\in[k]} |f_i-\clustGenPartition_i|$} and the average runtime over five independent runs across all buckets, for varying sample size \edit{($s\in\{1,2,5\}$\textperthousand\ of the total edges $m$).}}
	\label{fig:varyingSSAppNewParts}
\end{figure*}

\subsection{Extending the results of Section~\ref{exps:accuracyAndEff}}\label{appsec:missingSS}
Figure~\ref{fig:varyingSSApp} reports additional configurations for the setting of the experiments in Section~\ref{exps:accuracyAndEff}. Results follow similar trends to the ones already discussed.

Further, we also considered two additional partitions, 
$\nodesetPartition_{\mathit{rnd}}$ and $\nodesetPartition_{\mathit{met}}$. 
$\nodesetPartition_{\mathit{rnd}}$ is obtained by randomly assigning the nodes over $k=30$ random buckets and $\nodesetPartition_{\mathit{met}}$ is obtained by using the METIS algorithm~\citep{Karypis1998metis} with $k=10$ clusters, (we use PyMetis: \url{https://github.com/inducer/pymetis/tree/main})
First note that clearly under $ \nodesetPartition_{\mathit{rnd}}$ each value $\clustGenPartition_i$ corresponds to an unbiased estimate of the value $\clustGenPartition$ computed on the node-set $V$, i.e., the global average clustering or closure coefficient. Hence all such values are expected to be close. Results are reported in~\Cref{fig:varyingSSAppNewParts}, which confirm our observation, i.e., when considering  $\nodesetPartition_{\mathit{rnd}}$ all the values $\clustGenPartition_i$  are close, and are also close to $\bar{\alpha}$ or $\bar{\phi}$ from~\Cref{tab:datasets}. Overall the results support what we observed in~\Cref{exps:accuracyAndEff} where a sample size of 2\textpertenthousand\ is often sufficient to achieve a supremum error of the order of $10^{-3}$, yielding highly accurate estimates for all buckets.

\subsection{Extending the results of Section~\ref{exps:sota}}\label{appsec:missingAdapt}
Figure~\ref{fig:expsSotaApp} reports additional configurations for the setting of the experiments in Section~\ref{exps:sota}. Also in this case, results follow similar trends to the ones already discussed.

\begin{figure*}
	\addtolength{\tabcolsep}{-0.5em}
	\begin{tabular}{llll}
		\includegraphics[width=0.5\columnwidth]{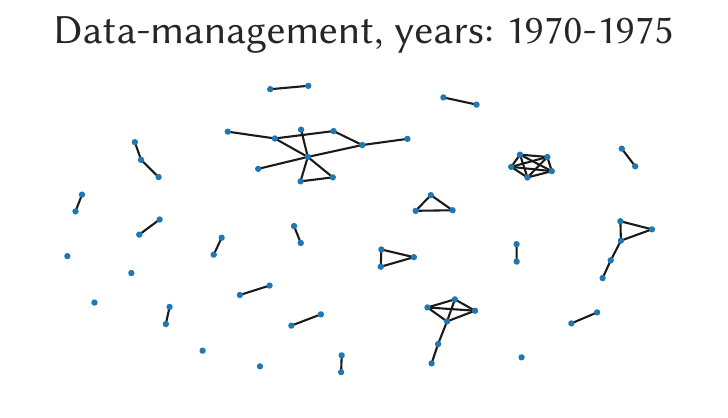}&
		\includegraphics[width=0.5\columnwidth]{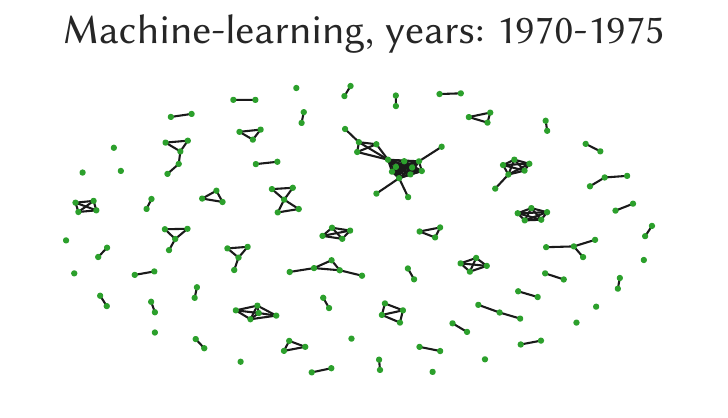}&
		\includegraphics[width=0.5\columnwidth]{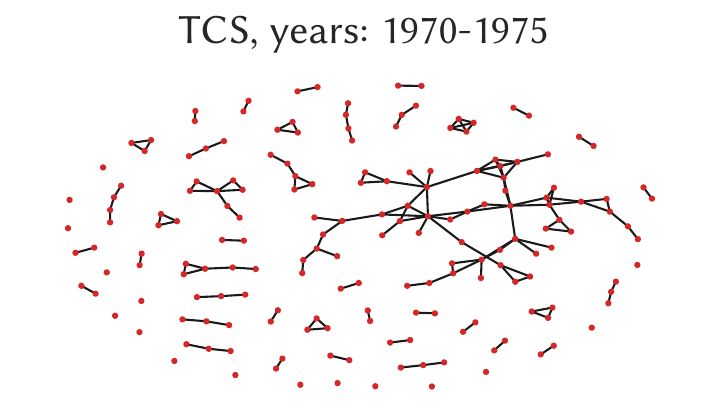}&
		\includegraphics[width=0.45\columnwidth]{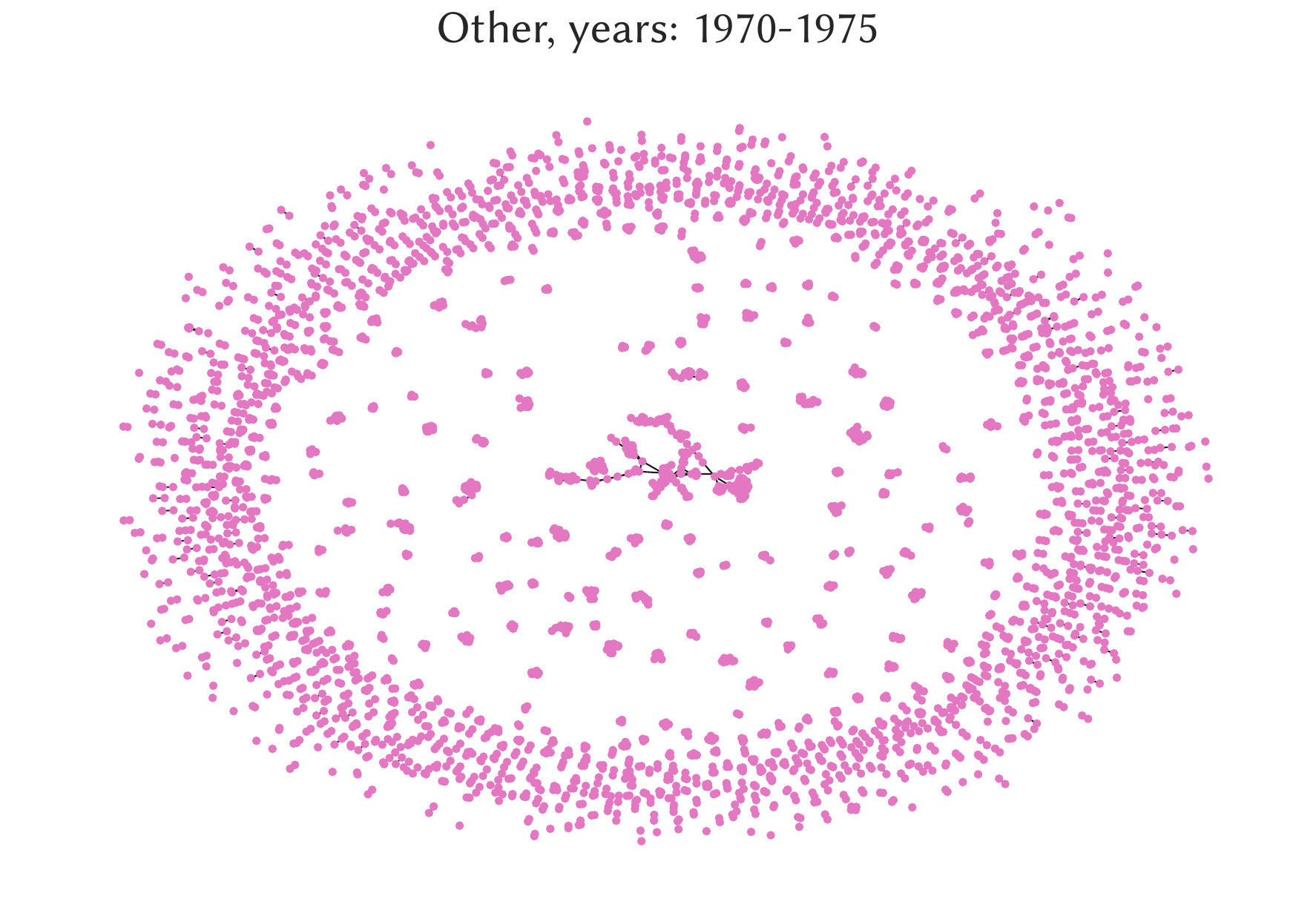}\\
		\includegraphics[width=0.5\columnwidth]{figures/case_study/clusts/1_c_0}&
		\includegraphics[width=0.5\columnwidth]{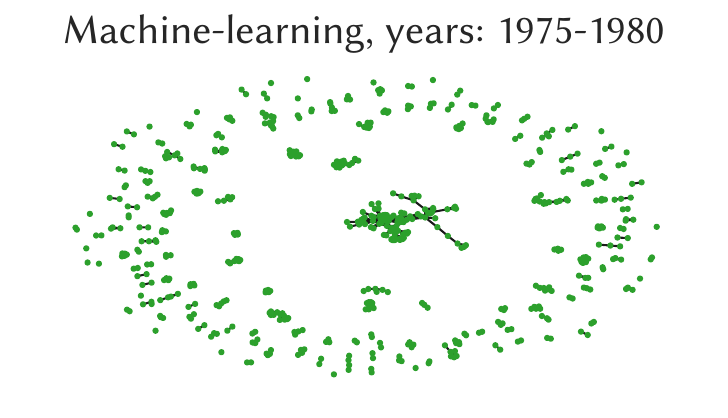}&
		\includegraphics[width=0.5\columnwidth]{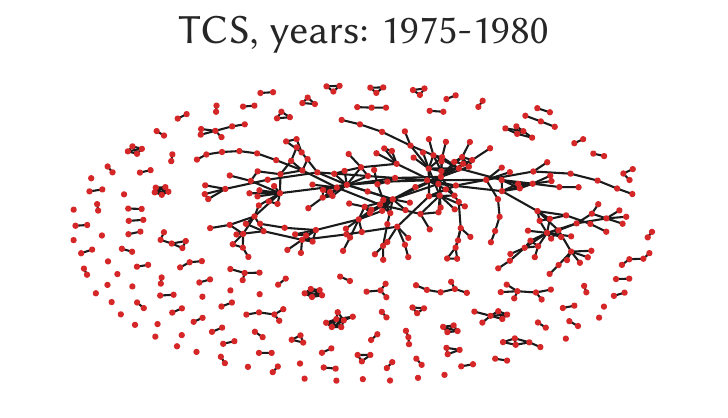}&
		\includegraphics[width=0.5\columnwidth]{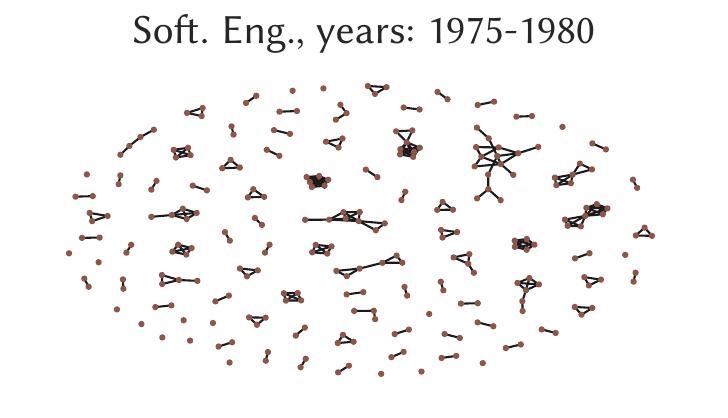}\\
		\includegraphics[width=0.5\columnwidth]{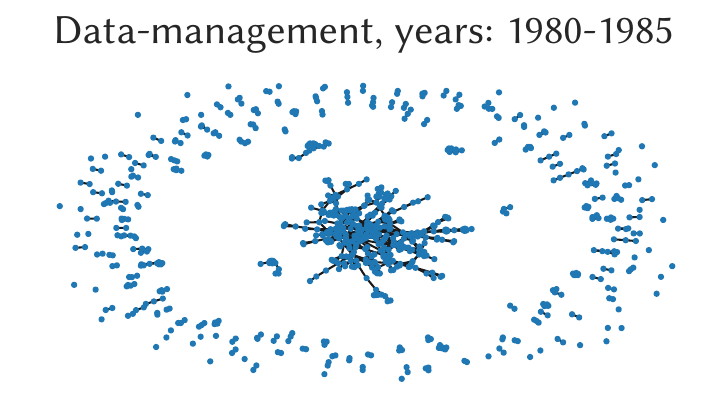}&
		\includegraphics[width=0.5\columnwidth]{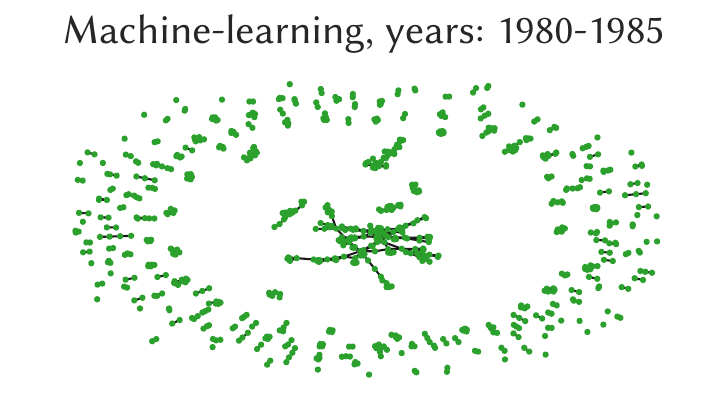}&
		\includegraphics[width=0.5\columnwidth]{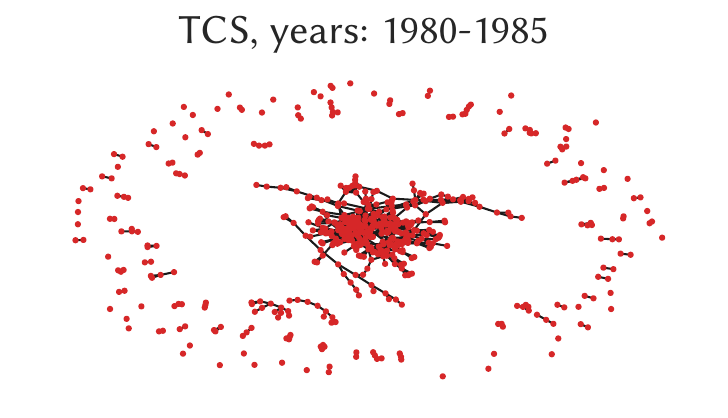}&
		\includegraphics[width=0.5\columnwidth]{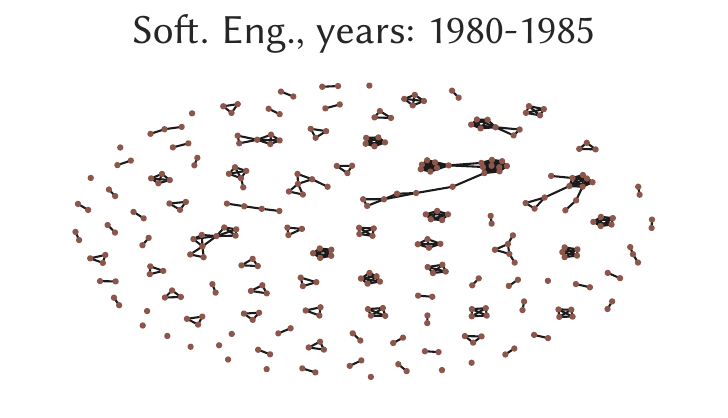}\\
		\includegraphics[width=0.5\columnwidth]{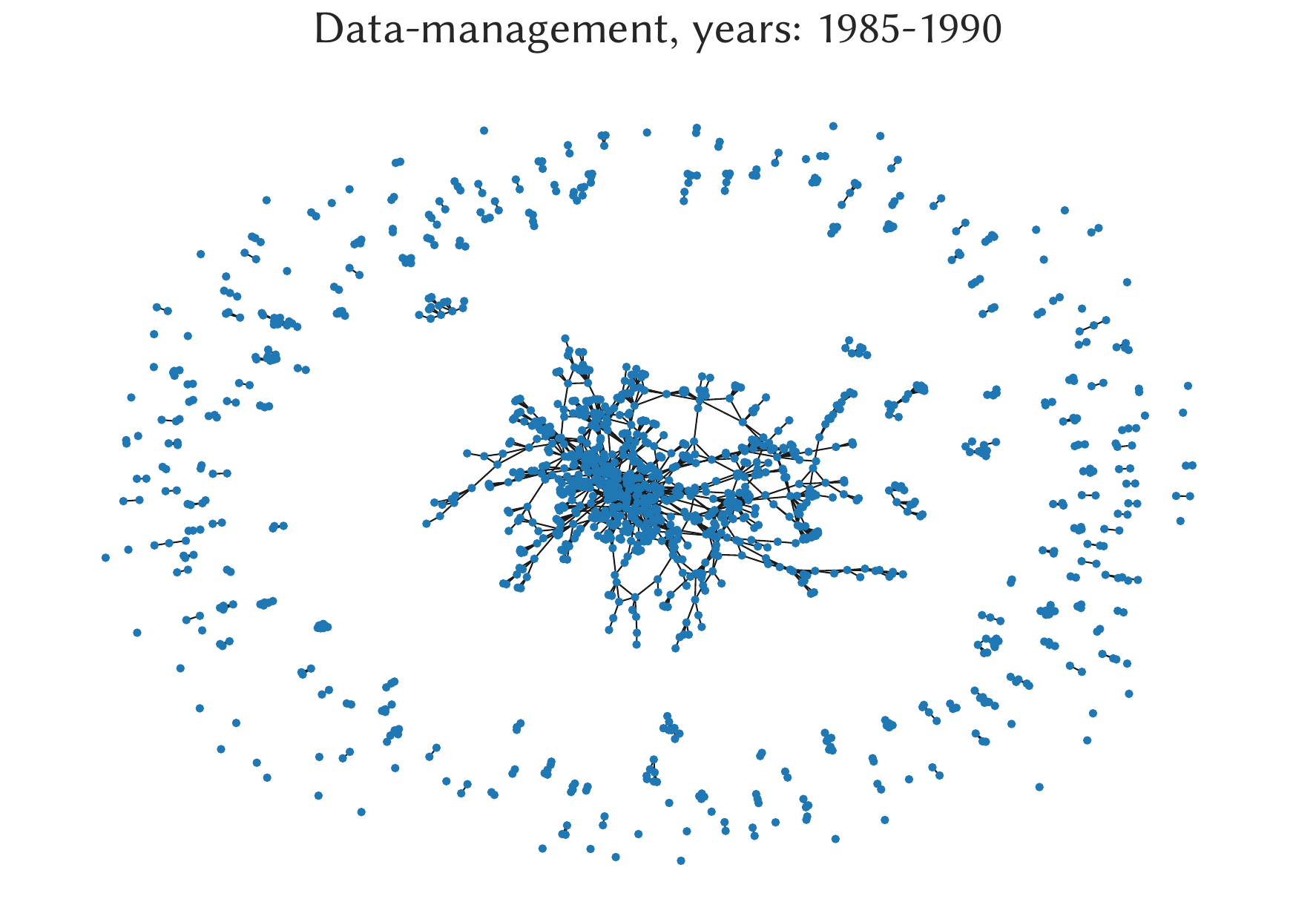}&
		\includegraphics[width=0.5\columnwidth]{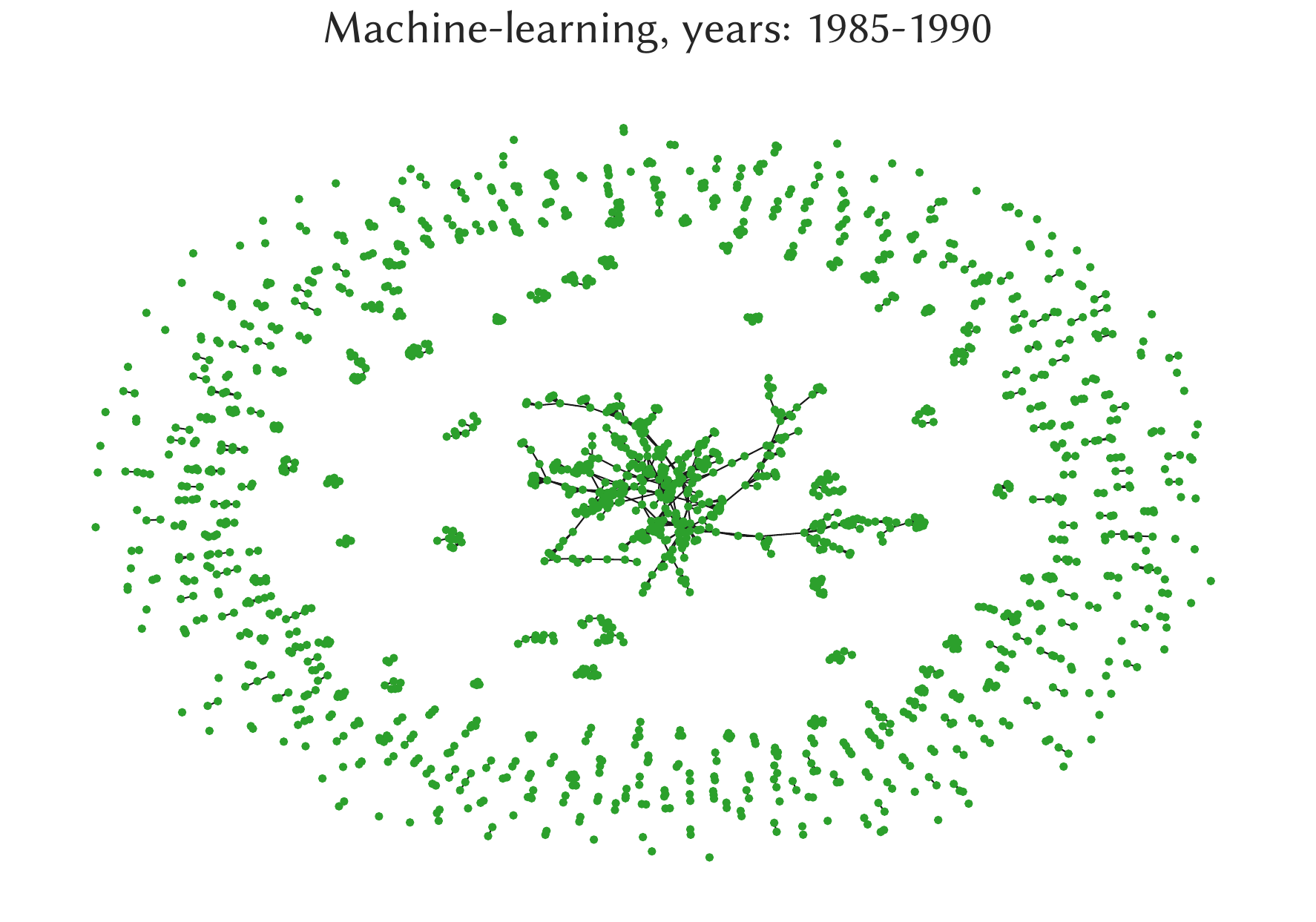}&
		\includegraphics[width=0.5\columnwidth]{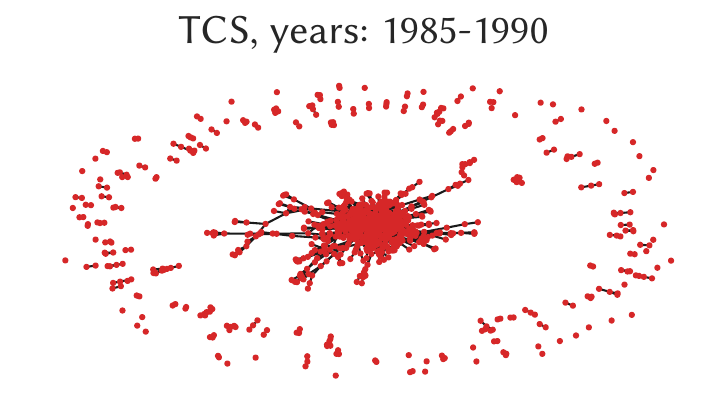}&
		\includegraphics[width=0.5\columnwidth]{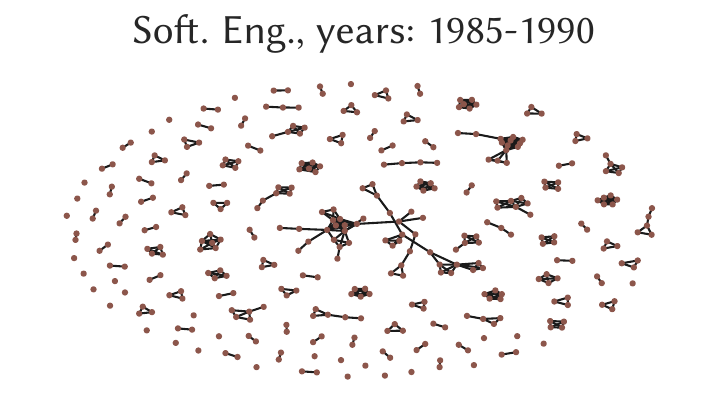}\\
		\includegraphics[width=0.5\columnwidth]{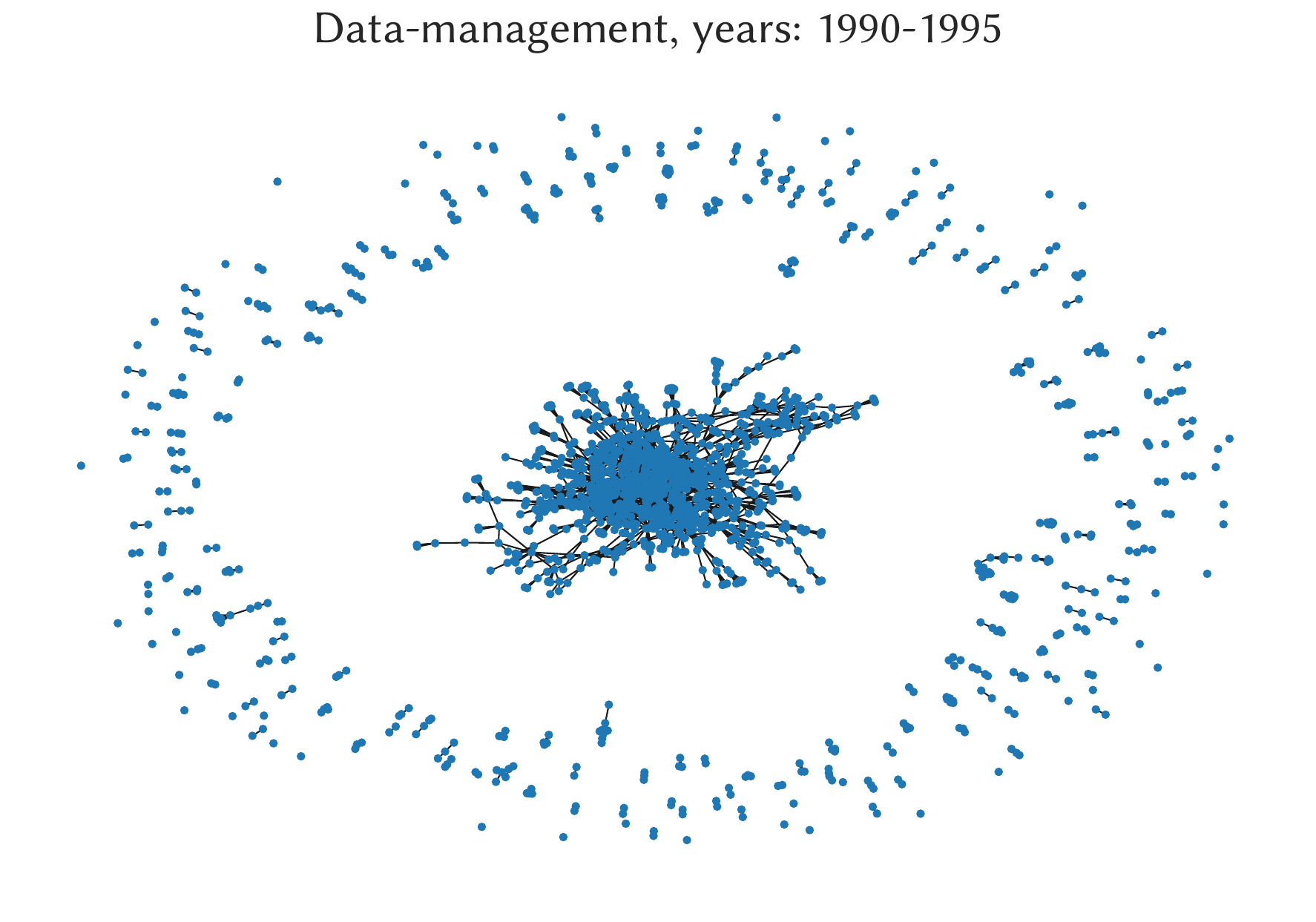}&
		\includegraphics[width=0.5\columnwidth]{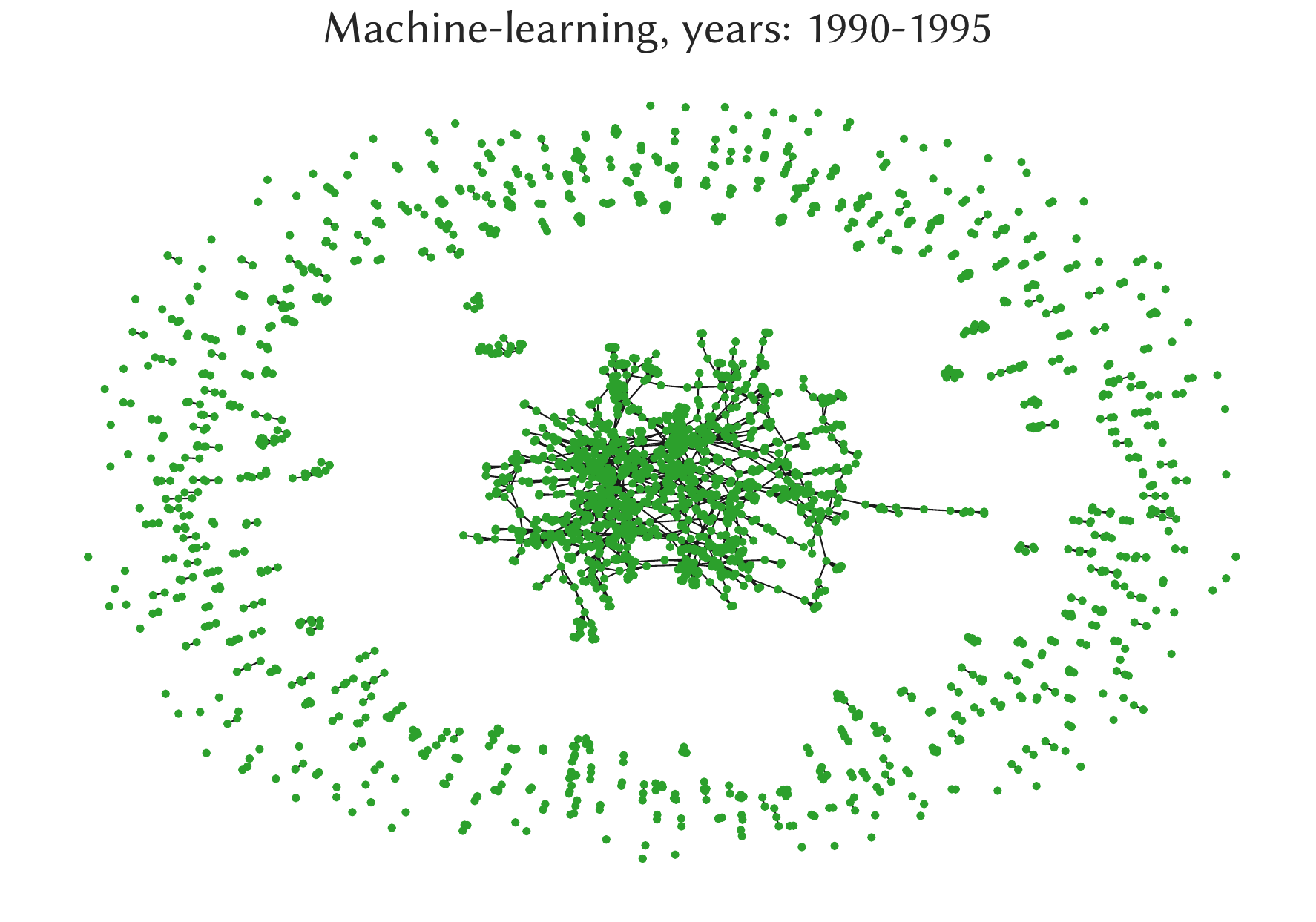}&
		\includegraphics[width=0.5\columnwidth]{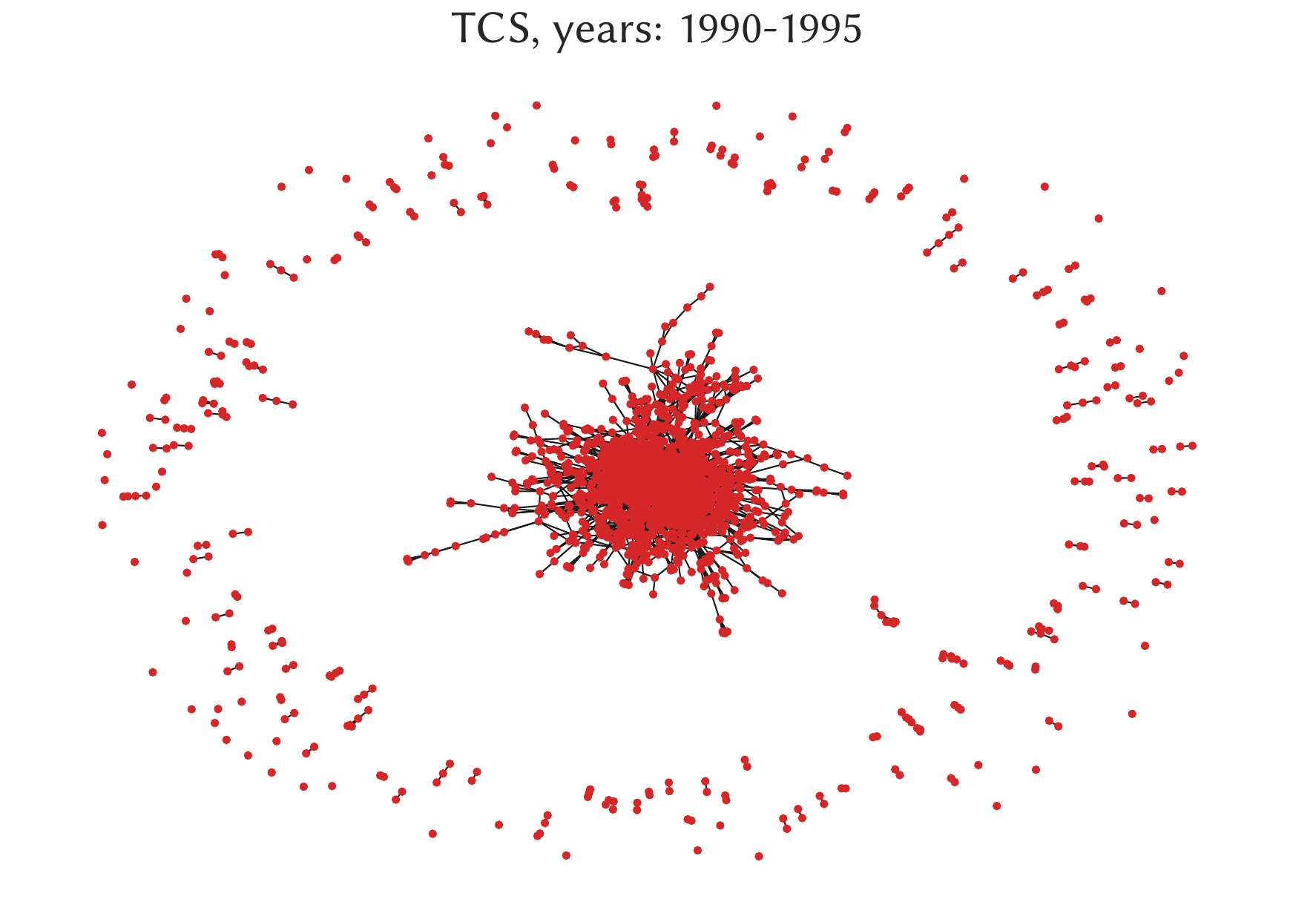}&
		\includegraphics[width=0.5\columnwidth]{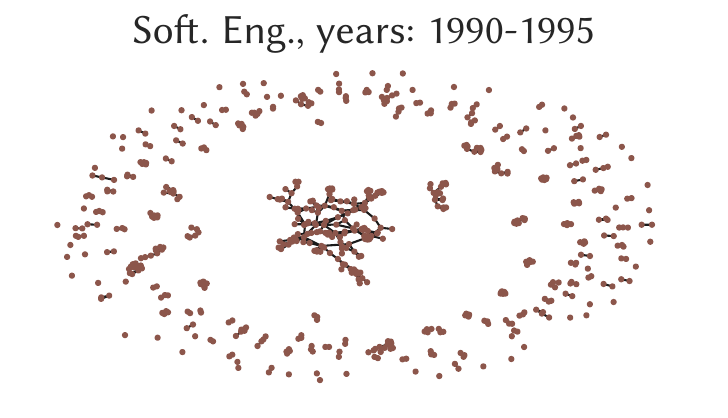}\\
		\includegraphics[width=0.5\columnwidth]{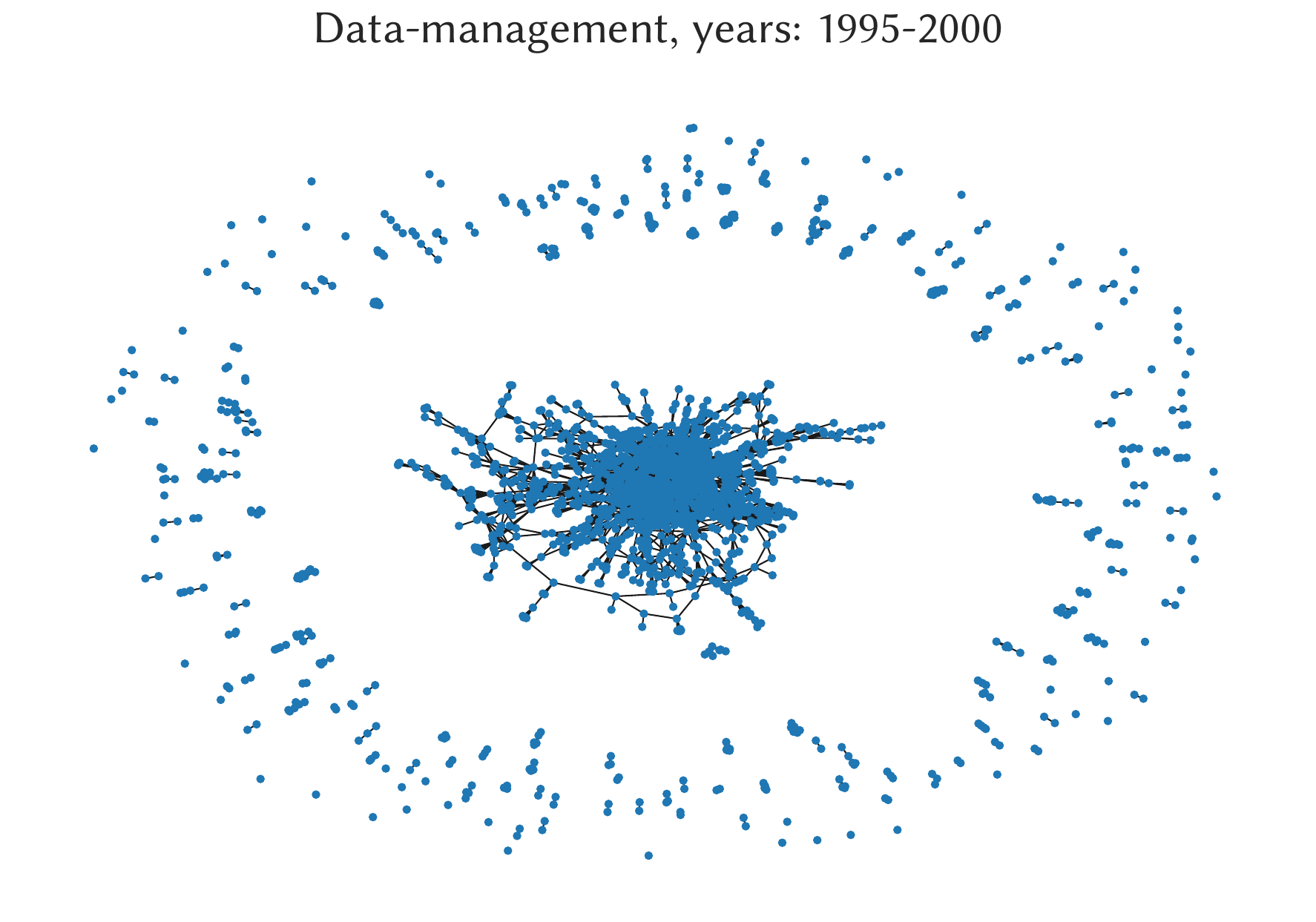}&
		\includegraphics[width=0.5\columnwidth]{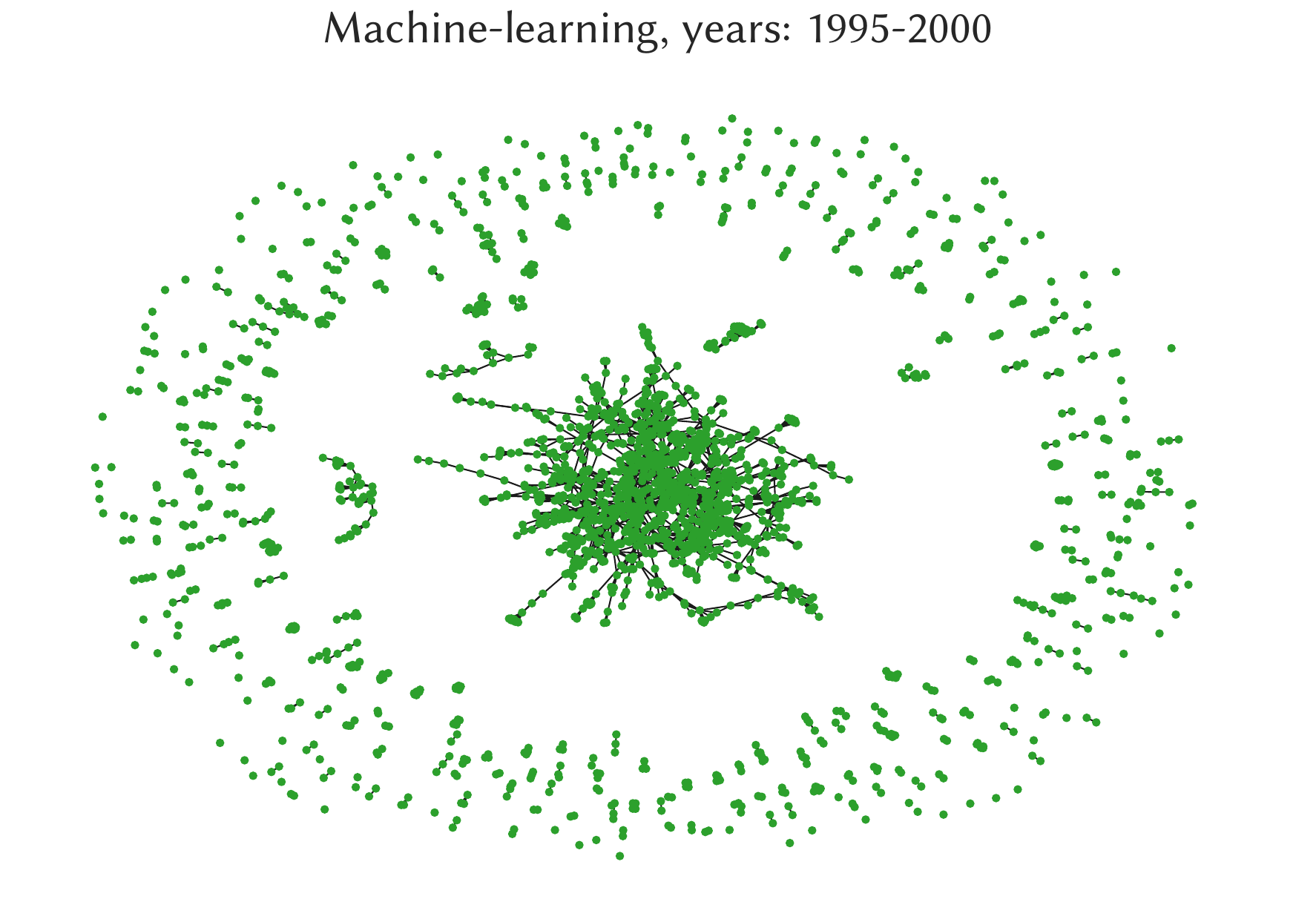}&
		\includegraphics[width=0.5\columnwidth]{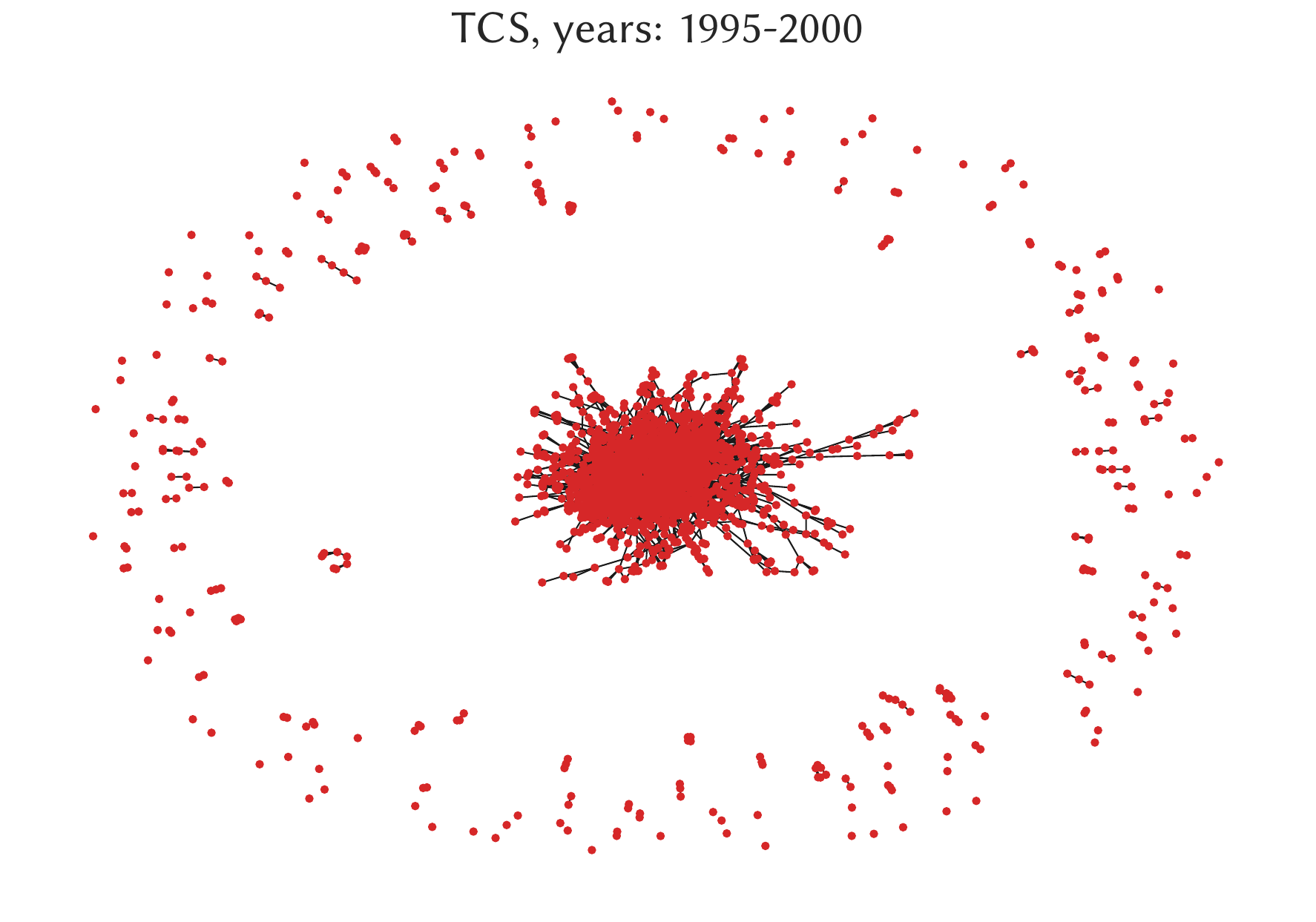}&
		\includegraphics[width=0.5\columnwidth]{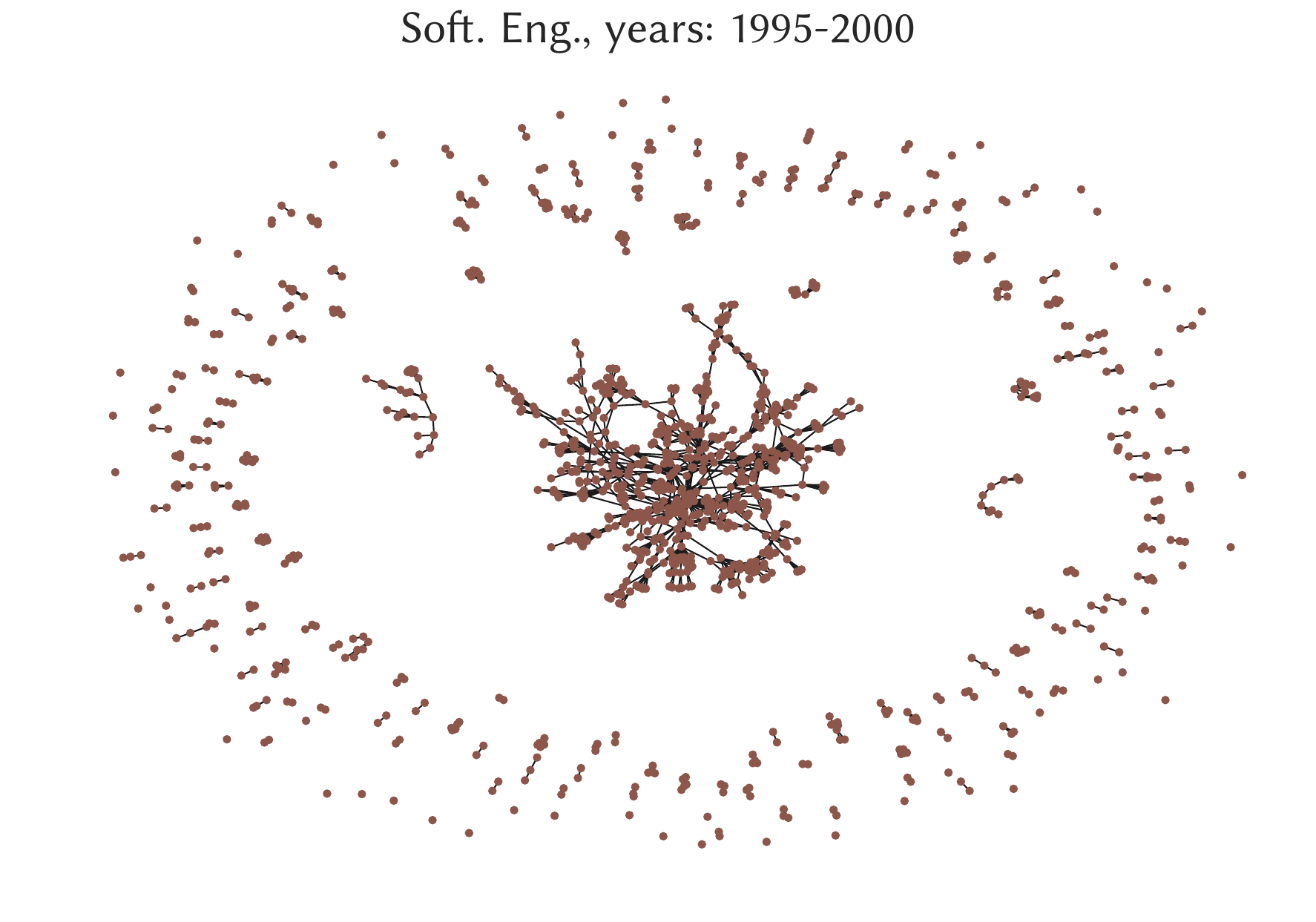}\\
		\includegraphics[width=0.5\columnwidth]{figures/case_study/clusts/4_c_4}&
		\includegraphics[width=0.5\columnwidth]{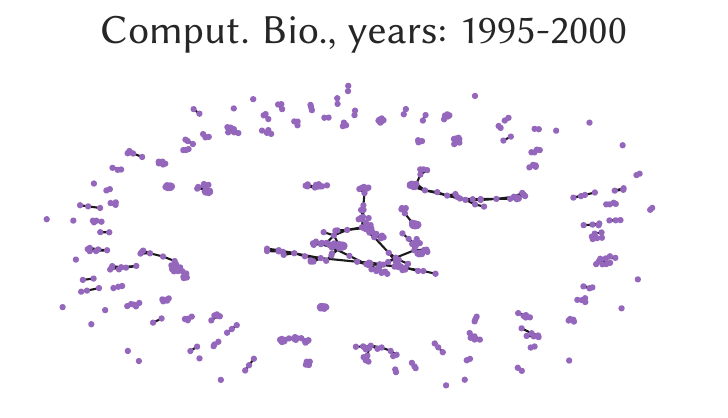}&
		\includegraphics[width=0.5\columnwidth]{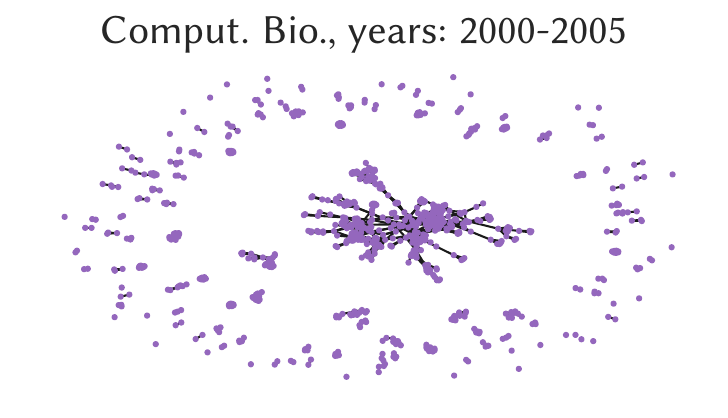}&
		\includegraphics[width=0.45\columnwidth]{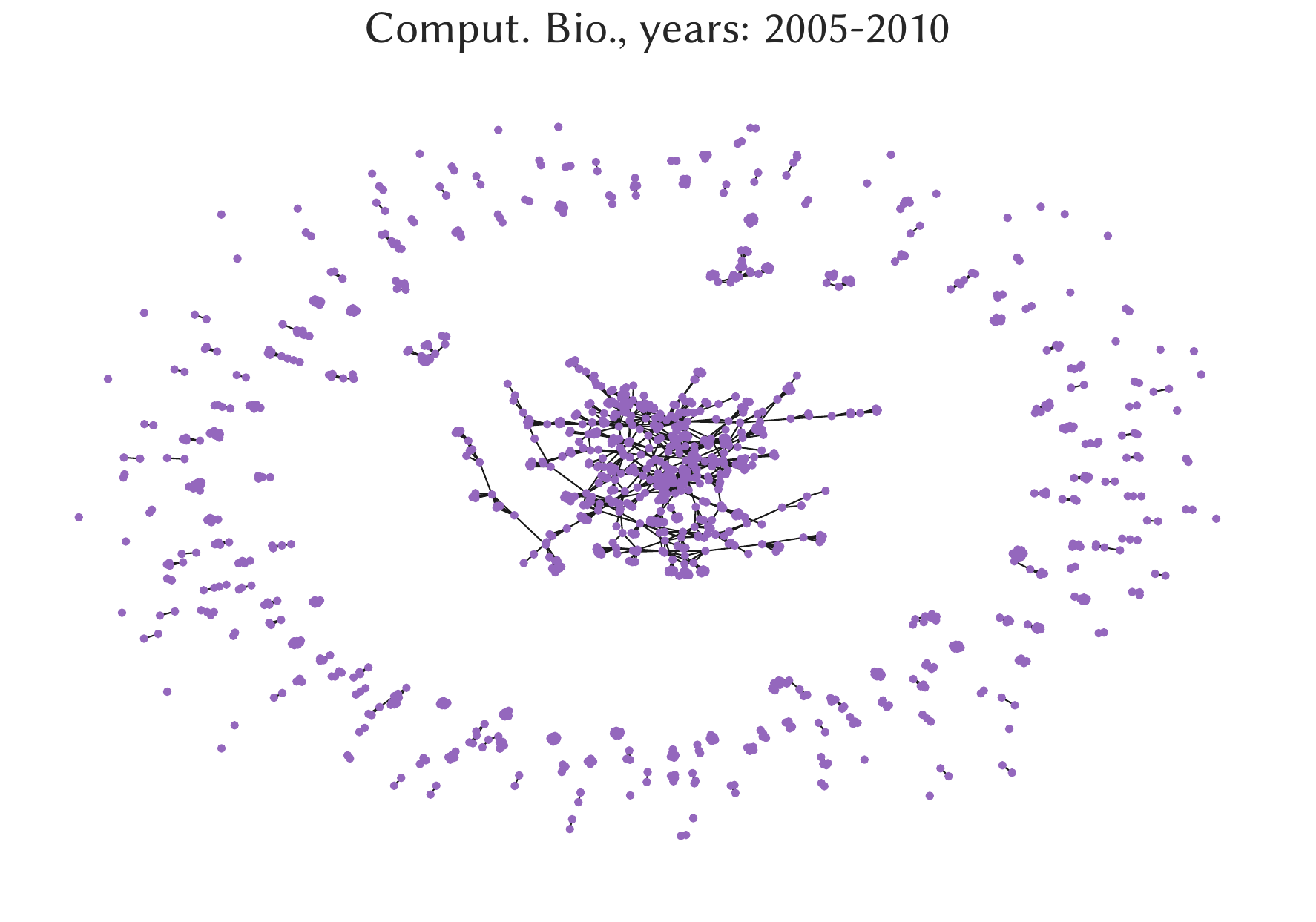}\\
	\end{tabular}
	\caption{\edit{Induced subgraphs by the various research communities over time. We select sufficiently small-sized subgraphs for ease of visualization.}}
	\label{fig:clustsMissing}
\end{figure*}

\subsection{Extending the results of Section~\ref{subsec:caseStudy}}\label{appsec:community}
\edit{\Cref{fig:clustsMissing} reports additional visualization of the structure of the induced subgraphs considered in~\Cref{subsec:caseStudy} and for which some representative examples were shown in~\Cref{fig:clusts}.}

\Cref{fig:degreeDistrib} reports the degree distribution of the various authors in each category, to account for the relative of each category we report, for each bin $[x_1,x_2)$ representing a range for the degrees of the various nodes the \emph{fraction} of nodes attaining a value of their degree in the range $[x_1,x_2)$. 
We can observe that the data-mining category and the computational biology community both have similar degree distributions (e.g., years 1990-1995 or 2000-2005). Hence their very different values of the triadic coefficients, which can be seen in~\Cref{fig:DBLPTemporal} cannot be uniquely explained by the degree distribution. In fact, we believe such values to be related to the different high-order collaboration patterns, as captured by the community structures in~\Cref{fig:clustsMissing}.

\begin{figure*}
	\addtolength{\tabcolsep}{-0.5em}
	\begin{tabular}{ll}
		\includegraphics[width=\columnwidth]{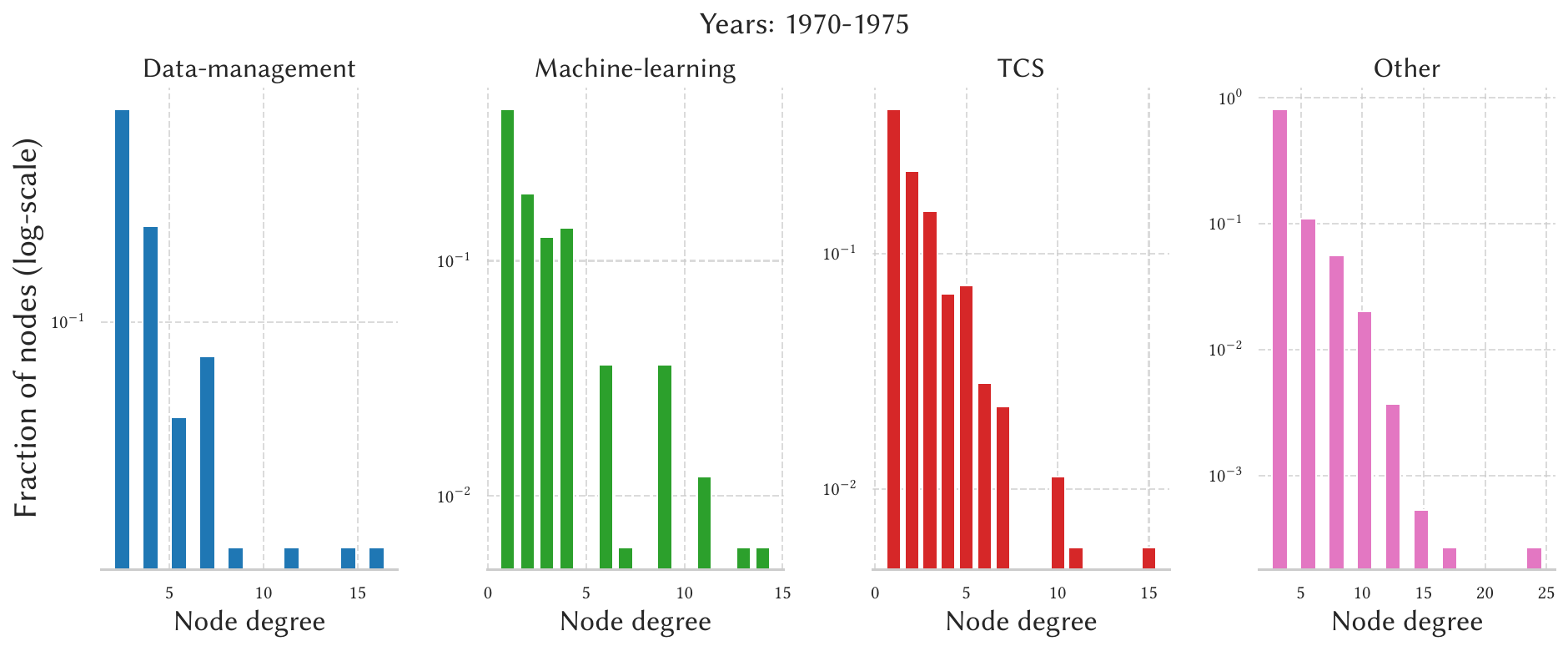}&
		\includegraphics[width=\columnwidth]{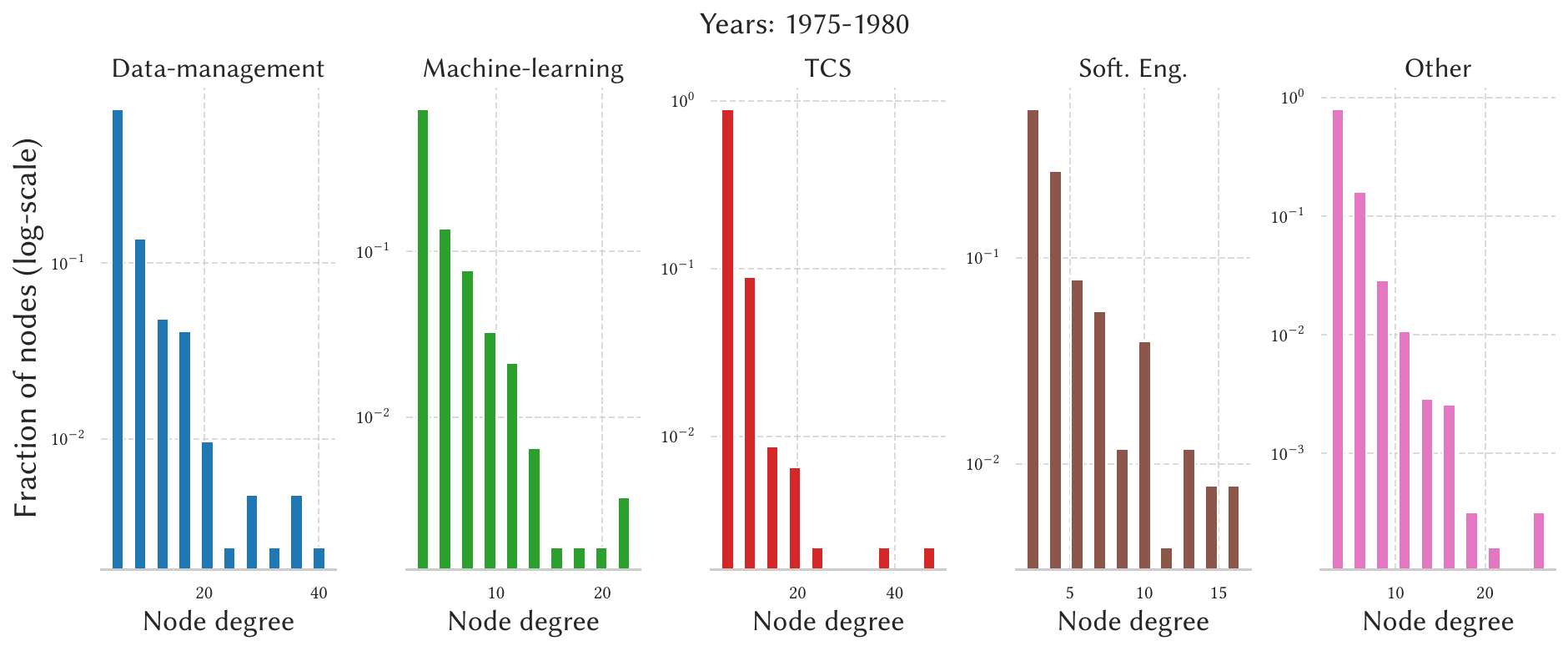}\\
		\includegraphics[width=\columnwidth]{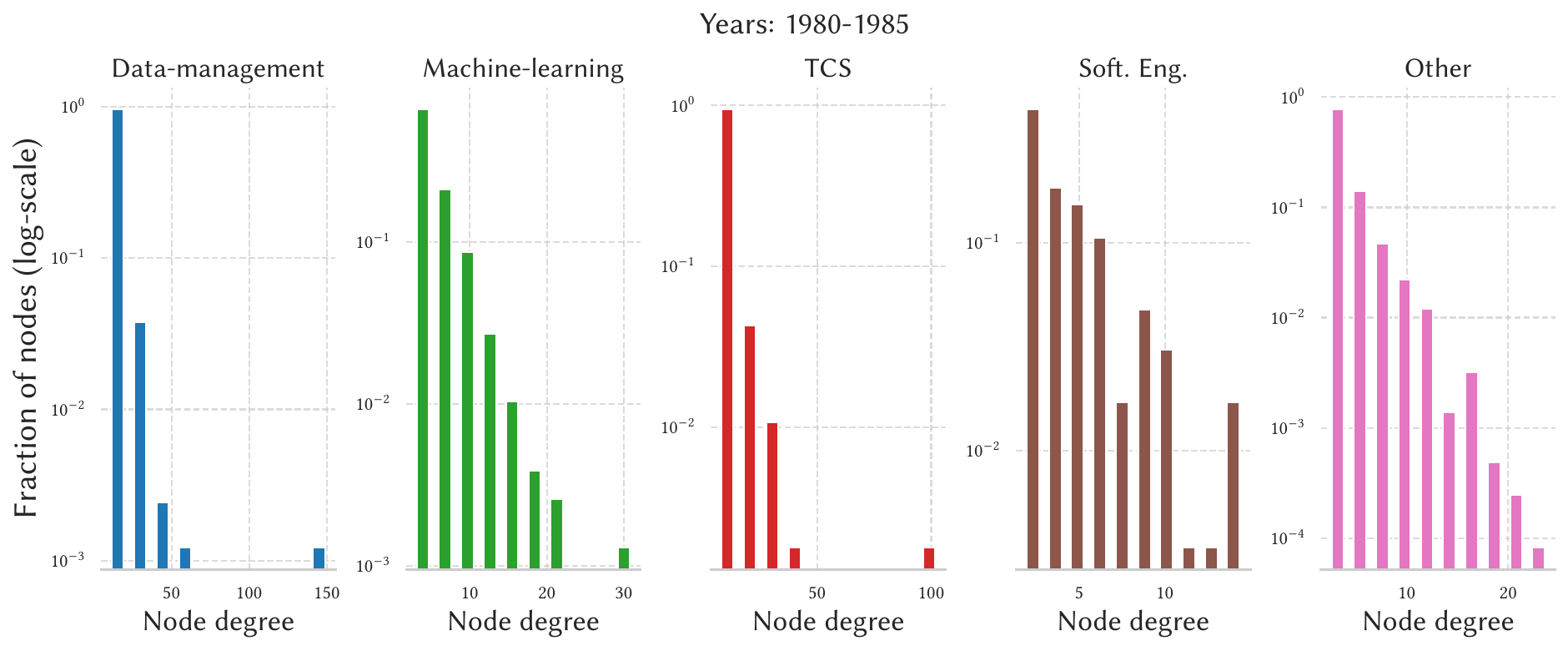}&
		\includegraphics[width=\columnwidth]{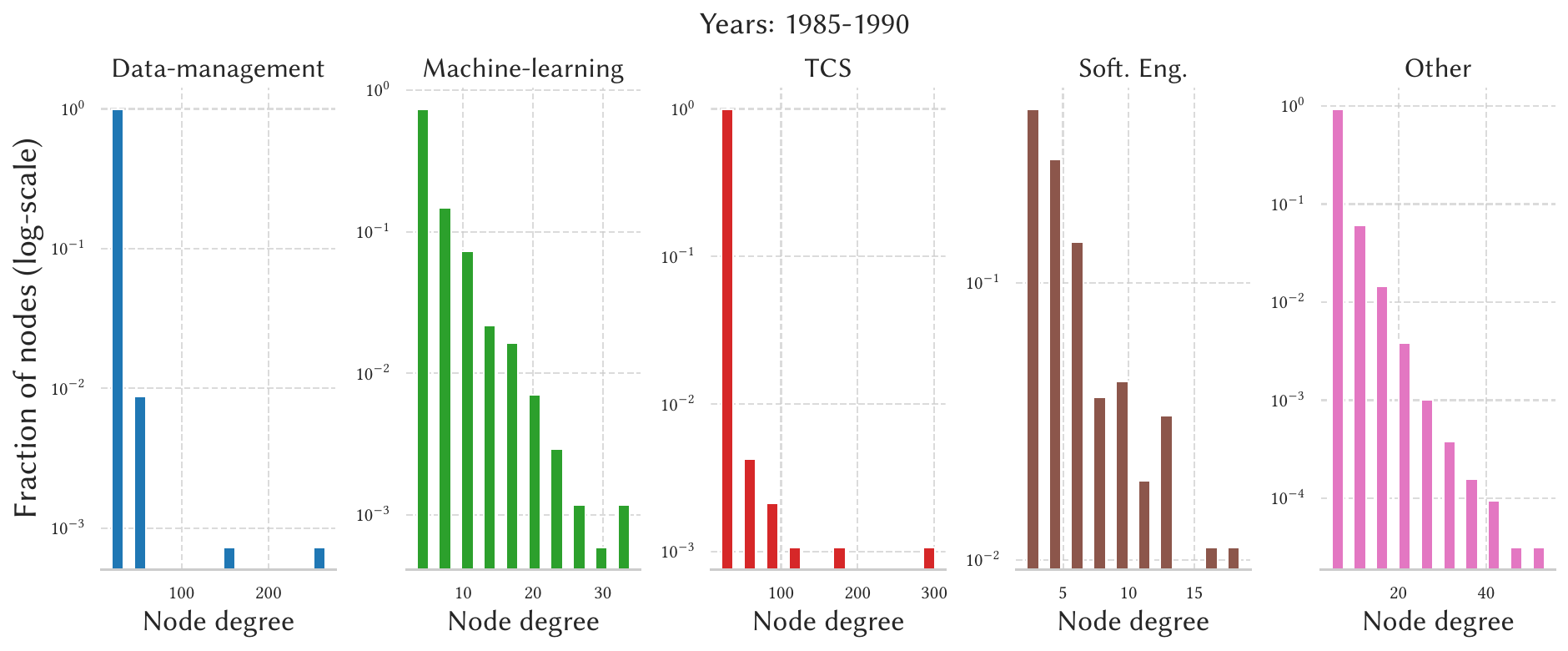}\\
		\includegraphics[width=\columnwidth]{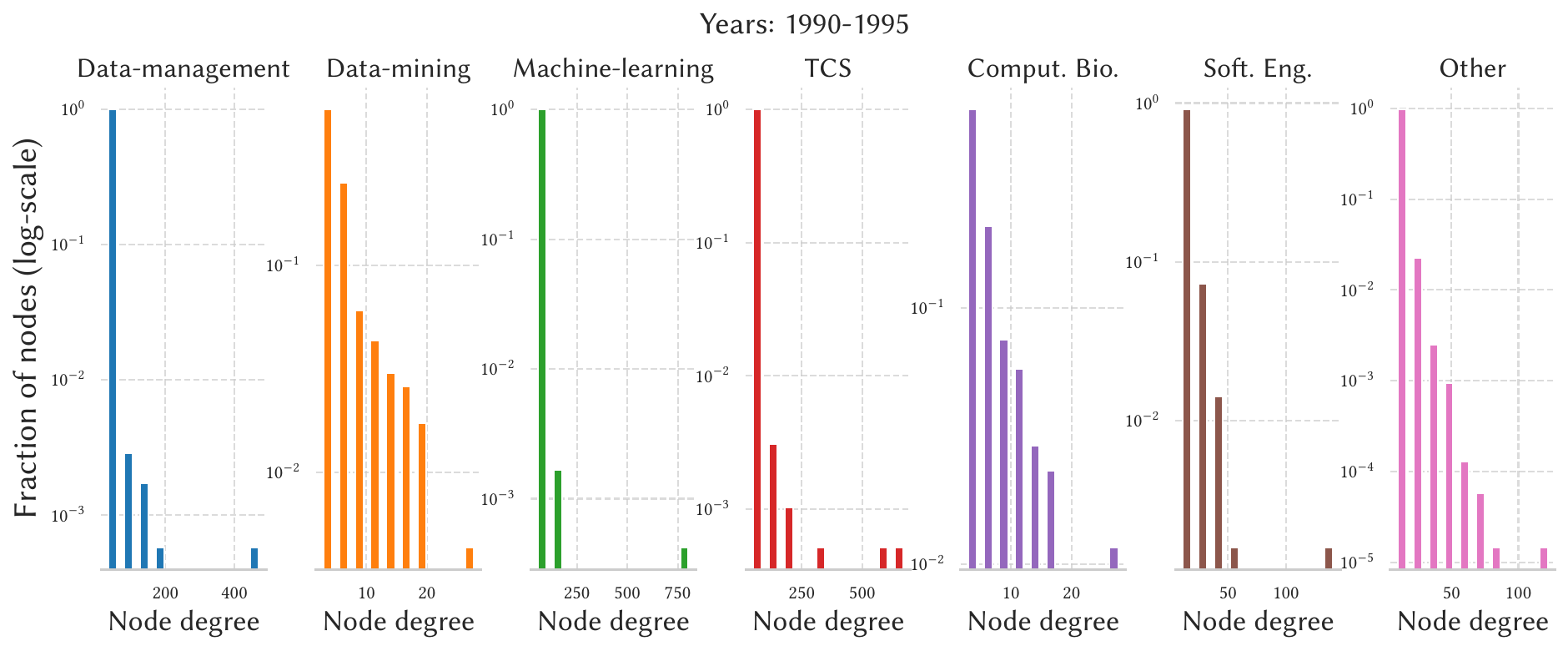}&
		\includegraphics[width=\columnwidth]{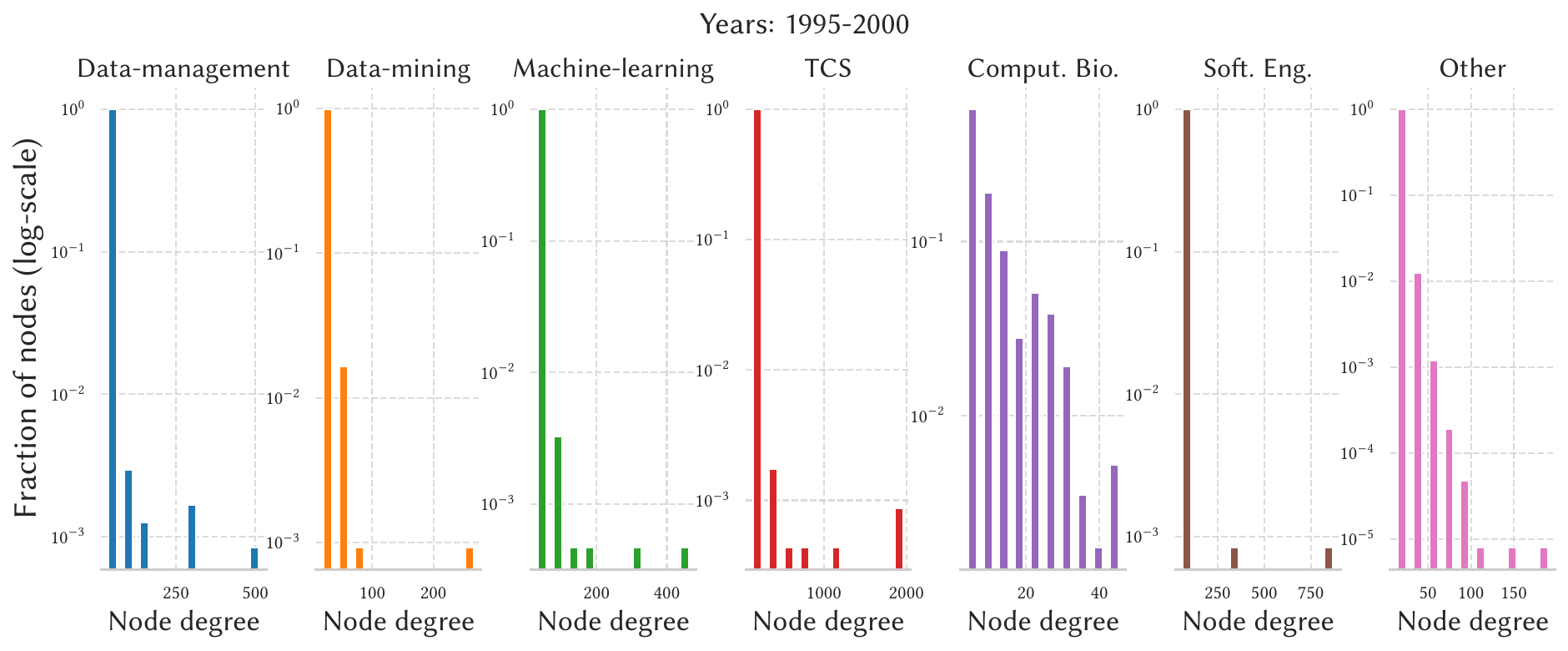}\\
		\includegraphics[width=\columnwidth]{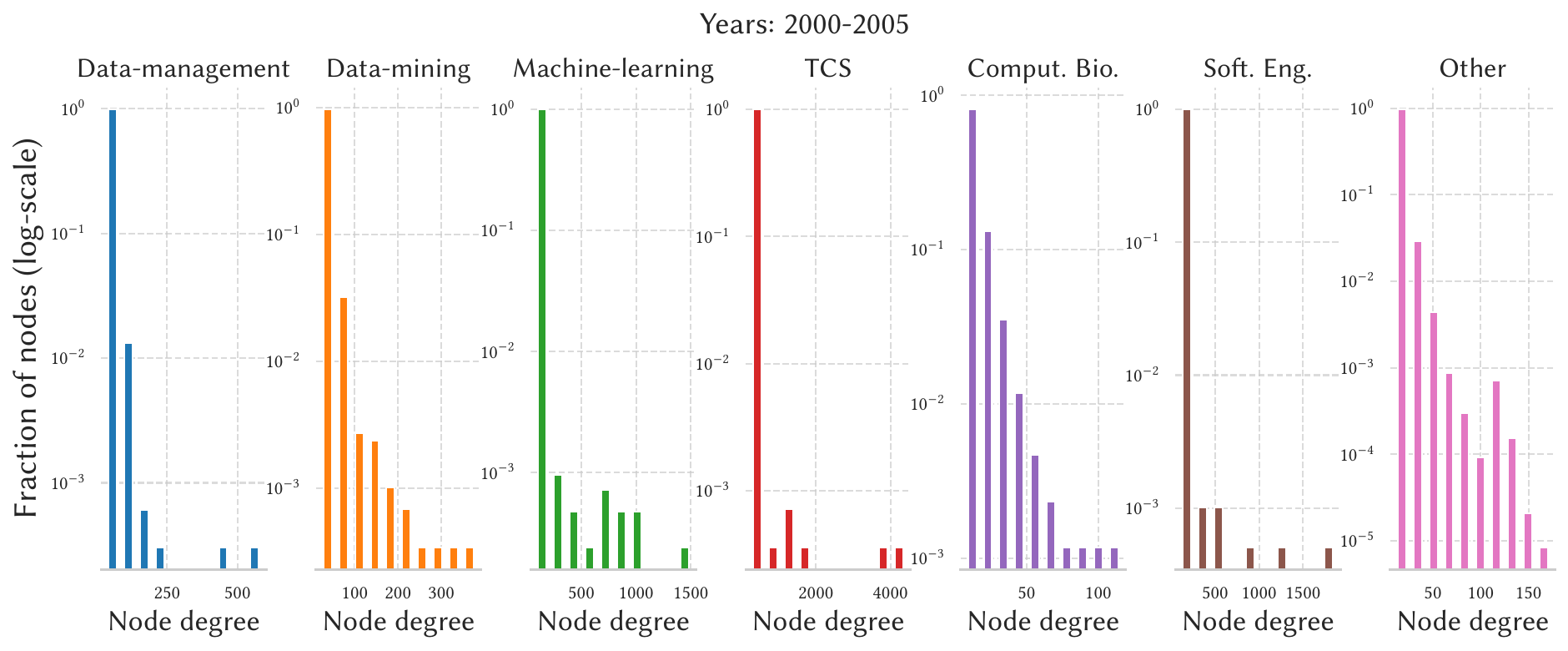}&
		\includegraphics[width=\columnwidth]{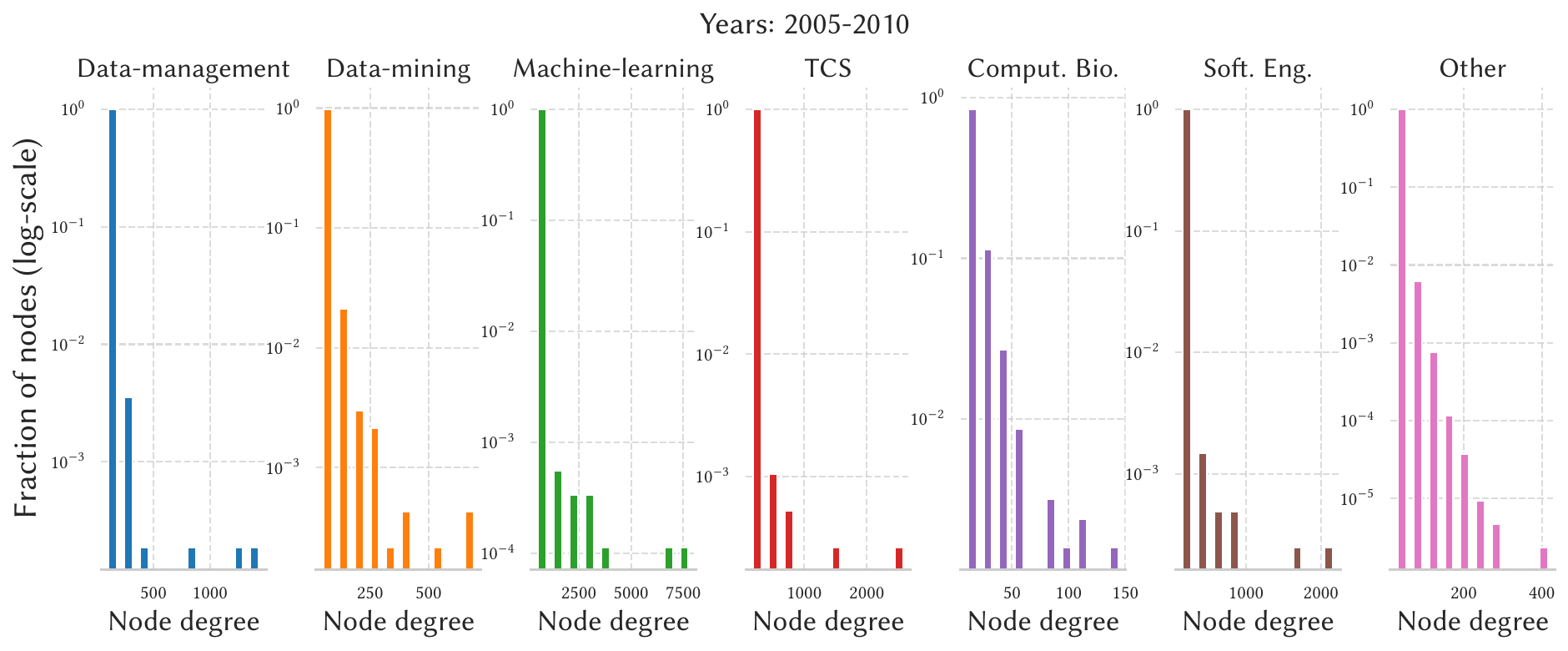}\\
		\includegraphics[width=\columnwidth]{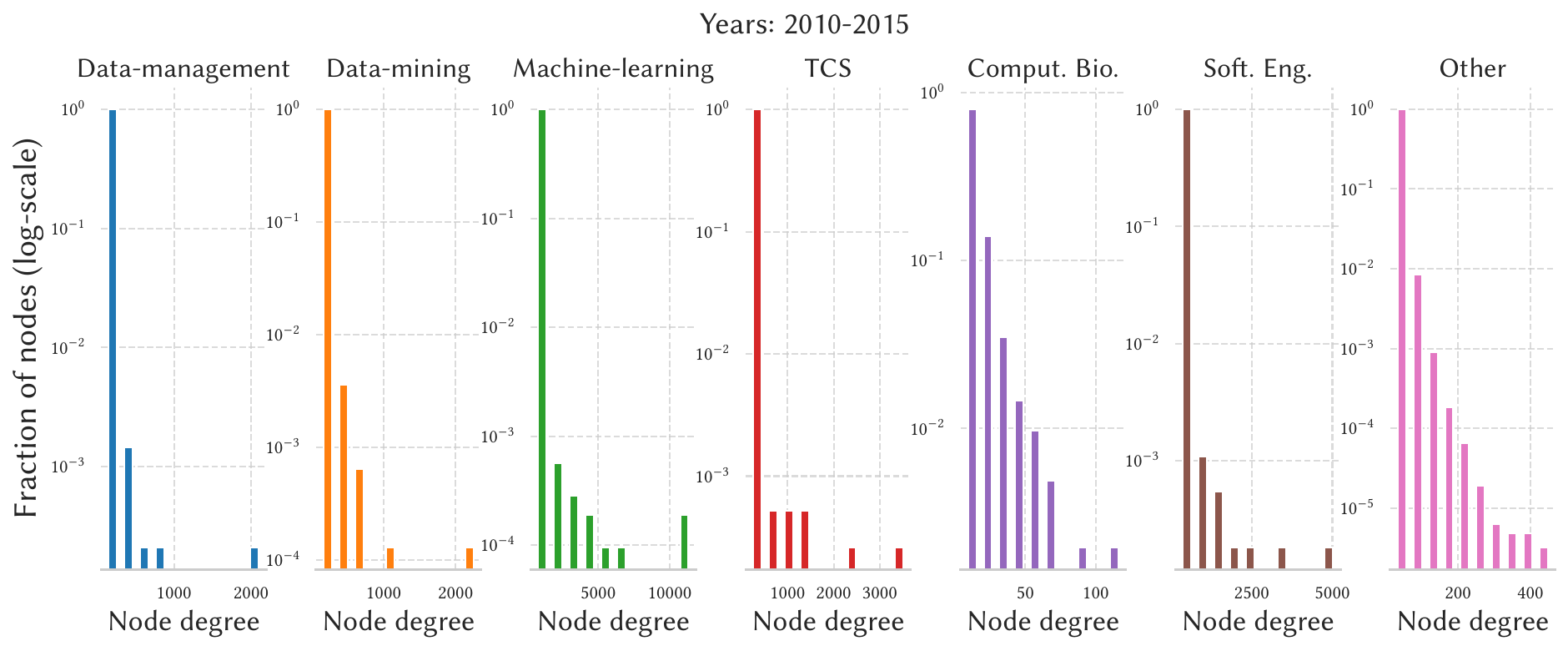}&
		\includegraphics[width=\columnwidth]{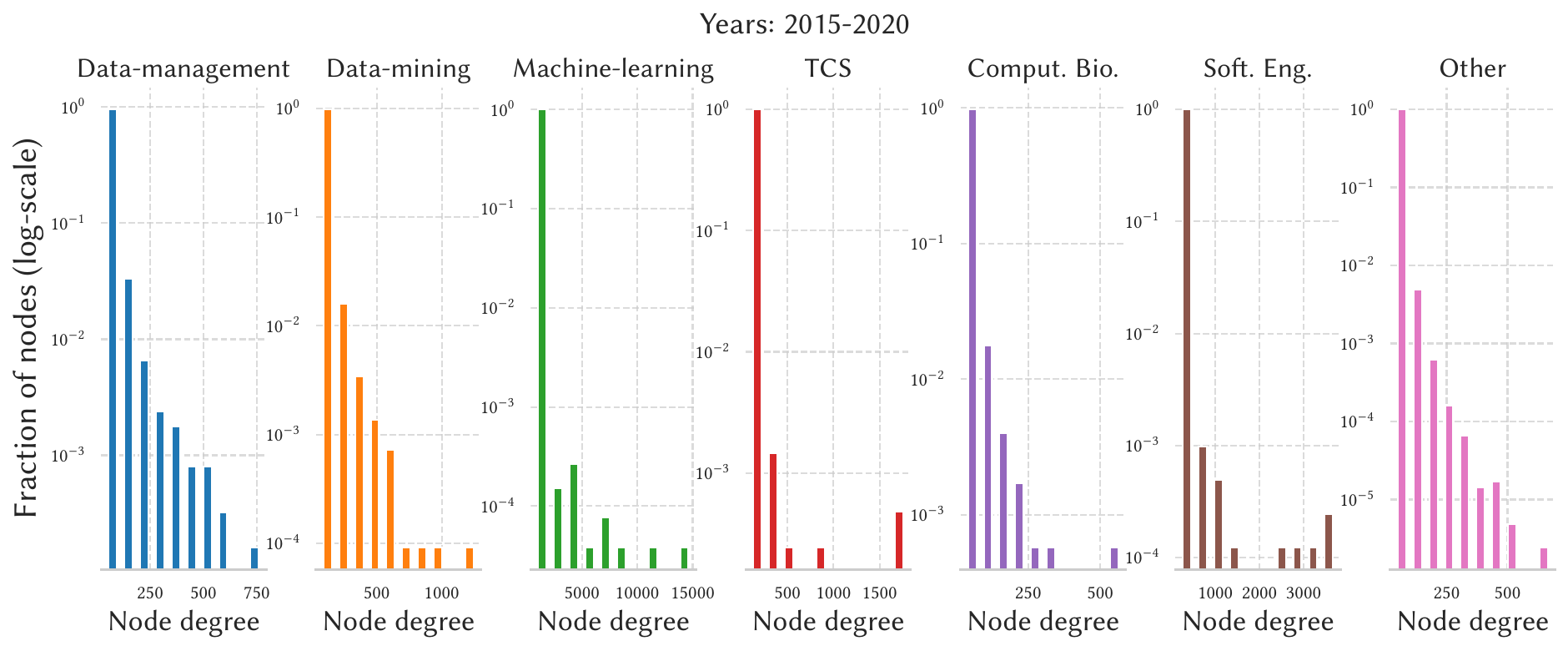}\\
	\end{tabular}
	\centering
	\includegraphics[width=1.4\columnwidth]{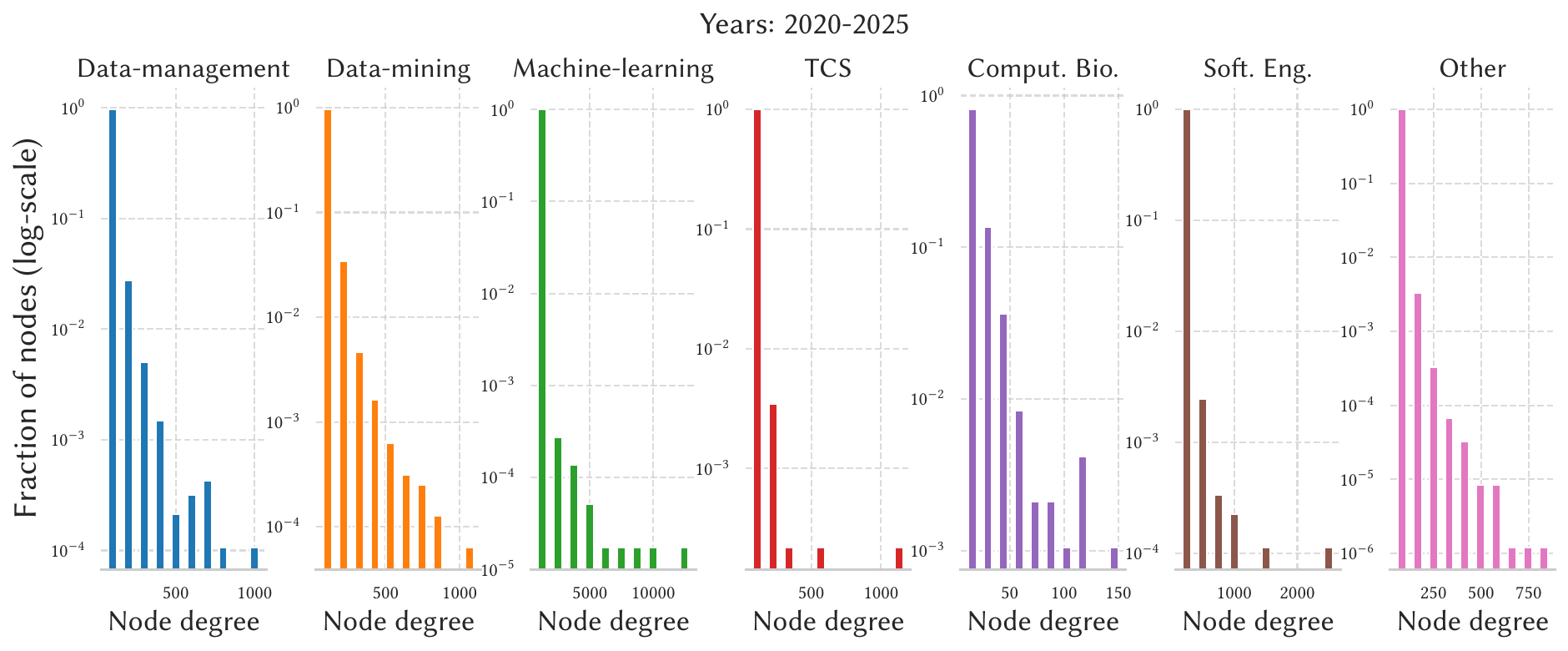}
	\caption{\edit{Binned degree distribution of the various research categories over the DBLP graph. Y-axis is in log-scale for ease of visualization.}}
	\label{fig:degreeDistrib}
\end{figure*}
\else
\fi
\end{document}
\endinput